\tikzstyle{vecArrow} = [thick, decoration={markings,mark=at position
\tikzstyle{innerWhite} = [semithick, white,line width=1.4pt, shorten >= 4.5pt]
\newcommand{\agdaIndent}{\AgdaSpace{}\AgdaSpace{}\AgdaSpace{}\AgdaSpace{}}
\newtheorem{definition}{Definition}
\newtheorem{theorem}{Theorem}
\newtheorem{lemma}{Lemma}
\DeclareMathOperator{\best}{best}
\newcommand{\NN}{\mathbb{N}}
\newcommand{\NNinf}{\NN_\infty}
\newcommand{\RR}{\mathbb{R}}
\newcommand{\tble}[1]{\ensuremath{\overline{#1}}}
\newcommand{\xt}{\tble{x}}
\newcommand{\yt}{\tble{y}}
\newcommand{\mtrx}[1]{\ensuremath{\mathbf{#1}}}
\newcommand{\A}{\mtrx{A}}
\newcommand{\I}{\mtrx{I}}
\newcommand{\X}{\mtrx{X}}
\newcommand{\Y}{\mtrx{Y}}
\newcommand{\size}[1]{\left\vert #1 \right\vert}
\newcommand{\allNodes}{\mathbb{V}}
\newcommand{\nodes}{V}
\newcommand{\allEdges}{\mathbb{E}}
\newcommand{\assignments}{A}
\newcommand{\epochs}{\mathbb{E}}
\newcommand{\timess}{\mathbb{T}}
\newcommand{\network}{N}
\newcommand{\aFun}{F}
\newcommand{\sStateFun}{\sigma}
\newcommand{\aStateFun}{\delta}
\newcommand{\localView}{\mtrx{L}}
\newcommand{\configs}{\mathcal{C}}
\newcommand{\freeConfigs}{\configs^{free}}
\newcommand{\carrier}{S}
\newcommand{\tcarrier}{S^n}
\newcommand{\mcarrier}{S^{n \times n}}
\newcommand{\choice}{\oplus}
\newcommand{\extension}{E}
\newcommand{\allExtensions}{\extension^{*}}
\newcommand{\invalid}{\ensuremath{\overline{\infty}}}
\newcommand{\trivial}{\ensuremath{\overline{0}}}
\newcommand{\invalidedge}{f^{\invalid}}
\newcommand{\accSet}{\mcarrier_p}
\newcommand{\paths}{\mathbb{P}}
\newcommand{\trivialpath}{[\:]}
\newcommand{\invalidpath}{\bot}
\newcommand{\weightf}[1]{weight_{#1}}
\newcommand{\pathf}{path}
\newcommand{\pathsaligned}{\leftrightsquigarrow}
\newcommand{\cons}{::}
\newcommand{\genericRoutingAlgebra}{(\carrier,\ \oplus,\ \extension,\ \trivial,\ \invalid, \ \invalidedge)}
\newcommand{\genericPathAlgebra}{(\carrier,\ \choice,\ \extension,\ \trivial,\ \invalid, \ \invalidedge,\ \pathf)}
\newcommand{\sigX}{\aFun^{ep}(\X)}
\newcommand{\sigsqX}{(\aFun^{ep})^2(\X)}
\newcommand{\sigY}{\aFun^{ep}(\Y)}
\newcommand{\consistent}[1]{C_{#1}}
\newcommand{\tconsistent}[1]{\consistent{\A}^{n}}
\newcommand{\mconsistent}[1]{\consistent{\A}^{n \times n}}
\newcommand{\lconsistent}{C^{ep}}
\providecommand{\leftsquigarrow}{%
  \mathrel{\mathpalette\reflect@squig\relax}%
}
\newcommand{\reflect@squig}[2]{%
  \reflectbox{$\m@th#1\rightsquigarrow$}%
}
\newcommand{\lessWeight}{<_\choice}
\newcommand{\leqWeight}{\leq_\choice}
\newcommand{\lessAss}{\prec_\choice}
\newcommand{\leqAss}{\curlyeqprec_\choice}
\newcommand{\extendedBy}[1]{\rightsquigarrow_{#1}}
\newcommand{\extendedByU}{\rightsquigarrow^{ep}}
\newcommand{\threatenedBy}[1]{\trianglelefteq_{#1}}
\newcommand{\proofOrder}{<^{ep}}
\newcommand{\rheight}{h^{ep}}
\newcommand{\rheightMax}{H}
\newcommand{\rmetric}[1]{r^{ep}_{#1}}
\newcommand{\tmetric}[1]{d^{ep}_{#1}}
\newcommand{\smetric}{D^{ep}}
\newcommand{\rheightI}{h^{ep}_I}
\newcommand{\rheightC}[1]{h^{ep}_{C#1}}
\newcommand{\rheightIMax}{H_I}
\newcommand{\rheightCMax}{H_C}
\newcommand{\rmetricI}{r^{ep}_I}
\newcommand{\rmetricC}[1]{r^{ep}_{C#1}}
\theoremstyle{definition}
\newtheorem{example}{Example}
\newcommand{\spacing}{\:\:\:\:}
\newcommand{\postcasesep}{\vspace{0.5em}}
\newenvironment{case}[2]{\noindent \underline{Case #1}: #2 \postcasesep \\} {}
\newcommand{\matthewrevision}[1]{#1}
\newcommand{\timrevision}[1]{#1}
\begin{document}

\title{Formally Verified Convergence of Policy-Rich DBF Routing Protocols}


\author{
\IEEEauthorblockN{Matthew L. Daggitt}, 
\IEEEauthorblockN{Timothy G. Griffin}}

\maketitle

\begin{abstract}
In this paper we present new general convergence results about the behaviour of the Distributed Bellman-Ford (DBF)
family of routing protocols, which includes distance-vector protocols (e.g. RIP) and path-vector protocols (e.g. BGP).
\matthewrevision{Our results apply to ``policy-rich" protocols, by which we mean protocols that can have complex policies (e.g. conditional route transformations) that violate traditional assumptions made in the standard presentation of Bellman-Ford protocols. }

First, we propose a new algebraic model for abstract routing problems which has fewer primitives than previous models and can represent more expressive policy languages. The new model is also the first to allow concurrent reasoning about distance-vector and path-vector protocols.

Second, we explicitly demonstrate how DBF routing protocols are instances of a larger class of asynchronous iterative algorithms, for which there already exist powerful results about convergence. These results allow us to build upon conditions previously shown by Sobrinho to be sufficient and necessary for the convergence of path-vector protocols and generalise and strengthen them in various ways: we show that, with a minor modification, they also apply to distance-vector protocols; we prove they guarantee that the final routing solution reached is unique, thereby eliminating the possibility of anomalies such as BGP wedgies; we relax the model of asynchronous communication, showing that the results still hold if routing messages can be lost, reordered, and duplicated.

Thirdly, our model and our accompanying theoretical results have been fully formalised in the Agda theorem prover. The resulting library is a powerful tool for quickly prototyping and formally verifying new policy languages. As an example, we formally verify the correctness of a policy language with many of the features of BGP including communities, conditional policy, path-inflation and route filtering.
\end{abstract}

\begin{IEEEkeywords}
Vector routing protocols, Algebra, Convergence, Formal Verification, Agda.
\end{IEEEkeywords}

\section{Introduction}
\label{sec:introduction}

\subsection{What is policy-rich routing?}
\label{sec:what-is-policy-rich-routing}

This paper proves new, general results about the convergence of asynchronous routing protocols in the Distributed Bellman-Ford (DBF) family, which includes distance-vector (RIP-like) and path-vector (BGP-like) protocols.
In particular, the results apply to what we call \textit{policy-rich} protocols, and so
we begin by informally explaining this terminology.

Suppose that a router participating in
a DBF computation has a best route $r$ to a destination $d$. 
If it then receives a route $r'$ for $d$ from an immediate neighbour
it will first apply some policy $f$ associated
with that neighbour and produce 
a candidate route $f(r')$.
It will then compare $r$ with $f(r')$ to determine which is best.
Let us denote the outcome of this selection process as $\best(f(r'),\ r)$. 
A very simple example is the shortest-paths problem where
$r$ is just an integer representing distance, 
$f_w(r) = w + r$ for some weight $w$, and $\best(r,\ r') = \min(r,\ r')$. 

If we dig deeply into the classical theory underlying
best-path algorithms 
--- both synchronous~\cite{gondran2008graphs,baras2010path},
    and asynchronous~\cite{bertsekas92data} 
--- we find that it always assumes the following equation, or something equivalent, must hold: 
\begin{equation}
\label{eq:distributivity} 
   f(\best(r_1,\ r_2)) = \best(f(r_1),\ f(r_2))
\end{equation}
This property is referred to as \textit{distributivity}.
The left-hand side of the equation
can be interpreted as a decision made by a router sending
routes $r_1$ and $r_2$ while the right-hand side is
a decision made by the neighbour receiving those routes.
Assuming the equality holds, the classical theory proves that routing protocols
arrive at \emph{globally optimal routes} --- where every pair of routers use the best path between them. 

By a \textit{policy-rich language} we mean one in which distributivity does not hold. 
A clear example of how
distributivity violations might arise in routing 
can be seen in the use of \textit{route maps} which are functions (scripts)
that take routes as input and return routes as output. 
For example, if $g$ and $h$ are route maps, then we might
define another route map $f$ as:
\begin{equation}
\label{eq:conditional:policy} 
   f(r) = \text{if\ } P(r) \text{\ then\ } g(r) \text{\ else\ } h(r), 
\end{equation}
where $P$ is a predicate on routes (such as ``does this route contain the
BGP community 17?''). 
To see how easily distributivity can be violated,
suppose that:
\[
\begin{array}{lcl} 
P(a) & = & \text{true},\\
P(b) & = & \text{false},\\ 
\best(a,\ b)    & = & a,\\
\best(g(a),\ h(b)) & = & h(b).
\end{array}
\] 
Then the left-hand side of Equation~\ref{eq:distributivity} is:
\[
   f(\best(a,\ b)) = f(a) = g(a), 
\]
while the right-hand side becomes: 
\[
   \best(f(a),\ f(b)) = \best(g(a),\ h(b)) = h(b)
\]
For Eq~\ref{eq:distributivity} to hold we need 
$g(a) = h(b)$, which may not be the case
(indeed, if $g(a) = h(b)$ were always true, then there
would be no point in defining $f$!). 
Perhaps the most common example of such conditional policies is \emph{route filtering}, where $h(r)$ is equal to the invalid route.

A specialist schooled in the classical theory of path
finding might be tempted to forbid the use of
such ``broken" policies when using the Bellman-Ford or Dijkstra algorithms.
Yet, once again, practice has outstripped theory. 
The Border Gateway Protocol (BGP)~\cite{rfc4271} --- a key
component of the Internet's infrastructure ---
is a policy-rich routing protocol.

Hence a pertinent question is how to \emph{tame} the policy language of BGP-like protocols to ensure good behaviour?
Of course we could mandate that all protocols must conform to Equation~\ref{eq:distributivity}. However, we would then not be able to implement
typical inter-domain policies that are based
on commercial relationships~\cite{huston:99a,huston:99b}. In addition, something as simple as shortest-path routing with route filtering would be banned. 

Gao \& Rexford~\cite{gao2001stable} showed that the simple commercial relationships described in \cite{huston:99a, huston:99b}, if universally followed, are enough to guarantee convergence. However their model is BGP-specific and gives us no guidance on how policy-rich protocols should be constrained in general. Furthermore their conditions impose constraints on the topology of the network and therefore they require continuous verification as the topology changes.

Instead an alternative middle ground for generic policy-rich protocols has been achieved for both Dijkstra's algorithm~\cite{sobrinho2010routing}
and the DBF family~\cite{sobrinho2005algebraic}.
Rather than insisting on distributivity, we require that for all routes $r$ and policies $f$ we have:
\begin{equation}
\label{eq:nondecreasing} 
   r = \best(r,\ f(r)). 
\end{equation}
In other words, applying policy to a route cannot produce a route that is more preferred.
Although Equation~\ref{eq:nondecreasing} is sufficient
for Dijkstra's Algorithm~\cite{sobrinho2010routing}, it
must be strengthened
for DBF algorithms~\cite{sobrinho2005algebraic} to:
\begin{equation}
\label{eq:increasing} 
   r = \best(r,\ f(r)) \:\:\: \wedge \:\: r \not= f(r). 
\end{equation}
That is, applying policy to a route cannot produce a route that is more preferred
\emph{and}
it cannot leave a route unchanged.
We call policy languages that obey Equation~\ref{eq:nondecreasing} \emph{increasing} and those that obey Equation~\ref{eq:increasing} \emph{strictly increasing}. 
Note that if policies $g$ and $h$ are strictly increasing, then
the conditional policy $f$ defined in
Equation~\ref{eq:conditional:policy} is also strictly increasing.
In other words, a strictly increasing policy language
remains strictly increasing when route maps are introduced.

However, without distributivity we can no longer achieve \emph{globally optimal} routes and so we must be content with
\emph{locally optimal} routes~\cite{sobrinho2010routing} --
each router obtains the best routes possible given the best routes revealed by its neighbours.

It is natural to ask if the strictly increasing condition is too strong and prohibits desirable policies? Sobrinho~\cite{sobrinho2005algebraic} shows that it is the weakest possible property that guarantees a path-vector protocol always converges. This implies that if the policy language is \emph{not} strictly increasing then there exists a network in which the protocol diverges. 

\timrevision{
As we discuss in Section~\ref{sec:bgp-lite-protocol},
today's BGP is in fact ``broken'' in the sense that 
it is possible to write policies that can result in anomalous behaviour such as
non-convergence \cite{rfc3345,varadhan00oscillations} and multiple stable states~\cite{rfc4264}.
The latter are problematic as the extra stable states are nearly always unintended and often violate the intent of policy writers.
Leaving an unintended stable state requires manual intervention and, in the worst case,
a high degree of coordination between competing networks. This phenomenon is colloquially referred to as a \emph{BGP wedgie}.
Such anomalies arise from the fact that 
the policy language of BGP is not strictly increasing, and there is no known way of making it so without removing the ability of AS's to hide commercially sensitive information from each other. Therefore, the next natural question to ask is: given a protocol with a non-strictly increasing policy language, which networks does the protocol converge over? Sobrinho shows that a network topology being \emph{free} with respect to the policy language is a sufficient and necessary condition for a DBF protocol to converge over that network topology.
}

Subsequent to these foundational results, the algebraic approach has been extended to model other more complex methods of routing such as multi-path routing~\cite{chau2006policy, sobrinho2020routing}, hierarchical routing~\cite{gurney2007lexicographic}, and quality of service routing~\cite{geng2017algebra}.

\subsection{Our contributions}
\label{sec:contributions}

\subsubsection{Answers to big-picture questions}

Given these prior results, the reader might assume that the convergence of standard DBF routing protocols is a solved problem. However, this is far from the case. In this paper, we address the following open questions:

\begin{enumerate}[label=Q\arabic*)]
\item Sobrinho's proofs only guarantee that the protocols converge. Do they always converge to the \emph{same} stable state?

\emph{Answer:} We prove that the answer is yes. From a theoretical perspective, this implies that the set of DBF protocols that converge and the set of DBF protocols that converge deterministically are the same. From a practical perspective, this implies that as long as a DBF protocol is guaranteed to converge then you don't have to worry about problems like BGP wedgies.

\item The proofs of Sobrinho and Gao-Rexford both assume TCP-like in-order, reliable communication between routers. Is this a fundamental requirement for convergence? 

\emph{Answer:} We prove that the answer is no, by showing that the results still hold for a weaker, UDP-like model of communication in which route advertisements may be delayed, lost, reordered and duplicated. From a theoretical perspective,  this provides strong evidence that the difficulty of solving a routing problem in a distributed environment is entirely independent of the communication model used. This is not the case in other distributed protocols, e.g. consensus protocols~\cite{fischer1985impossibility}.

\item Sobrinho reasons about path-vector protocols. Do there exist corresponding results for distance-vector protocols?

\emph{Answer:} we show that they do. In particular, that distance-vector protocols converge over strictly increasing algebras/free networks if the number of possible path-weights assigned to routes is \emph{finite}. From a practical perspective this implies the feasibility of policy-rich distance-vector protocols, a relatively under-explored design space.
\end{enumerate}

\subsubsection{Improvements to the model}
In addition to the new high-level results above, we also make several important contributions to the underlying mathematical model for DBF protocols. A more technical discussion of these contributions can be found in Section~\ref{sec:model-contributions}.

\begin{enumerate}[label=M\arabic*)]
\item We propose a new algebraic structure for modelling policy languages which is both simpler and more expressive than that of Sobrinho. In particular, it contains only~6 algebraic primitives instead of~8, and is capable of naturally modelling path-dependent policies such as route filtering and path-inflation. 

\item We identify that DBF protocols are a single instance of a much broader class of \emph{asynchronous iterative algorithms}, for which there exist prior models and convergence results. From a theoretical point of view, this allows us to cleanly abstract the underlying routing problem being solved from the distributed environment in which it is being solved. Practically, it greatly simplifies the proofs and allows us to prove asynchronous convergence purely by reasoning about the behaviour of the protocol in an entirely synchronous environment. In contrast, previous work~\cite{gao2001stable,sobrinho2005algebraic} has typically created their own custom model of message passing in an asynchronous environment and therefore have to directly reasoning about message timings and have avoided reasoning about unreliable communication.
 
\item Thanks to this existing theory, we construct a mathematical model of an abstract DBF routing protocol that is the first to precisely describe the \emph{entire} evolution of the protocol, including exact message timings, routers and links failing, routers and links being added, and routers being reset. As far as we are aware, this is therefore the first fully provably correct executable model of an abstract DBF protocol. In contrast, previous proofs have focused on proving results about a snapshot of a single network configuration in which all routers, links and policies remain static.
\end{enumerate} 

\subsubsection{Formal verification}

In the past decade there has been a strong trend towards the verification of infrastructure-related software in areas as diverse as operating systems~\cite{Klein:2009}, compilers~\cite{cake:ml:2014,comp:cert:2009} and networking~\cite{foster:pldi:2013,formal:nat:2017}.

In line with this trend we have fully formalised all the results in this paper in the Agda theorem prover~\cite{agda:2009}. As Agda is capable of expressing both proofs and programs, and as our model is executable, this constitutes the first ever fully formalised, executable model of a general DBF protocol as far as we are aware. The advantages of this formalisation effort are as follows:

\begin{enumerate}[label=V\arabic*)]
\item It provides a much stronger guarantee of the correctness of our proofs. Reasoning about algorithms in distributed environments is notoriously tricky, and as Sobrinho themselves say in~\cite{sobrinho2005algebraic}, their proof of convergence is only ``semi-formal''. Having a computer check our proofs is a valuable and powerful assurance as to their correctness.

\item The generality of our verified proofs means that they also function as a library that can be used by others to rapidly prototype, explore the behaviour of and formally verify the correctness of novel policy languages. This is a significant step towards reusable and modular formal verification of routing protocols. The library is freely available~\cite{agda-routing}, and we hope that it will prove useful to the community. 

\item To demonstrate the utility of our library, we construct a model of a path-vector routing protocol that contains many of the features of BGP including local preferences, conditional policy, path inflation and route filtering. We formally verify in Agda that the policy language is strictly increasing, and therefore no matter what policies are used the protocol will always converge deterministically to a single stable state given a sufficient period of stability. Such a protocol is therefore provably safe-by-design.
\end{enumerate}

\subsection{Road map}

In Section~\ref{sec:model}, we use an algebraic approach inspired by that of Sobrinho~\cite{sobrinho2005algebraic} and existing literature on asynchronous iterative algorithms~\cite{daggitt2022dynamic}, to construct an executable model of an abstract, fully asynchronous, distributed DBF routing protocol.
This approach provides a clear, implementation-independent specification of the family of protocols which we can reason about.
In Section~\ref{sec:results} we discuss how our novel use of an existing theory of asynchronous iterative algorithms does much of the heavy lifting, and finish by applying existing results from this field to prove our new results about distance-vector and path-vector protocols.
In Section~\ref{sec:formalisation} we describe our Agda formalisation of these results, and in Section~\ref{sec:bgp-lite-protocol} we construct a safe-by-design algebra using the library. 

\subsection{Previous versions}

Previous versions of this work have appeared before in conference proceedings~\cite{daggitt2018asynchronous} and in the thesis of the first author~\cite{daggitt2019thesis}. The many improvements in this paper over these prior versions are discussed in Section~\ref{sec:previous-versions}.

\section{Model}
\label{sec:model}

In this section we construct a model of an arbitrary DBF routing protocol. For the reader's convenience, a glossary of all the notation used in this and later sections can be found in Appendix~\ref{app:glossary}.

\subsection{Generality vs implementation details}
\label{sec:generality_vs_implementation}

In any mathematical model of a real-world system, there is a fundamental tension between capturing the most general form of the system and capturing every detail of the implementation. More implementation details can always be added to better model a particular system, at the cost of losing the ability to apply the work more generally.

In this paper we err on the side of generality. While we acknowledge that BGP is the dominant policy-rich path-vector protocol, the aim of our work is not only to inform how BGP policies may be constrained to provide better guarantees, but also to guide the design of future policy-rich protocols.

\timrevision{
For example, we do not explicitly model the details of either a soft-state or a hard-state implementation approach
for DBF routing protocols.
RIP~\cite{rfc2453} and EIGRP~\cite{rfc7868} are examples of the soft-state approach: routing information expires after a fixed time and so must be periodically refreshed.
BGP~\cite{rfc4271} is the primary example of the hard-state approach: routing information is preserved at a router until it is explicitly modified by a neighbour's update message or a local link failure. 
We use a very general model of asynchronous computation which we believe can accommodate such implementation choices. 
}

\subsection{Paths}
\label{sec:paths-model}

Prior to anything else, it is important to discuss exactly what is meant by a path in the context of a distributed, asynchronous routing protocol. Traditionally paths are defined with respect to some graph $G=(V,E)$. However because a network's topology changes over time, the path along which an entry in a router's routing table was generated may not correspond to a valid path in the current topology. Furthermore, and perhaps counter-intuitively, such a path may be formed from an amalgamation of paths from previous topologies, and therefore may not even have existed in any single past topology. Consequently, out of necessity, our notion of a path cannot be tied to the network's topology.

Therefore, we begin by defining $\allNodes$ as the set of all valid router identifiers. An \emph{edge} is an arbitrary pair of router identifiers, and therefore the set of edges is $\allEdges = \allNodes \times \allNodes$.  A \emph{path} is a defined to be a sequence of contiguous edges. A \emph{cycle} is a path which finishes at the same router it started at. A \emph{simple path} is a path that doesn't contain any cycles. \matthewrevision{The trivial path of zero length from a node to itself is denoted as~$\trivialpath$}. For our purposes, we also consider an additional simple path~$\invalidpath$, which represents the absence of a path. We refer to the set of simple paths as~$\paths$.

\matthewrevision{An edge $e$ is aligned with a path $p$, written $e \pathsaligned p$, if either~$p$ is the trivial path or the destination of~$e$ is equal to the origin of~$p$. If~$e \pathsaligned p$ then~$e$ may be concatenated to~$p$ using the~$\cons$ operator to form a new path~$e \cons p$.}

\subsection{Routing algebras}
\label{sec:routing-algebras}

In order to model all possible DBF protocols rather than one particular protocol, we first abstract over the routing problem being solved by representing it as a \emph{raw routing algebra}.

\begin{definition}[Raw routing algebra]
\label{def:raw-routing-algebra}
A \emph{raw routing algebra} is a tuple $\genericRoutingAlgebra$ where:
\begin{itemize}
\item $\carrier$ is the set of path-weights. A path-weight is the data that the routing protocol associates with each candidate path.
\item $\oplus : \carrier \times \carrier \rightarrow \carrier$ is the choice operator, which given two path-weights returns the preferred path-weight. In the introduction, this was informally referred to as ``\emph{best}''. 
\item $\extension: \allEdges \rightarrow 2^{\carrier \rightarrow \carrier}$ is a map from an edge to the set of policy functions that may be used on that edge. Each policy in $\extension_{ij}$ is a function ${f : \carrier \rightarrow \carrier}$, which when given a path-weight returns a new path-weight formed by extending the original path by the edge $(i,j)$. For notational convenience we define the complete set of policies across all edges as ${\allExtensions = \bigcup_{(i,j) \in \allEdges} E_{ij}}$.
\item $\trivial \in \carrier$ is the path-weight of the trivial path from any router to itself.
\item $\invalid \in \carrier$ is the invalid path-weight, representing the absence of a path.
\item \matthewrevision{$\invalidedge : ((i,j) \in \allEdges) \rightarrow \extension_{ij}$ is the set of constantly invalid policies. For example $\invalidedge_{ij} : \carrier \rightarrow \carrier$ is the policy for edge $(i,j)$ that maps every input path-weight to the invalid path-weight. }
\end{itemize}
\end{definition}
\matthewrevision{\noindent These are the core primitives that we will define the operation of the protocol with respect to. The ``raw'' in the name refers to the fact that, at this stage, we do not assume the primitives of the algebra to satisfy any axioms. We discuss various possible axioms in~Sections~\ref{sec:distance-vector-protocols}~\&~\ref{sec:path-vector-protocols}}

\matthewrevision{ 
\begin{example}[Shortest-paths algebra] 
In the algebra for the shortest-paths routing problem, the path-weights~$\carrier$ are the lengths of the paths i.e.~$\NNinf$. When choosing between two path-weights, the $\min$ operator is used to choose the one with the shortest length. The set of extensions functions are those that add the length $w$ of the edge to the path-weights current length. The trivial path-weight has length~0 and the invalid path-weight has length~$\infty$. The constantly invalid policies are those that add~$\infty$ to the length. 
\end{example}
 
  Table~\ref{table:semirings} contains additional examples of algebras representing a variety of simple routing problems.  Note that in all of these examples, every edge has the same set of policy functions i.e. $E_{ij}$ and $\invalidedge_{ij}$ do not depend on~$i$ and~$j$. This indicates that the policy functions do not use the identity of the routers at either end of the edge. The utility of assigning every edge its own set of policy functions will become clearer when we model path-vector protocols in Section~\ref{sec:path-vector-protocols} where the extension functions will add the current router to the path that the path-weight was generated along. A significantly more complex example algebra that uses these features is discussed in-depth in Section~\ref{sec:bgp-lite-protocol}. } We also note that in practice, a policy function $f \in \extension_{ij}$ may be the composition of an export policy of router~$j$ and an import policy of router~$i$.

\begin{table}
\begin{center} 
\begin{tabular}{lcccccc}
\toprule
Algebra 
& $\carrier$ 	
& $\oplus$ 
& $\extension$ 	
& $\trivial$ 
& $\invalid$ 
& \matthewrevision{$\invalidedge$} \\
\midrule 
Shortest paths 
& $\NNinf$ 	
& $\min$   
& $\extension^+$ 	
& $0$            
& $\infty$ 
& \matthewrevision{$f^+_{\infty}$} \\[3pt]
Longest paths
& $\NNinf$ 	
& $\max$   
& $\extension^+$      
& $\infty$       
& $0$            
& \matthewrevision{$f^+_{0}$}	\\[3pt]
Widest paths	
& $\NNinf$ 	
& $\max$   
& $\extension^{\min}$  
& $\infty$         
& $0$            
& \matthewrevision{$f^{\min}_{0}$} \\[3pt]   
Most reliable paths
& $[0,\ 1] \subset \RR$ 	
& $\max$   
& $\extension^{\times}$  
& $1$       
& $0$            
& \matthewrevision{$f^\times_{0}$}   \\
\bottomrule
\end{tabular} 
\end{center}
\caption{Examples algebras for some simple routing problems, where $\extension^\otimes_{ij} \triangleq \{f^\otimes_w(x) = w \otimes x \mid w \in \carrier\}$ for arbitrary operator $\otimes$ \\ e.g. $\extension^+_{ij} = \{f^+_w(x) = w + x \mid w \in \NNinf\}$. Note that for these simple algebras, every edge has the same set of possible policies.}
\label{table:semirings}
\end{table}

\subsection{Synchronous DBF protocol}
\label{sec:synchronous-model}

Now that we can model \emph{what} problem our abstract DBF protocol is solving, we move on to modelling \emph{how} it solves it. We start by defining a synchronous version of the protocol where update messages between routers are exchanged instantaneously and in parallel.
Let $\nodes \subseteq \allNodes$ be the finite set of routers that will ever participate in the protocol and let $n = \size{\nodes}$. We represent an instantaneous snapshot of the network topology by an $n \times n$ adjacency matrix~$\A$ where $\A_{ij} \in \extension_{ij}$ is the current policy function on the link from router~$i$ to router~$j$. The absence of a link from $i$ to $j$ is represented by the constantly invalid policy~$\invalidedge_{ij}$.

The global state of the synchronous protocol can be entirely represented by the product of the state of each router's routing table. We therefore represent an individual state by a matrix $\X$ where row $\X_i$ is router~$i$'s routing table and so the element~$\X_{ij}$ is the path-weight of~$i$'s best current route to router~$j$. We will use $\tcarrier$ to denote the set of routing table states (i.e. the set of vectors over $\carrier$ of length $n$), and $\mcarrier$ to denote the set of global states (i.e. the set of $n \times n$ matrices over $\carrier$).

We define the \emph{sum} of two matrices $\X$ and $\Y$ as:
\begin{align*}
(\X \oplus \Y)_{ij} \:\: & \triangleq \:\: \X_{ij} \oplus \Y_{ij} \\
\intertext{and the \emph{application} of $\A$ to $\X$ as:}
\A (\X)_{ij} \:\: & \triangleq \:\: \bigoplus_{k} \:\: \A_{ik}(\X_{kj})
\end{align*}
In one \emph{synchronous} round of a distributed Bellman-Ford computation every router in the network propagates its routing table to its neighbours who then update their own tables accordingly. We model this operation as $\aFun_\A$: 
\begin{equation*}
\aFun_\A(\X) \triangleq \A(\X) \oplus \mtrx{I}
\end{equation*}
where $\mtrx{I}$ is the \emph{identity matrix}:
\begin{equation*}
\mtrx{I}_{ij} \triangleq \begin{cases}
\trivial 	& \text{if} \:\: i = j \\
\invalid 	& \text{otherwise}
\end{cases}
\end{equation*}
The nature of the underlying computation becomes clearer if we expand out the definition of a single element of~$\aFun_\A(\X)$. Under the mild assumptions that the trivial path-weight $\trivial$ is chosen over any other path-weight and that any other path-weight is chosen over the invalid path-weight $\invalid$, we have:
\begin{align}
\label{eq:sigma_expanded}
\aFun_\A(\X)_{ij}
&= \A(\X)_{ij} \oplus \I_{ij} \nonumber\\
&= \bigoplus_k \A_{ik}(\X_{kj}) \oplus \begin{cases}
\trivial & \text{if $i = j$} \nonumber\\
\invalid & \text{if $i \neq j$}
\end{cases} \nonumber\\
&= \begin{cases}
\bigoplus_k \A_{ik}(\X_{kj}) \oplus \trivial & \text{if $i = j$} \\
\bigoplus_k \A_{ik}(\X_{kj}) \oplus \invalid & \text{if $i \neq j$}
\end{cases} \nonumber\\
&= \begin{cases}
\trivial & \text{if $i = j$} \\
\bigoplus_k \A_{ik}(\X_{kj}) & \text{if $i \neq j$}
\end{cases}
\end{align}
i.e. router $i$'s always uses the trivial path-weight to route to itself, and its new route to router~$j$ is the best choice out of the extensions of the routes to~$j$ offered to it by each of its neighbours~$k$. 

\matthewrevision{Note that, although $\aFun_\A$ is defined in terms of~$\A$ and~$\X$ which represent the \emph{global} network topology and protocol state respectively, we are still modelling a distributed protocol as each router only has a purely \emph{local} view of the network and protocol state. This can be seen by again considering the expansion above. $\aFun_\A(\X)_{ij}$, the new entry for destination $j$ in the routing table of $i$ after a single update, only depends on~$\A_{ik}$ for all $k$ (i.e. the policies on the incoming edges to~$i$) and~$\X_{kj}$ (i.e. the entry for~$j$ in the routing table of~$k$, as broadcast by~$k$). As previously discussed, if there is no working connection from~$k$ to~$i$ then~$\A_{ik} = \invalidedge_{ik}$ the constantly invalid function, and so no information from~$k$ arrives. }

We can now model the evolution of the synchronous version of protocol, $\sStateFun$, as repeated applications of $\aFun_{\A}$.
\begin{definition}[Synchronous state function]
\label{def:sStateFun}
Given a network topology $\A$ and an initial state $\X$, the state of the synchronous DBF protocol at time $t$ is defined as:
\begin{align*}
\sStateFun^0(\X) &\triangleq \X  \\ 
\sStateFun^{t+1}(\X) &\triangleq \aFun_{\A}(\sStateFun^{t}(\X))
\end{align*}
\end{definition}

\subsection{Asynchronous DBF protocol}
\label{sec:asynchronous-model}

Although we have defined a \emph{synchronous} model, in reality DBF protocols are distributed and asynchronous. Update messages are dispatched at non-deterministic intervals and may be delayed, reordered, duplicated or even lost along the way. New routers and links may be added to the network, existing routers and links may fail, and routers may be rebooted and reset to their original state.
Reasoning about the outcomes of such unpredictable events is known to be extremely challenging. 

One advantage of first defining the synchronous model is that it highlights that the underlying algorithm is inherently iterative. Consequently, an important contribution of this paper to recognise that a DBF protocol is simply a special instance of the family of dynamic asynchronous iterative algorithms. We can therefore use existing work~\cite{uresin1990parallel, daggitt2022dynamic} that construct general mathematical models of such iterations. We now describe a specialisation of the model presented in~\cite{daggitt2022dynamic} to DBF protocols.

We assume a discrete and linear notion of time,~$\timess$, denoting the times of events of interest in the network (a routing table updating, a router failing etc.). Layered on top of time is the notion of epochs,~$\epochs$, which represent contiguous periods of time during which the network topology remains unchanged, i.e. no links or routers fail or are added and no policies are changed. Given this, we can now model a dynamic network,~$\network$, as a function from epochs to adjacency matrices where $\network^e$ is the topology of the network during epoch $e$. 

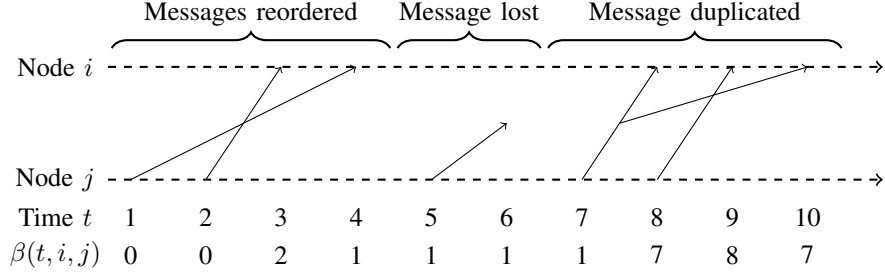
\begin{figure*}[ht]
\centering
\begin{tikzpicture}
\def\a{0}
\def\b{0.7}
\def\c{2}
\def\d{4.5}
\def\e{6}
\def\f{6.5}
\def\g{10}
\def\h{10.5}

\def\braceLabelY{2.2}
\def\braceY{1.75}
\def\brcD{0.05}

\def\th{-0.5}
\def\bh{-1}
\def\ih{1.5}
\def\jh{0}

\def\tickHeight{0.3}
\def\labelY{-0.03}

\tikzstyle{bracesU}=[thick, decorate, decoration={brace,amplitude=6pt,raise=0pt}]
\tikzstyle{bracesD}=[thick, decorate, decoration={brace,amplitude=6pt,raise=0pt,mirror}]

\draw [thick,dashed,->] (0.7,\ih) -- (11,\ih);
\draw [thick,dashed,->] (0.7,0) -- (11,0);
\node at (0,\ih) {Node $i$};
\node at (0,\jh) {Node $j$};

\node at (0,\th) {Time $t$};
\node at (1,\th) {1};
\node at (2,\th) {2};
\node at (3,\th) {3};
\node at (4,\th) {4};
\node at (5,\th) {5};
\node at (6,\th) {6};
\node at (7,\th) {7};
\node at (8,\th) {8};
\node at (9,\th) {9};
\node at (10,\th){10};

\node at (0,\bh) {$\beta(t,i,j)$};
\node at (1,\bh) {0};
\node at (2,\bh) {0};
\node at (3,\bh) {2};
\node at (4,\bh) {1};
\node at (5,\bh) {1};
\node at (6,\bh) {1};
\node at (7,\bh) {1};
\node at (8,\bh) {7};
\node at (9,\bh) {8};
\node at (10,\bh){7};

\draw [->](1,\jh) -- (4,\ih);
\draw [->](2,\jh) -- (3,\ih);

\draw [->](5,\jh) -- (6,0.75);

\draw [->](7,\jh) -- (8,\ih);
\draw [->](7.5,0.75) -- (10,\ih);
\draw [->](8,\jh) -- (9,\ih);


\draw[bracesU] (\b+\brcD,\braceY) -- (\d-\brcD,\braceY);
\draw[bracesU] (\d+\brcD,\braceY) -- (\f-\brcD,\braceY);
\draw[bracesU] (\f+\brcD,\braceY) -- (\h-\brcD,\braceY);

\node[align=center] at ({(\b+\d)/2},\braceLabelY){Messages reordered};
\node[align=center] at ({(\d+\f)/2},\braceLabelY){Message lost};
\node[align=center] at ({(\f+\h)/2},\braceLabelY){Message duplicated};
\end{tikzpicture}
\caption{ Behaviour of the data flow function $\beta$. $\beta(t,i,j)$ is the time at which the latest message to arrive at $i$ up to time $t$ was sent from $j$. Messages from router $j$ to router $i$ may be reordered, lost or even duplicated. The only constraint is that every message must arrive after it was sent. Reproduced from~\cite{daggitt2022dynamic}.}
\label{fig:schedule}
\end{figure*}

The non-deterministic evolution of the protocol is described by a \emph{schedule}.

\begin{definition}[Schedule] (Definition~9 in \cite{daggitt2022dynamic})
\label{def:schedule}
A \emph{schedule} is a tuple of functions $(\alpha, \beta, \eta, \pi)$ where:
\begin{itemize}
\item $\alpha : \timess \rightarrow 2^\nodes$ is the \emph{activation function}, where $\alpha(t)$ is the set of routers which update their routing table at time~$t$.

\item $\beta : \timess \times \nodes \times \nodes \rightarrow \timess$ is the \emph{data flow function}, where $\beta(t,i,j)$ is the time at which the route advertisements used by router $i$ at time $t$ was sent by router $j$.

\item $\eta : \timess \rightarrow \epochs$ is the \emph{epoch function}, where $\eta(t)$ is the epoch at time $t$.

\item $\pi : \epochs \rightarrow 2^\nodes$ is the \emph{participants function}, where~$\pi(e)$ is the subset of routers currently participating in the protocol during epoch e.
\end{itemize}
such that:
\begin{enumerate}[label=\textbf{S\arabic*}]
\item information only travels forward in time 
\begin{equation*}
\forall i,j,t : \beta(t+1,i,j) \leq t
\end{equation*}
\item epochs increase monotonically
\begin{equation*}
\forall t_1 , t_2 : t_1 \leq t_2 \Rightarrow \eta(t_1) \leq \eta(t_2)
\end{equation*}
\end{enumerate}
\end{definition}
\noindent This is a very weak model of asynchronous communication. As shown in Figure~\ref{fig:schedule}, nothing forbids the data flow function~$\beta$ from delaying, losing, reordering or duplicating messages. It also distinguishes between a router being a member of the network but not currently participating in the protocol. This allows us to model messages that continue to arrive from the previous epoch, e.g. router $i$ can receive outdated route advertisements from router $j$ even after router $j$ has ceased to participate.

However, it does mean that we need to distinguish between the topology of network and the topology as seen by the active participants. In particular we need to enforce that information only flows between two participating routers.
\begin{definition}[Participating topology]
\label{def:participating-topology}
Given a network $N$, an epoch $e \in \epochs$ and a set of participants $p \in 2^\nodes$, we define $\A^{ep}$ to be the participating topology:
\begin{equation*}
\A^{ep}_{ij} = \begin{cases}
\network^e_{ij} & \text{if $\{i, j\} \subseteq p$}\\
\invalidedge_{ij} & \text{otherwise}
\end{cases}
\end{equation*}
\end{definition}
\noindent For notational convenience, given a schedule and a network, we also define $\rho : \timess \rightarrow 2^\nodes$, the set of participants at time $t$, and $\A^t$, the current participating topology at time $t$, as:
\begin{align*}
\rho(t) & \triangleq \pi(\eta(t)) \\
\A^t 	& \triangleq \A^{\eta(t)\rho(t)}
\end{align*}
Using this machinery it is now possible to define the full asynchronous state function, $\delta$, in terms of the original iteration function $\aFun_{\A^t}$.

\begin{definition}[Asynchronous state function](Definition 10 from~\cite{daggitt2022dynamic} with $F^{t} = \aFun_{\A^{t}}$, $\bot = \I$)
\label{def:aStateFun}
Given a dynamic network~$\network$, an initial state~$\X$, and a schedule $(\alpha, \beta, \eta, \pi)$, the state of the asynchronous DBF protocol at time $t$ is defined as:
\begin{equation*}
\aStateFun^t_{ij}(\X) = \begin{cases}
\I_{ij} & \text{if $i \notin \rho(t)$} \\
\X_{ij} & \text{else if $t = 0$ or $i \notin \rho(t-1)$}\\
\aStateFun^{t-1}_{ij}(\X) & \text{else if $i \notin \alpha(t)$} \\
\aFun_{\A^t}(\localView)_{ij} & \text{otherwise}
\end{cases}
\end{equation*}
where $\localView_{kj} = \aStateFun^{\beta(t,i,k)}(\X)_{kj}$.
\end{definition}

This models the evolution of the protocol as follows. If a router is not currently participating then it adopts its non-participating state $\I_{ij}$. If a router has just begun to participate, either because $t = 0$ or because it was not participating at the previous time step, it adopts the initial state $\X_{ij}$. If the router is a continuing participant and is currently inactive then its state remains unchanged from its state at the previous time step,~$\delta^{t-1}_{ij}(\X)$.
Otherwise, it is currently active, and it updates its state in accordance with router $i$'s $\emph{local view}$,~$\localView$, of the global state as currently advertised by its neighbours. More concretely,~$\localView_{kj}$ is the last route advertisement for router~$j$ that router~$i$ received from its neighbour~$k$.

Note that this is a straightforward generalisation of the synchronous model as we can immediately recover $\sStateFun$ from $\aStateFun$ by setting $\alpha(t) = V$, $\beta(t,i,j) = t-1$, $\eta(t) = 0$ and $\pi(t) = V$ i.e. there is a single epoch in which at each time step, every router is both participating and active and all messages only take a single time step to propagate.

\subsection{Axioms for distance-vector protocols}
\label{sec:distance-vector-protocols}

\matthewrevision{Having defined the evolution of an abstract DBF protocol, $\aStateFun$, parametrised by an algebra~$\genericRoutingAlgebra$,  it is clear that $\aStateFun$'s behaviour will be dependent on the properties of the algebra. So far, we have informally claimed that the algebraic primitives represent certain operations, for example $\oplus$ chooses between two candidate path-weights. We now make this formal by defining the axioms the primitives should obey in order for them to match our basic intuition about how they should operate in a standard distance vector protocol.}

\begin{definition}[Routing algebra]
\label{def:routing-algebra2}
A raw routing algebra $\genericRoutingAlgebra$ is a \emph{routing algebra} if it obeys the following axioms:
\begin{enumerate}[label=\textbf{R\arabic*}]
\item \label{ass:sel}
$\oplus$ is selective - \emph{choosing between two path-weights always returns one of the two path-weights}.
\begin{equation*}
\forall x,y \in \carrier: x \oplus y \in \{x,\ y\}
\end{equation*}

\item \label{ass:assoc}
$\oplus$ is associative - \emph{the order doesn't matter when making a sequence of choices between path-weights}.
\begin{equation*}
\forall x, y, z \in \carrier: x \oplus (y \oplus z) = (x \oplus y) \oplus z
\end{equation*}

\item \label{ass:comm} 
$\oplus$ is commutative - \emph{the order doesn't matter when choosing between two path-weights}.
\begin{equation*}
\forall x,y \in \carrier: x \oplus y = y \oplus x
\end{equation*}

\item \label{ass:trivial-annihilator}
$\trivial$ is an annihilator for $\oplus$ - \emph{the trivial path-weight is more desirable than any other path-weight}.
\begin{equation*}
\forall x \in \carrier: x \oplus \trivial = \trivial = \trivial \oplus x
\end{equation*}

\item \label{ass:invalid-identity}
$\invalid$ is an identity for $\oplus$ - \emph{any path-weight is preferable to the invalid path-weight}.
\begin{equation*}
\forall x \in \carrier: x \oplus \invalid = x = \invalid \oplus x
\end{equation*}

\item \label{ass:invalid-fixed-point}
$\invalid$ is a fixed point for all policy functions - \emph{the extension of the invalid path-weight is also the invalid path-weight}.
\begin{equation*}
\forall f \in \allExtensions: f(\invalid) = \invalid
\end{equation*}
\matthewrevision{
\item \label{ass:invalid-function-point}
$\invalidedge$ is the set of constantly invalid policies - \emph{no information flows through non-existent links}.
\begin{equation*}
\forall (i,j) \in \allEdges,\ x \in S: \invalidedge_{ij}(x) = \invalid
\end{equation*}}
\end{enumerate}
\end{definition}
\matthewrevision{
\noindent It is important to note that these axioms are only those necessary and sufficient to guarantee that the algebraic primitives obey our intuition about them. As discussed in Section~\ref{sec:results}, they are \emph{not} sufficient to guarantee that $\aStateFun$ itself is well-behaved. 

We also note that some of these axioms may be relaxed when modelling more complex variations on standard routing protocols. For example, in multi-path routing the choice operator may not be selective as it can return a mix of the candidate path-weights from both of its arguments~\cite{chau2006policy,daggitt2018asynchronous}. Although an important line of work, this paper will not consider these variants further.}
 
\subsection{Axioms for path-vector protocols}
\label{sec:path-vector-protocols}

Path-vector protocols (e.g. BGP) are a subclass of DBF protocols which track the path along which a path-weight was generated. Routers then use this information to eliminate any path-weights that were generated via a cycle. It is well known that path-vector protocols have better convergence behaviour than distance-vector protocols. \matthewrevision{We would therefore like to extend our model to handle path-vector protocols as well.

The standard approach is to take an algebra $\genericRoutingAlgebra$ and model the path-vector protocol's algorithm as operating over the set $\carrier \times \paths$ instead of the set $\carrier$. The synchronous iteration function $\aFun_\A$ in Equation~\ref{eq:sigma_expanded} is then updated to track paths and eliminate path-weights generated along looping paths.

However, this approach has significant drawbacks. Firstly, the model of a path-vector protocol is no longer strictly a special case of the model for DBF protocols as they operate over different sets and have different algorithms. In turn, this means that they require different proofs of correctness. While this can (perhaps) be finessed in a pen-and-paper proof, it would result in a significant duplication of work during the computer formalisation described in Section~\ref{sec:formalisation}. Secondly, it means that the choice operator $\oplus$ can't make decisions based on properties of the path (e.g. its length), as $\oplus$ still operates over $\carrier$ rather than $\carrier \times \paths$. Lastly, we lose abstraction over how the paths are stored. For example, both $\carrier \times \paths$ and $\paths \times \carrier$ should be valid sets for a path-vector protocol to operate over, but the standard approach artificially elevates $\carrier \times \paths$ over $\paths \times \carrier$.

We therefore now propose a new approach to modelling path-vector protocols which overcomes all of these problems. We argue that the paths should be modelled as being stored \emph{inside} the path-weight $\carrier$, rather than outside of it. Likewise $\oplus$ and $\extension$ should handle the tracking of paths and removal of looping paths, rather than lifting these additional operations to the algorithm. However, how can we enforce that paths are stored abstractly within elements of $\carrier$? We cannot directly inspect the internal structure of $\carrier$, $\oplus$ and $\extension$ and, in order to for the model to be as general as possible, nor is it desirable to. 

The answer is to assume that there exists a projection function $\pathf : \carrier \rightarrow \paths$ that extracts the path stored within the path-weight. We then axiomitise $\pathf$ to interact with $\oplus$ and $\extension$ in the right way. Using this approach, we now define axioms for an algebra that can be used to model path-vector-specific behaviour.}

\begin{definition}[Path algebra]
\label{def:path-algebra}
A \emph{path algebra} is a raw routing algebra $\genericRoutingAlgebra$ equipped with an additional function $\pathf : \carrier \rightarrow \paths$ that obeys the following properties:
\begin{enumerate}[label=\textbf{P\arabic*}]
\item \label{ass:path-invalid}
A path-weight \matthewrevision{stores} the invalid path iff it is the invalid path-weight.
\begin{equation*}
\forall x \in \carrier: x = \invalid \Leftrightarrow \pathf(x) = \invalidpath
\end{equation*}
\item \label{ass:path-trivial}
The trivial path-weight \matthewrevision{stores} the empty path.
\begin{equation*}
\forall x \in \carrier: x = \trivial \Rightarrow \pathf(x) = \trivialpath
\end{equation*}
\item \label{ass:path-extension}
Applying an edge's policy function to a path-weight simply extends the path-weight's path by that edge, unless it does not result in a non-simple path in which case it becomes the invalid path.
\begin{equation*}
\begin{array}{l}
\forall x \in \carrier, (i,j) \in \allEdges, f \in \extension_{ij} : \\
\pathf(f(x)) = 
\begin{cases}
	\invalidpath       
		& \text{if $i \in \pathf(x)$} \\
	\invalidpath       
		& \text{if not $j \pathsaligned \pathf(x)$} \\
	
	(i,j) \cons \pathf(x)
		& \text{otherwise}
\end{cases}
\end{array}
\end{equation*}
\end{enumerate}
\end{definition}
\noindent The $\pathf$ function allows us to extract the path stored inside the path-weight, along which the path-weight was generated, without revealing how the path is stored internally inside the path-weight. Assumptions P1--P3 ensure that the algebra's implementation is tracking the paths correctly. In P3, the first case $i \notin \pathf(x)$ guarantees that path-weights generated along paths with cycles in them are eliminated. The second case when $j \pathsaligned \pathf(x)$ does not hold (i.e. $\pathf(x)$ is not actually a path to $j$), ensures that invalid routing announcements are eliminated. Note that being able to state P3 is one of the reasons that we require that each edge $(i, j)$ comes with its own set of policy functions $\extension_{ij}$ in Definition~\ref{def:raw-routing-algebra}. If instead we had only required the existence of a single set of policy functions (i.e. $\allExtensions$) then we would be unable to define it.

\matthewrevision{
As an example, we now construct a simple path-algebra for a shortest-paths path-vector protocol. We construct a more interesting path-algebra that models a BGP-like protocol in Section~\ref{sec:bgp-lite-protocol}.  

\begin{example}[Shortest-paths path-vector algebra]
The set of path-weights $\carrier$ is $(\paths \times \mathbb{N}) \uplus \{\infty\}$ where $\infty$ represents the invalid path-weight. The trivial path-weight is the pair $(0, \trivialpath)$. The choice operator, $\oplus$, first compares the lengths of the path-weights, and then breaks ties by comparing the paths lexicographically:
\begin{align}
x &\oplus \infty &\triangleq&\ x \\
\infty &\oplus y &\triangleq&\ y \notag \\
(m, p) &\oplus (n , q) &\triangleq&
\begin{cases}
x & \text{if $m < n$} \\
y & \text{else if $n < m$} \\
x & \text{else if $p \leq_{lex} q$} \\
y & \text{otherwise}
\end{cases} \notag
\end{align}
The policy functions $\extension_{ij}$ for an arbitrary edge $(i,j)$ is the set of $f^l_{ij}$ for all lengths $l \in \mathbb{N}$ where:
\begin{align}
f^l_{ij}(\infty) & \triangleq \infty \\
f^l_{ij}(m, p)   & \triangleq \begin{cases}
\infty & \text{if $i \in p$ or not $j \pathsaligned p$} \\
(m + l, (i,j) \cons p) & \text{otherwise} \end{cases} \notag
\end{align} 
plus the additional constantly invalid function $f^{\infty}_{ij}(x) \triangleq \infty$. Finally the required $\pathf$ function is defined as:
\begin{align}
\pathf(\infty) &\triangleq \invalidpath \\
\pathf(m, p) &\triangleq p \notag
\end{align}
\end{example}

\noindent We note that as a result of the flexibility of Definition~\ref{def:path-algebra} we can easily flip the ordering of the pair in the algebra's path-weights (i.e. from $\mathbb{N} \times \paths$ to $\paths \times \mathbb{N}$) and it would \emph{still} be a path algebra and the path function would simply change which component of the pair it returned.

In order to construct a full model of the path-vector, it is sufficient to note that a path algebra is simply a routing algebra with extra assumptions. Consequently, unlike in the standard approach, the algorithms described in Section~\ref{sec:synchronous-model}~\&~\ref{sec:asynchronous-model} can be immediately reused to construct an asynchronous model of a path-vector protocol.
}

\subsection{Discussion and novel contributions}
\label{sec:model-contributions}

\matthewrevision{
To recap, we have have now successfully constructed $\aStateFun$, an abstract model of the execution of a generic asynchronous DBF protocol. The model is parameterised by:
\begin{enumerate}
\item $\genericRoutingAlgebra$ - a routing algebra which represents the routing problem being solved.

\item $\network$ - a function from epochs to adjacency matrices which represents the topology of the network over time.

\item $(\alpha, \beta, \eta, \pi)$ - a schedule which represents the timings of events in the network, e.g. node activations, message arrivals.
\end{enumerate}
and we have defined sensible sets of axioms for the algebra that represent distance-vector and path-vector protocols.}

We will now briefly revisit the main advantages of this new model over models proposed in previous work.

\paragraph{Simpler, more general algebraic structure}
The algebraic structure we propose in Definition~\ref{def:raw-routing-algebra} is both more general and simpler than that proposed by Sobrinho~\cite{sobrinho2005algebraic}, which in turn was inspired by the classical semiring approach~\cite{gondran2008graphs, carre1971algebra}. The main difference is in how we model path-weight extension. In semiring algebras it is modelled by a second binary operator $\otimes : \carrier \times \carrier \rightarrow \carrier$. However this is incapable of modelling modern routing policies such as those described in Sections~\ref{sec:introduction}~\&~\ref{sec:bgp-lite-protocol}. Sobrinho generalised this binary operator by introducing labels $L$, signatures $\Sigma$, an operator $\triangleright: L \times \Sigma \rightarrow \Sigma$ and a function $g : \Sigma \rightarrow \carrier$ mapping signatures to path-weights. However, by modelling policies as a set of functions,~$\extension$, and as demonstrated in Section~\ref{sec:bgp-lite-protocol}, we retain the ability to model modern routing polices while reducing the size of the algebra from 8 primitives to 6 primitives.

\paragraph{Paths as part of the algebra}
The second way in which the algebra is more general is that, by assigning each edge its own set of policy functions, we model the tracking of paths in path-vector protocols inside of the algebra itself. In contrast, Sobrinho~\cite{sobrinho2005algebraic} models the paths as part of the algorithm, i.e. in the equivalent definition of $\aStateFun$. This means that the paths along which the path-weights were generated are not provided as inputs to his~$\choice$ and~$\extension$ operators. Therefore the algebra cannot alter or make decisions based on those paths. Consequently, unlike ours, Sobrinho's approach cannot model policies such as route filtering or path inflation in BGP. 

\paragraph{Abstraction of paths}
A third advantage of our new Definition~\ref{def:path-algebra} of a path algebra is that it only requires that the $\pathf$ function can extract a simple path from a path-weight, but does not force the protocol to reveal how paths are actually stored \emph{internally}. This abstraction results in proofs that are far more robust to variations in how paths are stored internally by the protocol. For example, in Section~\ref{sec:bgp-lite-protocol} we show that our algebras can model BGP-like path inflation by simply adjusting the $\pathf$ function to strip out the inflated ASs, without having to adjust the proofs of convergence. In contrast, the model of path-vector protocols in Sobrinho~\cite{sobrinho2005algebraic} uses a concrete definition of paths, and therefore is incapable of modelling path inflation without altering the proof of convergence itself.

\paragraph{Use of existing asynchronous theory}
The additional complexity introduced by the more expressive dynamic asynchronous model $\aStateFun$, such as routers failing and rebooting, at first glance appears daunting. However, as we will discuss further in Section~\ref{sec:results}, our use of the existing theory of dynamic iterative algorithms means that we have access to a range of powerful theorems that, impressively, are capable of totally abstracting away the asynchronicity of the system.

\paragraph{Executable model of the protocol}
Sobrinho~\cite{sobrinho2005algebraic} only provides an abstract temporal model of individual events that occur during the operation of the protocol. In contrast, as far as we are aware, ours is the first fully formal and executable model of the full dynamic behaviour of an abstract DBF protocols in which routers and links can fail, join and reboot. This means that our model can be run and its behaviour observed, and that our proofs can be written in a style amenable to the formal verification discussed in Section~\ref{sec:formalisation}.

\section{Convergence results}
\label{sec:results}

\subsection{Some useful concepts}

Initially, we define some concepts that, although not required for constructing the model in Section~\ref{sec:model}, are used in stating and reasoning about its convergence behaviour.

\vspace{1em}
\subsubsection{Ordering over path weights}

Given any routing algebra $\genericRoutingAlgebra$, the choice operator $\oplus$ induces a preference order over path-weights as follows:
\begin{definition}[Preference relation over path-weights ]
\label{def:leq-path-weights}
A path-weight~$x$ is preferred to path-weight~$y$, written $x \leqWeight y$, if $x$ is chosen over $y$:
\begin{align*}
x \leqWeight y \spacing & \triangleq \spacing x \oplus y = x \\
x \lessWeight y \spacing & \triangleq \spacing x \leqWeight y \: \wedge \: x \not= y. 
\end{align*}
\end{definition}
\noindent As $\oplus$ is associative, commutative and selective, we have that $\leqWeight$ is a total order and that for all path-weights $x$ we have $\trivial \leqWeight x \leqWeight \invalid$.

\vspace{1em}
\subsubsection{Assignments}

In our model which router is using which path-weight is implicitly defined by the path-weight's location in the routing state (i.e. given a routing state $\X$ then the path-weight $\X_{ij}$ is being used by router $i$).  Nonetheless a more explicit notion of which router is using which path-weight is crucial for stating sufficient and necessary conditions for the convergence of the model over a given network. We shall refer to such a pair of a starting node and a path-weight as an \emph{assignment}.
\begin{definition}[Assignment]
\label{def:assignments}
An assignment $(i, x)$ is a pair of a router $i \in \nodes$ and a path-weight $x \in \carrier$, and represents that router $i$ is currently using path-weight $x$. The set of assignments $\assignments \triangleq \nodes \times \carrier$.
\end{definition}
\noindent Given an assignment $(i, x)$, during the correct operation of a path-vector protocol one would expect the first node in $\pathf(x)$ to be equal to $i$. Note that an assignment is a generalisation of the notion of a \emph{couplet} defined by Sobrinho ~\cite{sobrinho2005algebraic}. Whereas a couplet stores the entire path along which the path-weight was generated, an assignment only stores the first router in the path. This generalisation is necessary as we seek to reason about distance-vector as well as path-vector protocols, and in the former case we do not have access to the full path.

Several binary relations may be defined over the set of assignments. The first is the lifting of the preference order over path-weights to assignments.
\begin{definition}[Preference relation over assignments]
\label{def:assignment-preference}
An assignment~$a$ is preferred to assignment~$b$, written $a \leqAss b$, if they have the same router and the path-weight of $a$ is preferred to the path-weight of $b$:
\begin{align*}
(i, x) \leqAss (j , y) \spacing \triangleq \spacing i &= j \wedge x \leqWeight y \\
(i, x) \lessAss (j , y) \spacing \triangleq \spacing i &= j \wedge x \lessWeight y
\end{align*}
\end{definition}
\noindent Note that unlike $\leqWeight$ which is a total order over path-weights, $\leqAss$ is only a partial order over assignments, as it can only relate assignments to the same router. Intuitively, this makes sense as routers in a DBF protocol only compare paths that begin at themselves.

\begin{definition}[Extension relation]
\label{def:assignment-extension}
An assignment $a$ \emph{extends} assignment $b$ in the current topology $\A$, written ${a \extendedBy{\A} b}$, if the extension of the non-invalid path-weight of $a$ across the relevant link is equal to the path-weight of $b$:
\begin{equation*}
(j, y) \extendedBy{\A} (i , x) \triangleq \A_{ij}(y) = x \wedge y \neq \invalid  
\end{equation*}
\end{definition}
\noindent In the context of a path-vector protocol, the axioms of a path algebra are sufficient to guarantee that if $(j , y) \extendedBy{\A} (i , x)$ then $\pathf(x) = (i , j) \cons \pathf(y)$.

\begin{definition}[Threatens relation]
\label{def:assignment-threatens}
An assignment $a$ \emph{threatens} assignment $b$ in the current topology $\A$, written ${a \threatenedBy{\A} b}$, if the extension of the non-invalid path-weight of $a$ across the relevant link is preferred to the path-weight of $b$:
\begin{equation*}
(j , y) \threatenedBy{\A} (i , x) \triangleq \A_{ij}(y) \leqWeight x \wedge y \neq \invalid
\end{equation*}
\end{definition}
\noindent Intuitively $(j , y) \threatenedBy{\A} (i , x)$ means that after the next update router $i$ may switch away from $x$ and instead use a path-weight that is equal to or preferred to $\A_{ij}(y)$. It is easy to see that if $a \extendedBy{\A} b$ then $a \threatenedBy{\A} b$.

\vspace{0.5em}
\subsubsection{Non-participating nodes}

As seen in Definition~\ref{def:participating-topology}, the constraint that non-participating routers do not advertise routes to or receive routes from other routers is enforced at the level of the adjacency matrix $\A^{ep}$ and therefore the iteration function $\aFun_{\A^{ep}}$. Consequently, after a single iteration the state of any non-participating router's routing table will always be the corresponding entry of the identity matrix, i.e. the trivial route to itself and invalid routes to all other routers in the network. In~\cite{daggitt2022dynamic} such states are said to be \emph{accordant} with the current set of participants.

\begin{definition}[Accordant states] (Definition 15 from~\cite{daggitt2022dynamic} with $\bot = \I$)
\label{def:accordant-states}
The set of routing states that are accordant with set of participants $p$, $\accSet$, is defined as:
\begin{equation*}
\accSet \triangleq \{ \X \in \mcarrier \mid \forall i,j: i \notin p \Rightarrow \X_{ij} = \I_{ij} \} 
\end{equation*}
\end{definition}

\subsection{Definition of convergence}
\label{sec:definition-of-convergence}

Before proving convergence results about our model, we must first establish what does convergence even mean for a dynamic, asynchronous, iterative algorithm such as a DBF protocol. In this section, will build up a suitable definition by discussing in turn the complications introduced by each of these adjectives.

\paragraph{Iterative} The definition of convergence for a synchronous iterative algorithm is relatively simple. After a finite number of iterations, the algorithm must reach a \emph{stable state} such that further iterations result in no further change. In terms of the synchronous model defined in Section~\ref{sec:synchronous-model}, this would translate to:
\begin{equation*}
\exists k^* \in \NN: \forall k \in \NN : k \geq k^* \Rightarrow \sStateFun^{k}(\X) = \sStateFun^{k^*}(\X)
\end{equation*}
The resulting stable state $\X^* = \sStateFun^{k^*}(\X)$ is a \emph{fixed point} for~$\aFun_\A$, as:
\begin{equation*}
\aFun_\A(\X^*) 
= \aFun_\A(\sStateFun^{k^*}(\X)) 
= \sStateFun^{k^* + 1}(\X)
= \sStateFun^{k^*}(\X)
= \X^*
\end{equation*}
From the perspective of a routing protocol, $\X^*$ being a fixed point is equivalent to the statement that in state $\X^*$ no router can improve its selected path-weights by unilaterally choosing to switch. Such a state can therefore be viewed as a \emph{locally} optimal solution to the optimisation problem of finding the best consistent set of path choices in the current network topology. As will be discussed further in Section~\ref{sec:routing-algebra-classes}, whether such a state is necessarily \emph{globally} optimal depends on the properties of the routing algebra being used.

\paragraph{Asynchronous} At first glance, it appears relatively easy to adapt the above definition to a basic asynchronous model with only a single epoch: simply require that \emph{for every schedule} the asynchronous iteration reaches a stable state. Unfortunately due to the asynchronous and distributed nature of the computation, and as discussed in Section~\ref{sec:asynchronous-model}, our definition of a schedule does not assume reliable communication between routers. Consequently, in an arbitrary schedule a router may never activate or a pair of routers may never succeed in communicating. In such a schedule, it is clearly impossible for the protocol to converge to the desired stable state. Therefore, we need to restrict ourselves to only requiring that the protocol converges over \emph{reasonable} schedules. 

As long established in the asynchronous iterations literature~\cite{uresin1990parallel}, the right definition of ``reasonable" is that the schedule contains \emph{pseudocycles}.  Informally, a pseudocycle can be thought of the asynchronous counterpart of a single synchronous iteration.  \matthewrevision{We now present the formal definition of this concept from~\cite{daggitt2022dynamic}. 
\begin{definition}[Expiry period, Definition~12 in~\cite{daggitt2022dynamic}]
A period of time $[t_1,t_2]$ is a \emph{expiry period} for node~$i$ if $\eta(t_1) = \eta(t_2)$ and for all nodes $j$ and times $t \geq t_2$ then $t_1 \leq \beta(t,i,j)$.
\end{definition}
\noindent In other words, an expiry period is a period of time $[t_1, t_2]$ within a single epoch, such that after time $t_2$ node $i$ never receives a message which was sent before $t_1$, i.e. all messages to $i$ that were in flight before $t_1$ have either been dropped or have arrived by time $t_2$.
\begin{definition}[Dynamic activation period, Definition~11 in~\cite{daggitt2022dynamic}]
A period of time $[t_1,t_2]$ is a \emph{dynamic activation period} for node~$i$ if $\eta(t_1) = \eta(t_2)$ and there exists a time $t \in [t_1,t_2]$ such that $i \in \alpha(t)$.
\end{definition}
\noindent Intuitively, an activation period for node $i$ is a period of time within a single epoch in which router $i$ updates its routing table at least once.
\begin{definition}[Dynamic pseudocycle, Definition~13 in~\cite{daggitt2022dynamic}]
A period of time $[t_1,t_2]$ is a dynamic pseudocycle if $\eta(t_1) = \eta(t_2)$ and for all nodes $i$ $\in \rho(t_1)$ there exists a time $t \in [t_1,t_2]$ such that $[t_1,t]$ is an expiry period for node $i$ and $[t,t_2]$ is an activation period for node $i$.
\end{definition}
\noindent A pseudocycle is a period of time within a single epoch during which every participating node undergoes an expiry period followed by an activation period. Lemma~8 in~\cite{daggitt2022dynamic} proves that during a pseudocycle, the asynchronous model $\aStateFun$ makes at least as much progress towards a stable state as the synchronous model $\sStateFun$ would with a single synchronous iteration. This is therefore the right notion of a schedule behaving reasonably.}

A second aspect that we must consider when moving from the synchronous to the asynchronous model is that of non-determinism. As discussed in Section~\ref{sec:what-is-policy-rich-routing}, BGP suffers from non-deterministic convergence where the protocol may end up in multiple stable states, and which stable state is reached depends on the schedule and the initial state, rather than just the topology. This problem is known colloquially as a BGP wedgie~\cite{rfc4264}. Our definition of convergence should therefore require it to be deterministic, i.e. in each epoch only one stable state may be reached and that it is dependent only on the topology and the set of participants.

\paragraph{Dynamic} Finally, we now consider the full dynamic model with multiple epochs. In this scenario the network topology may change so frequently (i.e. the epochs may be so short), that the protocol \emph{never} gets sufficient time to converge to a solution. Therefore we would like our definition of convergence to only require that the protocol will converge \emph{if given a sufficient period of stability}. It is suggested in~\cite{daggitt2022dynamic} that dynamic iterative algorithms, such as DBF routing protocols, should therefore be referred to as \emph{convergent} rather than saying that \emph{they converge}.

The final issue is that the protocol may only be convergent over some, rather than all, epochs and sets of participants. As shown by Sobrinho~\cite{sobrinho2005algebraic} and discussed further in Section~\ref{sec:routing-algebra-classes}, DBF protocols only converge over network topologies that are \emph{free}. Therefore the definition of convergent must be granular enough to express that the protocol is only convergent over a subset of epochs and sets of participants. A pair containing an epoch and a set of participants is referred to as a \emph{configuration}. The set of all configurations $\configs$ is therefore defined as:
\begin{equation*}
\configs \triangleq \epochs \times 2^\nodes
\end{equation*}

Taking all the above points into account, we therefore arrive at the following definition of what it means for our protocol to be \emph{convergent}:

\begin{definition}[Convergent] (Definition 14 from \cite{daggitt2022dynamic})
\label{def:convergent}
The DBF protocol is convergent over a set of configurations $C \subseteq \configs$ iff:
\begin{enumerate}
\item for every configuration $(e, p) ∈ C$ there exists a fixed point $\X^*_{ep}$ for $F_{\A^{ep}}$ and a number of iterations $k^*_{ep}$.

\item for every initial state $\X$, schedule $(\alpha, \beta, \eta, \pi)$ and time~$t_1$ then if $(\eta(t_1), \rho(t_1)) \in C$
and the time period~$[t_1, t_2]$ contains $k^*_{\eta(t_1)\rho(t_1)}$ pseudocycles then for every time $t_3$ such that $t_3 \geq t_2$ and $\eta(t_2) = \eta(t_3)$ then ${\aStateFun^{t_3} (\X) = \X^*_{\eta(t_1)\rho(t_1)}}$
\end{enumerate}
\end{definition}
\noindent The essence of this definition is that for every valid configuration there exists a fixed point and a number of pseudocycles such that if, at an arbitrary point in time the configuration of the current epoch is in the set of valid configurations and the schedule contains that number of pseudocycles, then after those pseudocycles the protocol will have reach the fixed point and will remain in that fixed point until the end of the current epoch.

\subsection{An existing convergence theorem}
\label{sec:convergence_theorem}

One of the most important advantages to using an established model of asynchronous iterations in Section~\ref{sec:asynchronous-model} is that there already exist theorems describing sufficient conditions of such an iteration being convergent. The most powerful (and perhaps initially surprising) feature of these results is that the sufficient conditions only involve properties of the functions being iterated, $\aFun_{\A^{ep}}$, rather than the full asynchronous iteration,~$\delta$. Consequently one can use them to prove that $\delta$ is convergent without ever having to directly reason about the unreliable, asynchronous and dynamic nature of the underlying network, e.g. messages being lost or reordered, routers and links failing or being added.

 In this paper, we will use the AMCO conditions from~\cite{daggitt2022dynamic}. For a broader survey of the range of alternative sufficient conditions available see~\cite{uresin1990parallel}.

\begin{definition}[AMCO, Definition 23 in \cite{daggitt2022dynamic}]
\label{def:amco}
The functions $\aFun_{\A^{ep}}$ are a set of  \emph{asychronously metrically contracting operators} (AMCO) over a set of configurations~$C$ if for every epoch~$e$ and set of participants~$p$ such that ${(e,p) \in C}$ then for all ${i \in p}$ there exists a dissimilarity function ${d^{ep}_i : \tcarrier \times \tcarrier \rightarrow \mathbb{N}}$ such that:
\begin{enumerate}[label=\textbf{D\arabic*}]
\item \emph{$\tmetric{i}$ is indiscernable}
\begin{equation*}
\forall \xt, \yt : \tmetric{i}(\xt,\yt) = 0 \Leftrightarrow \xt = \yt
\end{equation*}
\item \emph{$\tmetric{i}$ is bounded}
\begin{equation*}
\exists n : \forall \xt, \yt: \tmetric{i}(\xt,\yt) \leq n
\end{equation*}
\end{enumerate}
and if $\smetric(\X, \Y) = \max_{i \in p} \tmetric{i}(\X_i, \Y_i)$ then:
\begin{enumerate}[resume, label=\textbf{D\arabic*}]
\item $\aFun_{\A^{ep}}$ is strictly contracting on orbits over $\smetric$
\begin{align*}
& \forall \X \in \accSet: \aFun_{\A^{ep}}(\X) \neq \X \\ 
& \quad \Rightarrow \smetric(\X, \aFun_{\A^{ep}}(\X)) > \smetric(\aFun_{\A^{ep}}(\X), (\aFun_{\A^{ep}})^2(\X))
\end{align*}
\item $\aFun_{\A^{ep}}$ is strictly contracting on fixed points over $\smetric$:
\begin{align*}
& \forall \X \in \accSet, \X^* \in \mcarrier: \aFun_{\A^{ep}}(\X^*) = \X^* \wedge \X \neq \X^* \\ 
& \quad \Rightarrow \smetric(\X^*, \X) > \smetric(\X^*, \aFun_{\A^{ep}}(\X))
\end{align*}
\item $\aFun_{\A^{ep}}$ enforces accordancy:
\begin{equation*}
\forall \X \in \mcarrier: \aFun_{\A^{ep}}(\X) \in \accSet
\end{equation*}
\end{enumerate}
where $(\aFun_{\A^{ep}})^2$ is application of $\aFun_{\A^{ep}}$ twice.
\end{definition}

The properties required of an AMCO may seem complicated at first glance but each have clear interpretations. Firstly for every configuration in $C$ there must exist a function $\tmetric{i}$ which measures the dissimilarity between two routing table states (although it need not satisfy the formal definition of a metric). D1 requires that the dissimilarity between two states is zero iff the states are equal, and D2 that there exists a maximum dissimilarity. The notion of dissimilarity over routing tables, $\tmetric{i}$, is then naturally lifted to dissimilarity over routing states $\smetric$ by calculating the maximum dissimilarity between the participating routing tables. D3 requires that the dissimilarity between the states after consecutive iterations strictly decreases, and D4 that if there exists a stable state then the dissimilarity to it strictly decreases after each iteration. Finally, D5 requires that applying the operator to an accordant state must always result in another accordant state.

\begin{theorem}
\label{thm:amco-convergent}
If $F_{\A^{ep}}$ is a dynamic AMCO over a set of configurations $C$ then $\delta$ is convergent over $C$.
\end{theorem}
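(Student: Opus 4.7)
The plan is essentially to invoke the main convergence theorem for dynamic asynchronous iterations from~\cite{daggitt2022dynamic}, since our model has been deliberately engineered as a special case of that framework. First I would verify that the asynchronous state function $\aStateFun$ of Definition~\ref{def:aStateFun} is literally an instance of the generic dynamic asynchronous iteration studied in~\cite{daggitt2022dynamic}, via the substitutions $\aFun^t = \aFun_{\A^t}$ and $\bot = \I$ that were recorded when we introduced the definition. The schedule machinery (activation function, data flow function, epoch function, participants function), the notion of pseudocycle, and the notion of accordant state used in Definition~\ref{def:convergent} are likewise taken verbatim from that paper, so no translation work is required here.

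Second, I would observe that conditions D1--D5 in Definition~\ref{def:amco} are exactly the AMCO hypotheses of~\cite{daggitt2022dynamic} restated in our notation: D1 and D2 give, for each participating router, a bounded dissimilarity $\tmetric{i}$ whose zero locus coincides with equality; D3 and D4 together say that $\aFun_{\A^{ep}}$ contracts on orbits and on fixed points with respect to the lifted dissimilarity $\smetric$; and D5 ensures that every image of $\aFun_{\A^{ep}}$ lies in $\accSet$, so that non-participating rows immediately collapse to the corresponding rows of $\I$. The hypothesis of the theorem directly supplies all of these for every $(e,p) \in C$.

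Given these two observations, the theorem is just a restatement of the dynamic AMCO convergence theorem of~\cite{daggitt2022dynamic} under the instantiation above, and so the proof reduces to citing that theorem. The only real obstacle is the bookkeeping required to confirm that none of our small notational choices (e.g.\ calling $\tmetric{i}$ a ``dissimilarity'' rather than a ``metric'', the exact way $\tmetric{i}$ is aggregated into $\smetric$ via a maximum over participants, and the precise quantifier structure of Definition~\ref{def:convergent}) diverge from their counterparts in the cited framework. Since every definition in this section has been drawn explicitly from~\cite{daggitt2022dynamic}, this check is mechanical and introduces no new mathematical content. The genuinely novel work of the paper lies not in this theorem, but in subsequently \emph{verifying} the AMCO conditions D1--D5 for the concrete iteration $\aFun_{\A^{ep}}$ induced by strictly increasing routing and path algebras over free networks, which the later sections will undertake.
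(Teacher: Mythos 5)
Your proposal is correct and matches the paper's approach exactly: the paper's proof of Theorem~\ref{thm:amco-convergent} consists precisely of citing Theorem~5 of~\cite{daggitt2022dynamic} (and the accompanying Agda formalisation), having set up Definitions~\ref{def:aStateFun}, \ref{def:accordant-states}, \ref{def:convergent} and~\ref{def:amco} as direct instantiations of that framework. Your closing remark that the real work lies in verifying D1--D5 for the concrete routing iterations is also how the paper positions things, with those verifications carried out in Appendices~\ref{app:distance_vector_proof} and~\ref{app:path-vector-proof}.
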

\begin{proof}
See Theorem 5 in \cite{daggitt2022dynamic} and the Agda proof~\cite{agda-routing}.
\end{proof}

Informally a synchronous version of the proof runs as follows: D3 requires that each synchronous iteration reduces the dissimilarity between consecutative states. Consequently, as the dissimilarity is a bounded natural number by D2, then the dissimilarity must eventually reach zero, which by D1 means that one has reached a stable state. D4 then guarantees that this stable state is unique. With some non-trivial work, it is possible to lift this synchronous argument to an asynchronous argument because the dissimilarity function over routing states is decomposable into individual dissimilarity functions over routing tables.

The key point to take away is that by using this theorem in our later proofs, we reduce the problem of proving our model of the DBF protocol, $\aStateFun$, is convergent to merely constructing a suitable dissimilarity function $\tmetric{i}$ which interacts with $\aFun_{\A^{ep}}$ in the right way.

\subsection{Types of routing algebras}
\label{sec:routing-algebra-classes}

In this section, we adapt the hierarchy of necessary and sufficient conditions for convergence established by Sobrinho~\cite{sobrinho2005algebraic} to our new simpler algebraic model\footnote{Our terminology differs slightly from that of Sobrinho as we use the traditional algebraic/order theoretic names for these conditions. In particular we use \emph{distributive} instead of \emph{isotonic} and \emph{increasing} instead of \emph{monotonic}. This is because the order theoretic notion of monotonicity, ${x \leq y \Rightarrow f(x) \leq f(y)}$, is much more closely related to the notion of distributivity, ${f(x \oplus y) = f(x) \oplus f(y)}$, than increasingness, ${x \leq f(x)}$.}. The key insight of Sobrinho is that each of the routing algebra axioms in Definition~\ref{def:routing-algebra2} only reference either choice or extension, and that is the relationship \emph{between} them that determines the convergence behaviour of the protocol.

As discussed in Section~\ref{sec:what-is-policy-rich-routing}, the gold-standard are \emph{distributive} algebras:
\begin{definition}[Distributive algebra]
A raw routing algebra is \emph{distributive} if:
\begin{equation*}
\forall f \in \allExtensions, x, y \in \carrier: f(x \oplus y) = f(x) \oplus f(y)
\end{equation*}
\end{definition}
\noindent As has been proven many times in the classic algebraic routing literature and, again by Sobrinho, if the algebra is distributive then any stable state will necessarily be a globally optimal state in which every pair of routers is assigned the optimal path between them.

The next most desirable property for the algebra to have is to be \emph{increasing}/\emph{strictly increasing}:
\begin{definition}[Increasing algebra]
A raw routing algebra is \emph{increasing} if:
\begin{equation*}
\forall f \in \allExtensions, x \in \carrier: x \leqWeight f(x)
\end{equation*}
\end{definition}

\begin{definition}[Strictly increasing algebra]
A raw routing algebra is \emph{strictly increasing} if:
\begin{equation*}
\forall f \in \allExtensions, x \in \carrier: x \neq \invalid \Rightarrow x \lessWeight f(x)
\end{equation*}
\end{definition}
\noindent Sobrinho proved that the algebra being strictly increasing is a necessary and sufficient condition for a path-vector protocol to converge to some stable state in a given epoch. However, if the algebra is not distributive then that stable state will only be locally optimal rather than globally optimal, and some pairs of routers will end up using suboptimal paths.

Finally there is the \emph{freeness} property over a particular network topology:
\begin{definition}[Free network topology]
A network topology $\A$ is free with respect to the raw routing algebra if there does not exist a cycle $[(v_1 , v_2), (v_2, v_3), ..., (v_m, v_1)]$ and a set of path-weights
$\{x_1 , x_2 , \ldots, x_{m} \}$ such that:
\begin{equation*}
\forall i : (v_{i-1}, x_{i - 1}) \threatenedBy{\A} (v_{i}, x_{i})
\end{equation*}
where $i - 1$ is calculated mod $m$.
\end{definition}
\noindent Intuitively a network is free if there does not exist a cycle where each router's current assignment can threaten the assignment of the previous router in the cycle. Sobrinho proved that the network topology being free with respect to the underlying algebra is a necessary and sufficient condition for a path-vector protocol to converge over that network, and that in the shortest-paths algebra this is equivalent to there being no negative weight cycles. Let $\freeConfigs$ be the set of configurations in which the participating topology is free, i.e.
\begin{equation*}
\freeConfigs = \{ (e, p) \in \configs \mid \A^{ep} \text{ is free} \} 
\end{equation*}

We now prove a couple of small lemmas showing some of the relationships between these properties.
\begin{lemma}
\label{lem:strictly-implies-free}
If a routing algebra is strictly increasing then every network topology is free.
\end{lemma}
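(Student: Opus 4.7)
The plan is to proceed by contradiction: assume that some network topology $\A$ is \emph{not} free and derive an impossibility from the strictly increasing hypothesis. Unfolding the definition of freeness, this assumption gives us a cycle $[(v_1, v_2), (v_2, v_3), \ldots, (v_m, v_1)]$ together with path-weights $x_1, x_2, \ldots, x_m$ such that $(v_{i-1}, x_{i-1}) \threatenedBy{\A} (v_i, x_i)$ holds for every $i$ (with indices modulo $m$). Unfolding the threatens relation from Definition~\ref{def:assignment-threatens}, this means both $\A_{v_i v_{i-1}}(x_{i-1}) \leqWeight x_i$ and $x_{i-1} \neq \invalid$.

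The core of the argument is then to combine each threatens step with the strictly increasing hypothesis. Since $x_{i-1} \neq \invalid$, strict increasingness applied to the policy function $\A_{v_i v_{i-1}} \in \extension_{v_i v_{i-1}} \subseteq \allExtensions$ gives $x_{i-1} \lessWeight \A_{v_i v_{i-1}}(x_{i-1})$. Chaining this with the threatens inequality $\A_{v_i v_{i-1}}(x_{i-1}) \leqWeight x_i$ yields $x_{i-1} \lessWeight x_i$ for every $i$. Iterating around the cycle produces the strict chain $x_m \lessWeight x_1 \lessWeight x_2 \lessWeight \cdots \lessWeight x_m$, contradicting the fact that $\lessWeight$ is irreflexive (which follows from $\leqWeight$ being a total order, itself a consequence of \ref{ass:sel}, \ref{ass:assoc}, \ref{ass:comm}).

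I do not anticipate any serious obstacle here: the argument is a clean one-line contradiction once the definitions are unfolded, and the only subtlety is remembering that we need $x_{i-1} \neq \invalid$ (supplied precisely by the threatens relation) in order to invoke the strictly increasing hypothesis, since strict increasingness gives nothing at the invalid path-weight. The final step of chaining $<$-inequalities transitively around a finite cycle to contradict irreflexivity uses only that $\lessWeight$ is a strict total order on path-weights.
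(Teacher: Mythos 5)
Your proof is correct and is essentially the standard argument: the paper itself only cites Sobrinho's Theorem~2 and the Agda formalisation for this lemma, and the underlying reasoning there is exactly your chaining of the strictly increasing inequality $x_{i-1} \lessWeight \A_{v_i v_{i-1}}(x_{i-1})$ with the threatens inequality $\A_{v_i v_{i-1}}(x_{i-1}) \leqWeight x_i$ around the cycle to contradict irreflexivity of $\lessWeight$. You correctly identify the two points that need care — that the non-invalidity hypothesis required by strict increasingness is supplied by the threatens relation, and that $\leqWeight$ being a total order (hence $\lessWeight$ transitive with $\leqWeight$ and irreflexive) rests on axioms \ref{ass:sel}, \ref{ass:assoc}, \ref{ass:comm} of a routing algebra.
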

\begin{proof}
See Theorem~2 \& subsequent discussion in \cite{sobrinho2005algebraic} or our Agda proof~\cite{agda-routing}.
\end{proof}

\begin{lemma}
\label{lem:incr-and-paths-imply-strictly-incr}
If a path algebra $\genericPathAlgebra$ is increasing then it is also strictly increasing.
\end{lemma}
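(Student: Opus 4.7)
The plan is to fix arbitrary $f \in \allExtensions$ and $x \in \carrier$ with $x \neq \invalid$, and prove $x \lessWeight f(x)$. Since the algebra is already assumed to be increasing, we have $x \leqWeight f(x)$ for free, so it suffices to establish that $x \neq f(x)$; combining these immediately yields $x \lessWeight f(x)$ by Definition~\ref{def:leq-path-weights}.

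To separate $x$ from $f(x)$, I would work through the $\pathf$ projection rather than reasoning about the path-weights themselves, since path algebras give us no structural handle on elements of $\carrier$ except via $\pathf$. Since $f \in \allExtensions$, there exists an edge $(i,j)$ with $f \in \extension_{ij}$, and we may apply P3 to analyse $\pathf(f(x))$. Case-splitting as in P3:

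\begin{case}{1}{$\pathf(f(x)) = \invalidpath$.}
By P1 this forces $f(x) = \invalid$. Combined with the hypothesis $x \neq \invalid$, we obtain $x \neq f(x)$.
\end{case}

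\begin{case}{2}{$\pathf(f(x)) = (i,j) \cons \pathf(x)$.}
Here $\pathf(f(x))$ contains strictly more edges than $\pathf(x)$, so as simple paths they are unequal: $\pathf(f(x)) \neq \pathf(x)$. Because $\pathf$ is a well-defined function, this in turn implies $f(x) \neq x$.
\end{case}

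In both cases we conclude $x \neq f(x)$, which together with the increasing assumption delivers $x \lessWeight f(x)$, as required.

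The proof is short and I do not anticipate a real obstacle; the main subtlety to be careful about is that we cannot directly compare $x$ and $f(x)$ in $\carrier$ (the algebra is essentially a black box), so the argument must route the inequality through $\pathf$ and the structural axioms P1 and P3. A minor bookkeeping point worth stating clearly in the write-up is that equality of paths in $\paths$ implies equality of path-weights via $\pathf$ only in the contrapositive direction used above — we only ever need $\pathf(a) \neq \pathf(b) \Rightarrow a \neq b$, which is immediate from $\pathf$ being a function and requires no extra hypothesis.
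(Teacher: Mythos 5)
Your proof is correct and takes essentially the same approach as the paper: both reduce the problem to showing $x \neq f(x)$ and then use P3 together with P1 to separate $\pathf(x)$ from $\pathf(f(x))$. The only cosmetic difference is that you case-split directly on the outcome of P3, whereas the paper assumes $f(x) = x$ and derives the contradiction $\pathf(x) = (i,j) \cons \pathf(x)$; the underlying argument is identical.
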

\begin{proof}
Consider arbitrary $i, j \in \nodes$ and $f \in \extension_{ij}$ and $x \in \carrier$ such that ${x \neq \invalid}$. We already have that $x \leqWeight f(x)$ as the algebra is increasing so it remains to show that ${x \neq f(x)}$.

Suppose that $f(x) = x$. We then know that $f(x) \neq \invalid$ as $x \neq \invalid$. Likewise we know that $\pathf(x) \neq \invalidpath$ as otherwise by Assumption~\ref{ass:path-invalid} we would have $x = \invalid$. Therefore, by Assumption~\ref{ass:path-extension}, we have that $\pathf(x) = \pathf(f(x)) = (i,j) \cons \pathf(x)$ which is a contradiction. Therefore $x \neq f(x)$ and we have the required result.
\end{proof}

\subsection{DBF convergence results}
\label{sec:convergence-results}

We now proceed to state our main convergence results, the proofs of which may be found in the appendices. We start by considering distance-vector protocols.

\begin{theorem} 
\label{thm:dv-finite-free}
If $\genericRoutingAlgebra$ is a routing algebra and $\carrier$ is finite then $\delta$ is convergent over $\freeConfigs$.
\end{theorem}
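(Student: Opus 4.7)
The plan is to apply Theorem~\ref{thm:amco-convergent}, which reduces the task to exhibiting, for every configuration $(e, p) \in \freeConfigs$ and every $i \in p$, a dissimilarity function $\tmetric{i}$ on routing tables that satisfies axioms D1--D5 of Definition~\ref{def:amco}. By doing so we sidestep entirely the asynchronous, message-level reasoning and need only analyse the synchronous map $\aFun_{\A^{ep}}$.

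The central construction is a height function $\rheight : \assignments \to \NN$ on assignments, built from the topology $\A^{ep}$. Because $(e,p) \in \freeConfigs$, the participating topology $\A^{ep}$ is free, meaning the threatens relation $\threatenedBy{\A^{ep}}$ admits no cycle in $\nodes \times \carrier$. Since $\carrier$ is finite by hypothesis and $\nodes$ is finite, the set $\assignments$ is finite, and the acyclic restriction of $\threatenedBy{\A^{ep}}$ to $\assignments$ is a well-founded strict order on a finite set. I therefore define $\rheight(i,x)$ to be the length of the longest strict $\threatenedBy{\A^{ep}}$-chain ending at $(i,x)$; it is well-defined and bounded above by some $\rheightMax \in \NN$. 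From this I construct a route-level dissimilarity $\rmetric{i}(x,y)$ that is $0$ when $x = y$ and otherwise records a quantity like $\rheightMax - \min(\rheight(i,x), \rheight(i,y))$, and then lift to routing tables by $\tmetric{i}(\xt,\yt) \triangleq \max_j \rmetric{i}(\xt_j, \yt_j)$, so that $\smetric(\X,\Y) = \max_{i \in p} \tmetric{i}(\X_i,\Y_i)$ is the induced global dissimilarity required by Definition~\ref{def:amco}.

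Axioms D1 (indiscernibility) and D2 (boundedness) follow immediately from the construction and the finiteness of $\assignments$, and D5 (accordancy) is direct from the definition of $\aFun_{\A^{ep}}$ via $\A^{ep}$, since non-participating rows are forced to the corresponding rows of $\I$ by axioms \ref{ass:invalid-function-point} and \ref{ass:invalid-identity}. The substantive work is D3 and D4. The key observation is that whenever $\aFun_{\A^{ep}}(\X)_{ij}$ differs from $\X_{ij}$, the expanded form in Equation~\ref{eq:sigma_expanded} together with selectivity (axiom~\ref{ass:sel}) forces the new value to equal $\A^{ep}_{ik}(\X_{kj})$ for some neighbour $k$. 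This exhibits precisely the relation ${(k, \X_{kj}) \threatenedBy{\A^{ep}} (i, \aFun_{\A^{ep}}(\X)_{ij})}$, so by construction $\rheight(i, \aFun_{\A^{ep}}(\X)_{ij}) > \rheight(k, \X_{kj})$. Propagating this coordinate-wise through the $\max$-lift, a single synchronous application of $\aFun_{\A^{ep}}$ strictly increases the height at every changed entry and therefore strictly decreases $\smetric$, yielding both strict contraction on orbits (D3) and strict contraction on any fixed point (D4).

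The main obstacle is making the strict-decrease argument genuinely strict without the strictly increasing axiom available in the path-vector setting (cf.\ Lemma~\ref{lem:incr-and-paths-imply-strictly-incr}). Unlike the path-vector case, where $\pathf$ supplies a canonical length-based well-order, here we rely entirely on freeness of $\A^{ep}$ together with finiteness of $\carrier$ to supply the well-founded measure. Care must be taken with ties (cases where several neighbours offer $\A^{ep}_{ik}(\X_{kj})$ equal to $\X_{ij}$, yielding only non-strict threats), which is absorbed by defining $\rheight$ via strict chains; and with the first iteration out of an arbitrary initial state $\X \notin \accSet$, which is handled by D5 combined with the fact that $\aFun_{\A^{ep}}(\X) \in \accSet$ always. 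Once these points are pinned down, Theorem~\ref{thm:amco-convergent} delivers convergence of $\delta$ over $\freeConfigs$.
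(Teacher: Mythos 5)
Your overall strategy is exactly the paper's: invoke Theorem~\ref{thm:amco-convergent}, build a well-founded height on the finite set of assignments $\assignments$ from freeness, lift it to dissimilarity functions $\rmetric{i}$, $\tmetric{i}$, $\smetric$ by taking maxima, and derive strict contraction from selectivity of $\oplus$, which forces any entry $\aFun_{\A^{ep}}(\X)_{ij}$ with $i \neq j$ to equal $\A^{ep}_{ik}(\X_{kj})$ for some $k$. Where you genuinely diverge is the order used to define the height. The paper introduces a separate \emph{dislodgement order} $\proofOrder$, the transitive closure of $\extendedByU \cup \lessAss$, and must then prove its irreflexivity from freeness (property O1, Sobrinho's Lemma~1, which is the one non-trivial combinatorial step and is deferred to the Agda code). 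You instead take longest chains in the threatens relation $\threatenedBy{\A^{ep}}$ itself, whose acyclicity \emph{is} the definition of freeness, so irreflexivity comes for free. This works because the witnessing neighbour satisfies $(k,\X_{kj}) \threatenedBy{\A^{ep}} (i,z)$ for \emph{every} $z$ with $\A^{ep}_{ik}(\X_{kj}) \leqWeight z$ — in particular for both of the new entries being compared — so you never need the paper's separate facts H1 and H2 about extension and strict preference each lowering the height. That is a real simplification worth keeping.

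Two points in your sketch would not survive being written out. First, the literal metric $\rheightMax - \min(\rheight(i,x),\rheight(i,y))$ violates D1: two distinct path-weights whose assignments both have maximal height get dissimilarity $0$. You need $1 + \rheightMax - \min(\cdot,\cdot)$ (mirroring the paper's $1 + \max(\cdot,\cdot)$ with its downward-counting height). Second, and more substantively, your contraction argument is phrased as a single-orbit claim ("every changed entry increases in height"), but D3 and D4 are statements about the dissimilarity between \emph{two} states, $\smetric(\aFun^{ep}(\X),\aFun^{ep}(\Y))$ versus $\smetric(\X,\Y)$ with $\Y = \aFun^{ep}(\X)$ or $\Y = \X^*$. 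The step you are missing is the paper's Lemma~\ref{lem:strictly-contracting-sub}: when $\sigX_{ij} \neq \sigY_{ij}$, with the better entry (WLOG $\sigX_{ij} \lessWeight \sigY_{ij}$) witnessed by $\sigX_{ij} = \A^{ep}_{ik}(\X_{kj})$, one must show both that $\X_{kj} \neq \invalid$ (else $\sigX_{ij} = \invalid$ cannot be strictly preferred to anything, so the threat is genuine) and, crucially, that $\X_{kj} \neq \Y_{kj}$ (else $\sigY_{ij} \leqWeight \A^{ep}_{ik}(\Y_{kj}) = \sigX_{ij}$, a contradiction). Only this last step lets you bound $\rmetric{i}(\sigX_{ij},\sigY_{ij})$ by $\rmetric{k}(\X_{kj},\Y_{kj})$ through the non-trivial branch of the metric and hence by $\smetric(\X,\Y)$. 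With that tracing-back argument and the corrected metric in place, your construction does satisfy D1--D5 and the theorem follows.
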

\begin{proof}
See Appendix~\ref{app:distance_vector_proof}.
\end{proof}

\noindent Theorem~\ref{thm:dv-finite-free} says that for any distance-vector protocol with a finite set of path-weights then for any epoch and set of participating routers such that the network topology is free and given a sufficient period of stability in which the participating routers continue to activate and communicate, the protocol will always converge to the same stable state no matter the exact timing of messages and even in the presence of unreliable communication.

A more useful result for strictly increasing, finite algebras is then immediately obtainable which shows that such a distance-vector protocol is \emph{always} convergent.
\begin{theorem} 
\label{thm:dv-finite-strictly-increasing}
If $\genericRoutingAlgebra$ is a strictly increasing routing algebra and $\carrier$ is finite then $\delta$ is always convergent.
\end{theorem}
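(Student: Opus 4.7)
The plan is to derive Theorem~\ref{thm:dv-finite-strictly-increasing} as an almost immediate corollary of Theorem~\ref{thm:dv-finite-free} together with Lemma~\ref{lem:strictly-implies-free}. The key observation is that ``always convergent'' unpacks to being convergent over the full set of configurations $\configs$, whereas Theorem~\ref{thm:dv-finite-free} only directly gives convergence over $\freeConfigs$. Closing this gap is purely a matter of showing that under the strictly increasing hypothesis we have $\freeConfigs = \configs$.

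Concretely, I would proceed as follows. First, assume the hypotheses: $\genericRoutingAlgebra$ is a strictly increasing routing algebra and $\carrier$ is finite. Pick an arbitrary configuration $(e,p) \in \configs$ and consider the associated participating topology $\A^{ep}$ (Definition~\ref{def:participating-topology}). Since every element $\A^{ep}_{ij}$ lies in $\extension_{ij}$ (the link policies from $\network^e$ when $\{i,j\} \subseteq p$, and $\invalidedge_{ij}$ otherwise), and since strict increasingness quantifies uniformly over all $f \in \allExtensions$, the hypothesis of Lemma~\ref{lem:strictly-implies-free} holds and we conclude $\A^{ep}$ is free. Thus $(e,p) \in \freeConfigs$, and since $(e,p)$ was arbitrary, $\freeConfigs = \configs$.

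Having established this equality, the second step is a direct appeal to Theorem~\ref{thm:dv-finite-free}: because the algebra is a routing algebra (strictly increasing routing algebras are in particular routing algebras) and $\carrier$ is finite, the theorem gives convergence of $\delta$ over $\freeConfigs$, which by the previous paragraph equals $\configs$. Hence $\delta$ is convergent over all configurations, i.e.\ always convergent.

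There is really no substantive obstacle here; the only point that requires a moment's care is ensuring that the constantly invalid policies $\invalidedge_{ij}$ used to extend $\network^e$ into $\A^{ep}$ lie within the scope of the strict increasingness axiom, which is immediate since they belong to $\extension_{ij} \subseteq \allExtensions$ by Definition~\ref{def:raw-routing-algebra}. All the heavy lifting has already been done in Theorem~\ref{thm:dv-finite-free} (which in turn is powered by the AMCO machinery of Theorem~\ref{thm:amco-convergent}), so this final theorem is essentially a packaging result that identifies a convenient, purely algebraic sufficient condition for unconditional convergence.
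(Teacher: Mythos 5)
Your proposal is correct and follows exactly the paper's own argument: apply Lemma~\ref{lem:strictly-implies-free} to conclude every configuration is free, then invoke Theorem~\ref{thm:dv-finite-free}. The extra remark about the constantly invalid policies lying within the scope of the strict increasingness axiom is a harmless elaboration of the same reasoning.
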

\begin{proof}
As the algebra is strictly increasing then by Lemma~\ref{lem:strictly-implies-free} every configuration is free, and therefore we have the required result immediately by  Theorem~\ref{thm:dv-finite-free}.
\end{proof} 

In practice the finiteness condition is restrictive and excludes many routing algebras of interest. For example even the shortest-path algebra uses the infinite set $\mathbb{N}$ as path weights. However recall that Theorems~\ref{thm:dv-finite-free}~\&~\ref{thm:dv-finite-strictly-increasing} guarantee that, given a sufficient period of stability, the protocol will reconverge even in the presence of messages from previous epochs. On the other hand shortest-path distance-vector protocols experience count-to-infinity problems when the state at the start of the epoch contains path-weights generated along paths that do not exist in the current topology. Finiteness is a sufficient condition to reduce count-to-infinity problems to merely count-to-convergence in distance-vector protocols. In the real world this is reflected in the design of RIP which artificially imposes a maximum hop count to ensure that $\carrier$ is finite.

A common approach to avoid the finiteness requirement (and count-to-convergence issues) is that of path-vector protocols which track the paths along which the path-weights are generated. Path-weights are then removed if their path contains a cycle.

\begin{theorem}
\label{thm:pv-free}
Given an algebra $\genericPathAlgebra$ which is a routing algebra and a path algebra then $\delta$ is convergent over $\freeConfigs$.
\end{theorem}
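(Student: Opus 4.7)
The plan is to invoke Theorem~\ref{thm:amco-convergent}, so I aim to construct, for each configuration ${(e,p) \in \freeConfigs}$, a family of dissimilarity functions $\tmetric{i}$ making $\aFun_{\A^{ep}}$ a dynamic AMCO in the sense of Definition~\ref{def:amco}. In the distance-vector case (Theorem~\ref{thm:dv-finite-free}) boundedness (D2) was supplied by finiteness of $\carrier$; here that hypothesis is gone, so the bound must instead come from the path structure exposed by $\pathf$ combined with freeness of $\A^{ep}$. The driving observation is that axioms~\ref{ass:path-invalid}--\ref{ass:path-extension} force every non-invalid path-weight reachable under $\aFun_{\A^{ep}}$ to contain a \emph{simple} path in the network, and there are only finitely many such paths on $n$ nodes.

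First, I would fix $(e,p) \in \freeConfigs$ and build a height function ${\rheight : \assignments \rightarrow \NN}$ by well-founded recursion on a suitable restriction of the threatens relation $\threatenedBy{\A^{ep}}$ to assignments $(i,x)$ with $x \neq \invalid$. Freeness asserts precisely that this relation admits no cyclic threatens-chain around any cycle, and combined with the fact that reachable non-invalid path-weights correspond to simple paths, one obtains a global bound $\rheightMax$ on $\rheight$. Define $\tmetric{i}(\xt, \yt)$ to be $\max_{j : \xt_j \neq \yt_j} \max(\rheight(i, \xt_j), \rheight(i, \yt_j))$, and zero when $\xt = \yt$; lift to $\smetric$ over participating routers as in Definition~\ref{def:amco}. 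D1 and D2 then follow directly from the construction, with boundedness witnessed by $\rheightMax$, and D5 follows from Definition~\ref{def:participating-topology} together with axiom~\ref{ass:invalid-function-point}, which ensure that non-participating rows of $\aFun_{\A^{ep}}(\X)$ are forced to the identity matrix.

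The main obstacle will be verifying strict contraction D3 and D4. The key claim is that if $\X$ and $\aFun_{\A^{ep}}(\X)$ disagree at router $i$'s entry for destination $j$, then after one further application any residual disagreement at that coordinate either vanishes or drops to strictly smaller height under $\rheight$. Arguing by contradiction, a disagreement persisting at the same or greater height would allow one to chain threatens-relations into a cycle of assignments, contradicting freeness of $\A^{ep}$. The delicate bookkeeping lies in tracking how $\aFun_{\A^{ep}}$ recombines path-weights via $\oplus$ and the neighbour policies, using axiom~\ref{ass:path-extension} to control how $\pathf$ evolves and the assignment partial order $\leqAss$ from Definition~\ref{def:assignment-preference} to rule out self-threatening chains. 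Once D1--D5 are established, Theorem~\ref{thm:amco-convergent} delivers convergence of $\aStateFun$ over $\freeConfigs$, yielding the theorem.
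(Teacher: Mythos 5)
Your overall strategy---construct dissimilarity functions satisfying D1--D5 and invoke Theorem~\ref{thm:amco-convergent}---matches the paper's, but the central step is broken. You claim that because every reachable non-invalid path-weight ``contains'' a simple path and there are only finitely many simple paths on $n$ nodes, the height function derived from the threatens relation is globally bounded. This does not follow: $\pathf$ is many-to-one, so a single simple path can be stored by infinitely many distinct path-weights (consider the shortest-paths path algebra, where every $(m,p)$ with $m \in \NN$ shares the path $p$). Hence the set of assignments is still infinite, the upward-closed sets under the dislodgement order need not be finite, and the height defined as a cardinality (or by well-founded recursion on $\threatenedBy{\A^{ep}}$) is not bounded---D2 fails. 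Moreover the AMCO conditions quantify over \emph{all} accordant states, not just reachable ones, so arbitrary path-weights (including stale ones left over from previous epochs) must be handled.

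The missing idea is the paper's notion of \emph{consistency}: a path-weight $x$ is consistent with $\A^{ep}$ iff $\weightf{\A^{ep}}(\pathf(x)) = x$. The consistent path-weights form a \emph{finite} set closed under $\oplus$ and extension, so on that subalgebra the entire distance-vector construction of Appendix~\ref{app:distance_vector_proof} (dislodgement order, height, metric) can be reused verbatim. Inconsistent path-weights are handled by a second, separate decreasing quantity: any inconsistent entry of $\aFun^{ep}(\X)$ must be an extension of an inconsistent entry of $\X$, so the length of the shortest inconsistent path strictly increases each iteration and, by axiom~\ref{ass:path-extension}, all inconsistent path-weights are flushed within $n$ steps. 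The final metric stitches these together, placing all inconsistent disagreements strictly above all consistent ones. A consequence you also miss is that $\aFun^{ep}$ is \emph{not} strictly contracting outright in this setting (unlike the distance-vector case), so D3 and D4 genuinely require separate proofs---your single chaining argument via freeness would not establish contraction for states containing inconsistent path-weights, where freeness plays no role and the contraction comes from path-length growth instead.
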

\begin{proof}
See Appendix~\ref{app:path-vector-proof}.
\end{proof}
\noindent Again a more useful version of this proof for strictly increasing path algebras is immediately obtainable.
\begin{theorem}
\label{thm:pv-strictly-increasing}
Given an increasing algebra $\genericPathAlgebra$ which is a routing algebra and a path algebra then then $\delta$ is always convergent.
\begin{proof}
As the path algebra is increasing then by Lemma~\ref{lem:incr-and-paths-imply-strictly-incr} it is also strictly increasing, and therefore by Lemma~\ref{lem:strictly-implies-free} every configuration is free. Hence we have the required result immediately by Theorem~\ref{thm:pv-free}.
\end{proof}
\end{theorem}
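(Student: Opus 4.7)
The plan is to chain together the previous lemmas to reduce this to Theorem~\ref{thm:pv-free}. The statement ``$\delta$ is always convergent'' means that $\delta$ is convergent over the entire set of configurations $\configs$, so the goal is to show that under the hypotheses every configuration $(e,p) \in \configs$ lies in $\freeConfigs$, after which Theorem~\ref{thm:pv-free} delivers the conclusion directly.

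First I would observe that the hypothesis provides both a routing algebra and a path algebra structure, and it is assumed to be increasing. This is exactly the setting of Lemma~\ref{lem:incr-and-paths-imply-strictly-incr}, which upgrades increasingness to strict increasingness in the presence of the path algebra axioms (the key point being that applying any policy strictly lengthens the stored path, so $f(x) = x$ is impossible unless $x = \invalid$). Invoking that lemma yields that the algebra is strictly increasing.

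Next I would apply Lemma~\ref{lem:strictly-implies-free} to conclude that \emph{every} network topology is free with respect to this algebra. Since $\A^{ep}$ is a network topology for each $(e,p) \in \configs$, this gives $\configs \subseteq \freeConfigs$, and of course the reverse containment is trivial, so $\configs = \freeConfigs$. Finally, Theorem~\ref{thm:pv-free} states that $\delta$ is convergent over $\freeConfigs = \configs$, which is the definition of $\delta$ being always convergent.

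I do not anticipate any real obstacle: there are no new constructions or metric arguments to carry out, and the result is a direct corollary obtained by composing three already-established facts. The only thing to be careful about is matching the hypothesis set for each lemma (increasing plus path algebra axioms for Lemma~\ref{lem:incr-and-paths-imply-strictly-incr}; strictly increasing for Lemma~\ref{lem:strictly-implies-free}; routing algebra plus path algebra for Theorem~\ref{thm:pv-free}), all of which are supplied by the hypothesis of the theorem.
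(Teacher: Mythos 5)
Your proposal is correct and follows exactly the same chain as the paper's own proof: Lemma~\ref{lem:incr-and-paths-imply-strictly-incr} to upgrade increasing to strictly increasing, Lemma~\ref{lem:strictly-implies-free} to conclude every configuration is free, and then Theorem~\ref{thm:pv-free}. No gaps.
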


\subsection{Novel contributions}

We now recap the main contributions of these proofs over that of previous work.

\paragraph{Deterministic convergence}
Sobrinho's proofs only guarantee that the protocol will converge, and nothing in his proof forbids the stable state reached depending on the state at the start of the epoch or the ordering of messages between routers during the epoch. In contrast our proofs guarantee that the protocol always converges to the same stable state, no matter the initial state or what the ordering of messages is. This therefore eliminates the possibility of phenomena such as BGP wedgies. 

\paragraph{Unreliable communication}
The definition of the data-flow function, $\beta$, in Definition~\ref{def:schedule}, Section~\ref{sec:asynchronous-model} means that we only assume \emph{unreliable} communication between routers. This is in contrast to previous proofs of convergence, for both general DBF protocols~\cite{sobrinho2005algebraic} and BGP in particular~\cite{gao2001stable}, that assume reliable communication. This means that our model and resulting proofs in later sections in theory apply to protocols built on-top of unreliable transport protocols such as UDP~\cite{postel80udp}.

\paragraph{Extension to distance-vector protocols}
Theorems~\ref{thm:dv-finite-free}~\&~\ref{thm:dv-finite-strictly-increasing} apply to distance-vector protocols, whereas Sobrinho's proofs only applied to path-vector protocols. In particular, this shows that convergence could still be guaranteed if complex conditional policies were added to distance-vector protocols like RIP.

\section{Formalisation in Agda}
\label{sec:formalisation}

We have formalised every mathematical result in this paper, down to the most trivial lemma, in the Agda theorem prover. This includes the results about asynchronous iterative algorithms from~\cite{daggitt2022dynamic}. As our proofs have been checked by a computer, we are far more confident in their correctness than usual. Even where we omit details or use standard informal mathematical reasoning to improve readability (e.g. Lemma~\ref{lem:paths_str_contr_on_fixed_point} in Appendix~\ref{app:path-vector-proof}), the proofs are backed by mathematical arguments which are guaranteed to be fully rigorous.

Furthermore the act of formalisation itself was invaluable in creating and shaping these proofs. For example the unstructured way in which we initially laid out the pen-and-paper proof of Lemma~\ref{lem:paths_str_contr_on_orbits} in Appendix~\ref{app:path-vector-proof} led us to overlook Case~2.3. Only when formalising the result, did we notice that this case remained unproven. This in turn led us to presenting the proof in the much cleaner structure displayed in this paper.

The library of proofs is freely available~\cite{agda-routing} and is highly modular. To prototype a new policy language users need only define it as an algebra and prove that the algebra obeys the strictly increasing conditions in order to guarantee the resulting protocol is always convergent. Furthermore the protocol is then executable so that its dynamic behaviour can be observed for a given network and schedule. In Section~\ref{sec:bgp-lite-protocol} we give an outline of how to define such an algebra in Agda. We hope that the library's extensible nature means that it will be of use to the community when designing new policy languages.

\section{A safe-by-design algebra}
\label{sec:bgp-lite-protocol}

We now present an example of how one could use our Agda library to develop a safe-by-design routing protocol. In particular, we explore the construction of a path-vector algebra that contains many of the features of BGP such as local preferences, community values~\cite{rfc1997} and conditional policies. These policies can perform operations such as path-filtering, path-inflation, modifying local preferences and communities. The conditions themselves are implemented using a simple language of predicates that includes the ability to inspect communities. The algebra is a generalisation of the Stratified Shortest Paths algebra~\cite{Griffin2012}.

It differs from the algebra underlying today's BGP in two crucial ways. Firstly, BGP's implementation of the MED attribute violates the assumption that $\oplus$ is associative~\cite{griffin2002analysis}. Secondly, BGP allows ASs to hide their local preferences and set them to arbitrary values upon importing routes from other ASs. This violates the assumption that the algebra is increasing. 

Our algebra avoids these two issues by a) ignoring MED and b) having policies that only allow local preference to increase. For convenience, and to mirror the increasing terminology, in this algebra lower local preferences values are preferred over higher local preference values.

Note that we present our algebra only to give a practical example of how our theory can be used, and to show that ``most" of the features of BGP are inherently safe. We are not presenting it as a practical solution to these two problems in real-world BGP. We discuss the open question of whether hidden information is compatible with increasing algebras in Section~\ref{sec:hidden}.

Path-weights in our protocol are defined as follows:
\begin{code} \small \>[0]\AgdaKeyword{data}\AgdaSpace{}%
\AgdaDatatype{PathWeight}\AgdaSpace{}%
\AgdaSymbol{:}\AgdaSpace{}%
\AgdaPrimitiveType{Set}\AgdaSpace{}%
\AgdaKeyword{where}\<%
\\
\>[0][@{}l@{\AgdaIndent{0}}]%
\>[2] \AgdaInductiveConstructor{invalid}\AgdaSpace{}%
\>[10]\AgdaSymbol{:}\AgdaSpace{}%
\AgdaDatatype{PathWeight}\<%
\\
\>[2] \AgdaInductiveConstructor{valid}%
\AgdaSymbol{:}\AgdaSpace{}%
\AgdaFunction{LocalPref}\AgdaSpace{}%
\AgdaSymbol{→}\AgdaSpace{}%
\AgdaFunction{Communities}\AgdaSpace{}%
\AgdaSymbol{→}\AgdaSpace{}%
\AgdaDatatype{Path}\AgdaSpace{}%
\AgdaSymbol{→}\AgdaSpace{}%
\AgdaDatatype{PathWeight}\<%
\end{code}
\noindent i.e. there exists an invalid path-weight, and all other path-weights have a local preference, a set of communities and a path. The trivial path-weight, $\trivial$, is defined as:
\begin{code} \small \>[0]\AgdaFunction{0\#}\AgdaSpace{}%
\AgdaSpace{}%
\AgdaSymbol{:}\AgdaSpace{}\AgdaDatatype{PathWeight}\<%
\\
\>[0]\AgdaFunction{0\#}\AgdaSpace{}%
\AgdaSymbol{=}\AgdaSpace{}%
\AgdaInductiveConstructor{valid}\AgdaSpace{}%
\AgdaSymbol{(}\AgdaSymbol{$2 ^ {32} - 1$}\AgdaSymbol{)}\AgdaSpace{}%
\AgdaSymbol{∅}\AgdaSpace{}%
\AgdaSymbol{[]}\<
where $2^{32} - 1$ is the highest possible local preference, $\emptyset$ is the empty set of communities and $[]$ is the empty path. The invalid path-weight, $\invalid$, is defined as:
\begin{code} \small \>[0]\AgdaFunction{∞\#}\AgdaSpace{}%
\AgdaSpace{}%
\AgdaSymbol{:}\AgdaSpace{}\AgdaDatatype{PathWeight}\<%
\\
\>[0]\AgdaFunction{∞\#}\AgdaSpace{}%
\AgdaSymbol{=}\AgdaSpace{}%
\AgdaInductiveConstructor{invalid}\<%
\end{code}
The choice operator, $\oplus$, is defined as: 
\begin{code} \small \>[0]\AgdaOperator{\AgdaFunction{\_⊕\_}}\AgdaSpace{}%
\AgdaSymbol{:}\AgdaSpace{}%
\AgdaFunction{Op₂}\AgdaSpace{}%
\AgdaDatatype{PathWeight}\<%
\\
\>[0]\AgdaBound{x}%
\>[30]\AgdaOperator{\AgdaFunction{⊕}}\AgdaSpace{}%
\AgdaBound{y}\AgdaSymbol{@(}\AgdaInductiveConstructor{invalid}\AgdaSymbol{)}%
\>[50]\AgdaSymbol{=}\AgdaSpace{}%
\AgdaBound{x}\<%
\\
\>[0]\AgdaBound{x}\AgdaSymbol{@(}\AgdaInductiveConstructor{invalid}\AgdaSymbol{)}%
\>[30]\AgdaOperator{\AgdaFunction{⊕}}\AgdaSpace{}%
\AgdaBound{y}%
\>[50]\AgdaSymbol{=}\AgdaSpace{}%
\AgdaBound{y}\<%
\\
\>[0]\AgdaBound{x}\AgdaSymbol{@(}\AgdaInductiveConstructor{valid}\AgdaSpace{}%
\AgdaBound{l}\AgdaSpace{}%
\AgdaBound{cs}\AgdaSpace{}%
\AgdaBound{p}\AgdaSymbol{)}\AgdaSpace{}%
\>[30]\AgdaOperator{\AgdaFunction{⊕}}\AgdaSpace{}%
\AgdaBound{y}\AgdaSymbol{@(}\AgdaInductiveConstructor{valid}\AgdaSpace{}%
\AgdaBound{m}\AgdaSpace{}%
\AgdaBound{ds}\AgdaSpace{}%
\AgdaBound{q}\AgdaSymbol{)}\AgdaSpace{}%
\>[50]\AgdaKeyword{with}\AgdaSpace{}%
\AgdaFunction{compare}\AgdaSpace{}%
\AgdaBound{l}\AgdaSpace{}%
\AgdaBound{m}\<%
\\
\>[0]\AgdaSymbol{...}\AgdaSpace{}%
\AgdaSymbol{|}\AgdaSpace{}%
\AgdaInductiveConstructor{tri>}\AgdaSpace{}%
\AgdaBound{l>m}%
\>[20]\AgdaSymbol{=}\AgdaSpace{}%
\AgdaBound{x}\<%
\\
\>[0]\AgdaSymbol{...}\AgdaSpace{}%
\AgdaSymbol{|}\AgdaSpace{}%
\AgdaInductiveConstructor{tri<}\AgdaSpace{}%
\AgdaBound{l<m}\AgdaSpace{}%
\>[20]\AgdaSymbol{=}\AgdaSpace{}%
\AgdaBound{y}\<%
\\
\>[0]\AgdaSymbol{...}\AgdaSpace{}%
\AgdaSymbol{|}\AgdaSpace{}%
\AgdaInductiveConstructor{tri≈}\AgdaSpace{}%
\AgdaBound{l=m}\AgdaSpace{}%
\>[20]\AgdaKeyword{with}\AgdaSpace{}%
\AgdaFunction{compare}\AgdaSpace{}%
\AgdaSymbol{(}\AgdaFunction{length}\AgdaSpace{}%
\AgdaBound{p}\AgdaSymbol{)}\AgdaSpace{}%
\AgdaSymbol{(}\AgdaFunction{length}\AgdaSpace{}%
\AgdaBound{q}\AgdaSymbol{)}\<%
\\
\>[0]\AgdaSymbol{...}%
\>[15]\agdaIndent{}\AgdaSymbol{|}\AgdaSpace{}%
\AgdaInductiveConstructor{tri<}\AgdaSpace{}%
\AgdaBound{|p|<|q|}\AgdaSpace{}%
\>[26]\AgdaSymbol{=}\AgdaSpace{}%
\AgdaBound{x}\<%
\\
\>[0]\AgdaSymbol{...}%
\>[15]\agdaIndent{}\AgdaSymbol{|}\AgdaSpace{}%
\AgdaInductiveConstructor{tri>}\AgdaSpace{}%
\AgdaBound{|p|>|q|}%
\>[26]\AgdaSymbol{=}\AgdaSpace{}%
\AgdaBound{y}\<%
\\
\>[0]\AgdaSymbol{...}%
\>[15]\agdaIndent{}\AgdaSymbol{|}\AgdaSpace{}%
\AgdaInductiveConstructor{tri≈}\AgdaSpace{}%
\AgdaBound{|p|=|q|}\AgdaSpace{}%
\>[26]\AgdaKeyword{with}\AgdaSpace{}%
\AgdaBound{p}\AgdaSpace{}%
\AgdaOperator{\AgdaFunction{≤ₗₑₓ?}}\AgdaSpace{}%
\AgdaBound{q}\<%
\\
\>[0]\AgdaSymbol{...}%
\>[8]\agdaIndent{}\agdaIndent{}\AgdaSymbol{|}\AgdaSpace{}%
\AgdaInductiveConstructor{yes}\AgdaSpace{}%
\AgdaBound{p≤q}\AgdaSpace{}%
\AgdaSymbol{=}\AgdaSpace{}%
\AgdaBound{x}\<%
\\
\>[0]\AgdaSymbol{...}%
\>[8]\agdaIndent{}\agdaIndent{}\AgdaSymbol{|}\AgdaSpace{}%
\AgdaInductiveConstructor{no}%
\>[14]\AgdaBound{q≤p}\AgdaSpace{}%
\AgdaSymbol{=}\AgdaSpace{}%
\AgdaBound{y}\<%
\end{code}
which operates as follows:
\begin{enumerate}
\item If either $x$ or $y$ is invalid return the other.
\item otherwise if the local pref of either $x$ or $y$ is strictly greater than the other return that path-weight.
\item otherwise if the length of the path of either $x$ or $y$ is strictly less than the other return that path-weight.
\item finally break ties by a lexicographic comparison of paths.
\end{enumerate}

Next we construct the set of policy functions $\extension_{ij}$. We start by defining a simple yet expressive language for conditions that can be used by our policy language to make decisions.
\begin{code} \small \>[0]\AgdaKeyword{data}\AgdaSpace{}%
\AgdaDatatype{Condition}\AgdaSpace{}%
\AgdaSymbol{:}\AgdaSpace{}%
\AgdaPrimitiveType{Set}\AgdaSpace{}%
\AgdaKeyword{where}\<%
\\
\>[0][@{}l@{\AgdaIndent{0}}]%
\>[2]\AgdaInductiveConstructor{\_and\_}%
\>[11]\AgdaSymbol{:}\AgdaSpace{}%
\AgdaDatatype{Condition}%
\>[24]\AgdaSymbol{→}\AgdaSpace{}%
\AgdaDatatype{Condition}%
\>[37]\AgdaSymbol{→}\AgdaSpace{}%
\AgdaDatatype{Condition}\<%
\\
\>[2]\AgdaInductiveConstructor{\_or\_}%
\>[11]\AgdaSymbol{:}\AgdaSpace{}%
\AgdaDatatype{Condition}%
\>[24]\AgdaSymbol{→}\AgdaSpace{}%
\AgdaDatatype{Condition}%
\>[37]\AgdaSymbol{→}\AgdaSpace{}%
\AgdaDatatype{Condition}\<%
\\
\>[2]\AgdaInductiveConstructor{not}%
\>[11]\AgdaSymbol{:}\AgdaSpace{}%
\AgdaDatatype{Condition}%
\>[24]\AgdaSymbol{→}\AgdaSpace{}%
\AgdaDatatype{Condition}\<%
\\
\>[2]\AgdaInductiveConstructor{inPath}%
\>[11]\AgdaSymbol{:}\AgdaSpace{}%
\AgdaDatatype{Node}%
\>[24]\AgdaSymbol{→}\AgdaSpace{}%
\AgdaDatatype{Condition}\<%
\\
\>[2]\AgdaInductiveConstructor{inComm}%
\>[11]\AgdaSymbol{:}\AgdaSpace{}%
\AgdaFunction{Community}%
\>[24]\AgdaSymbol{→}\AgdaSpace{}%
\AgdaDatatype{Condition}\<%
\\
\>[2]\AgdaInductiveConstructor{hasPref}%
\>[11]\AgdaSymbol{:}\AgdaSpace{}%
\AgdaFunction{LocalPref}%
\>[24]\AgdaSymbol{→}\AgdaSpace{}%
\AgdaDatatype{Condition}\<%
\end{code}
The semantics for these conditionals is defined as follows:
\begin{code} \small \>[0]\AgdaFunction{eval}\AgdaSpace{}%
\AgdaSymbol{:}\AgdaSpace{}%
\AgdaDatatype{Condition}\AgdaSpace{}%
\AgdaSymbol{→}\AgdaSpace{}%
\AgdaDatatype{PathWeight}\AgdaSpace{}%
\AgdaSymbol{→}\AgdaSpace{}%
\AgdaDatatype{Bool}\<%
\\
\>[0]\AgdaFunction{eval}\AgdaSpace{}%
\AgdaSymbol{(}\AgdaBound{p}\AgdaSpace{}\AgdaInductiveConstructor{and}\AgdaSpace{}%
\AgdaBound{q}\AgdaSymbol{)}%
\>[21]\AgdaBound{x}%
\>[36]\AgdaSymbol{=}\AgdaSpace{}%
\AgdaFunction{eval}\AgdaSpace{}%
\AgdaBound{p}\AgdaSpace{}%
\AgdaBound{x}\AgdaSpace{}%
\AgdaOperator{\AgdaFunction{∧}}\AgdaSpace{}%
\AgdaFunction{eval}\AgdaSpace{}%
\AgdaBound{q}\AgdaSpace{}%
\AgdaBound{x}\<%
\\
\>[0]\AgdaFunction{eval}\AgdaSpace{}%
\AgdaSymbol{(}\AgdaBound{p}\AgdaSpace{}\AgdaInductiveConstructor{or}\AgdaSpace{}%
\AgdaBound{q}\AgdaSymbol{)}%
\>[21]\AgdaBound{x}%
\>[36]\AgdaSymbol{=}\AgdaSpace{}%
\AgdaFunction{eval}\AgdaSpace{}%
\AgdaBound{p}\AgdaSpace{}%
\AgdaBound{x}\AgdaSpace{}%
\AgdaOperator{\AgdaFunction{∨}}\AgdaSpace{}%
\AgdaFunction{eval}\AgdaSpace{}%
\AgdaBound{q}\AgdaSpace{}%
\AgdaBound{x}\<%
\\
\>[0]\AgdaFunction{eval}\AgdaSpace{}%
\AgdaSymbol{(}\AgdaInductiveConstructor{not}\AgdaSpace{}%
\AgdaBound{p}\AgdaSymbol{)}%
\>[21]\AgdaBound{x}%
\>[36]\AgdaSymbol{=}\AgdaSpace{}%
\AgdaFunction{not}\AgdaSpace{}%
\AgdaSymbol{(}\AgdaFunction{eval}\AgdaSpace{}%
\AgdaBound{p}\AgdaSpace{}%
\AgdaBound{x}\AgdaSymbol{)}\<%
\\
\>[0]\AgdaFunction{eval}\AgdaSpace{}%
\AgdaSymbol{(}\AgdaInductiveConstructor{inComm}%
\>[18]\AgdaBound{c}\AgdaSymbol{)}\AgdaSpace{}%
\>[21]\AgdaSymbol{(}\AgdaInductiveConstructor{valid}\AgdaSpace{}%
\AgdaBound{l}\AgdaSpace{}%
\AgdaBound{cs}\AgdaSpace{}%
\AgdaBound{p}\AgdaSymbol{)}%
\>[36]\AgdaSymbol{=}\AgdaSpace{}%
\AgdaOperator{\AgdaFunction{⌊}}\AgdaSpace{}%
\AgdaBound{c}\AgdaSpace{}%
\AgdaOperator{\AgdaFunction{∈?}}\AgdaSpace{}%
\AgdaBound{cs}\AgdaSpace{}%
\AgdaOperator{\AgdaFunction{⌋}}\<%
\\
\>[0]\AgdaFunction{eval}\AgdaSpace{}%
\AgdaSymbol{(}\AgdaInductiveConstructor{inPath}%
\>[18]\AgdaBound{i}\AgdaSymbol{)}\AgdaSpace{}%
\>[21]\AgdaSymbol{(}\AgdaInductiveConstructor{valid}\AgdaSpace{}%
\AgdaBound{l}\AgdaSpace{}%
\AgdaBound{cs}\AgdaSpace{}%
\AgdaBound{p}\AgdaSymbol{)}%
\>[36]\AgdaSymbol{=}\AgdaSpace{}%
\AgdaFunction{⌊}\AgdaSpace{}%
\AgdaBound{i}\AgdaSpace{}%
\AgdaOperator{\AgdaFunction{∈?}}\AgdaSpace{}%
\AgdaBound{p}\AgdaSpace{}%
\AgdaOperator{\AgdaFunction{⌋}}\<%
\\
\>[0]\AgdaFunction{eval}\AgdaSpace{}%
\AgdaSymbol{(}\AgdaInductiveConstructor{hasPref}%
\>[18]\AgdaBound{v}\AgdaSymbol{)}\AgdaSpace{}%
\>[21]\AgdaSymbol{(}\AgdaInductiveConstructor{valid}\AgdaSpace{}%
\AgdaBound{l}\AgdaSpace{}%
\AgdaBound{cs}\AgdaSpace{}%
\AgdaBound{p}\AgdaSymbol{)}%
\>[36]\AgdaSymbol{=}\AgdaSpace{}%
\AgdaFunction{⌊}\AgdaSpace{}%
\AgdaBound{v}\AgdaSpace{}%
\AgdaOperator{\AgdaFunction{≟}}\AgdaSpace{}%
\AgdaBound{l}\AgdaSpace{}%
\AgdaOperator{\AgdaFunction{⌋}}\<%
\\
\>[0]\AgdaFunction{eval}\AgdaSpace{}%
\AgdaSymbol{\_}%
\>[21]\AgdaInductiveConstructor{invalid}%
\>[36]\AgdaSymbol{=}\AgdaSpace{}%
\AgdaInductiveConstructor{false}\<
We next define the policy language as follows:
\begin{code} \small \>[0]\AgdaKeyword{data}\AgdaSpace{}%
\AgdaDatatype{Policy}\AgdaSpace{}%
\AgdaSymbol{:}\AgdaSpace{}%
\AgdaPrimitiveType{Set₁}\AgdaSpace{}%
\AgdaKeyword{where}\<%
\\
\>[0][@{}l@{\AgdaIndent{0}}]%
\>[2]\AgdaInductiveConstructor{reject}%
\>[13]\AgdaSymbol{:}\AgdaSpace{}%
\AgdaDatatype{Policy}\<%
\\
\>[2]\AgdaInductiveConstructor{decrPrefBy}%
\>[13]\AgdaSymbol{:}\AgdaSpace{}%
\AgdaDatatype{ℕ}\AgdaSpace{}%
\AgdaSymbol{→}\AgdaSpace{}%
\AgdaDatatype{Policy}\<%
\\
\>[2]\AgdaInductiveConstructor{addComm}%
\>[13]\AgdaSymbol{:}\AgdaSpace{}%
\AgdaFunction{Community}\AgdaSpace{}%
\AgdaSymbol{→}\AgdaSpace{}%
\AgdaDatatype{Policy}\<%
\\
\>[2]\AgdaInductiveConstructor{delComm}%
\>[13]\AgdaSymbol{:}\AgdaSpace{}%
\AgdaFunction{Community}\AgdaSpace{}%
\AgdaSymbol{→}\AgdaSpace{}%
\AgdaDatatype{Policy}\<%
\\
\>[2]\AgdaInductiveConstructor{inflate}%
\>[13]\AgdaSymbol{:}\AgdaSpace{}%
\AgdaDatatype{ℕ}\AgdaSpace{}%
\AgdaSymbol{→}\AgdaSpace{}%
\AgdaDatatype{Policy}\<%
\\
\>[2]\AgdaInductiveConstructor{\_;\_}%
\>[13]\AgdaSymbol{:}\AgdaSpace{}%
\AgdaDatatype{Policy}\AgdaSpace{}%
\AgdaSymbol{→}\AgdaSpace{}%
\AgdaDatatype{Policy}\AgdaSpace{}%
\AgdaSymbol{→}\AgdaSpace{}%
\AgdaDatatype{Policy}\<%
\\
\>[2]\AgdaInductiveConstructor{if\_then\_}%
\>[13]\AgdaSymbol{:}\AgdaSpace{}%
\AgdaDatatype{Condition}\AgdaSpace{}%
\AgdaSymbol{→}\AgdaSpace{}%
\AgdaDatatype{Policy}\AgdaSpace{}%
\AgdaSymbol{→}\AgdaSpace{}%
\AgdaDatatype{Policy}\<%
\end{code}
The semantics of each type of policy are defined by the function that applies policies to path-weights:
\begin{code} \small \>[0]\AgdaFunction{apply}\AgdaSpace{}%
\AgdaSymbol{:}\AgdaSpace{}%
\AgdaDatatype{Policy}\AgdaSpace{}%
\AgdaSymbol{→}\AgdaSpace{}%
\AgdaDatatype{PathWeight}\AgdaSpace{}%
\AgdaSymbol{→}\AgdaSpace{}%
\AgdaDatatype{PathWeight}\<%
\\
\>[0]\AgdaFunction{apply}\AgdaSpace{}%
\AgdaSymbol{p}%
\>[21]\AgdaInductiveConstructor{invalid}%
\>[37]\AgdaSymbol{=}\AgdaSpace{}%
\AgdaInductiveConstructor{invalid}\<%
\\
\>[0]\AgdaFunction{apply}\AgdaSpace{}%
\AgdaInductiveConstructor{reject}%
\>[21]\AgdaSymbol{x}%
\>[37]\AgdaSymbol{=}\AgdaSpace{}%
\AgdaInductiveConstructor{invalid}\<%
\\
\>[0]\AgdaFunction{apply}\AgdaSpace{}%
\AgdaSymbol{(}\AgdaInductiveConstructor{decrPrefBy}\AgdaSpace{}%
\AgdaBound{v}\AgdaSymbol{)}%
\>[21]\AgdaSymbol{(}\AgdaInductiveConstructor{valid}\AgdaSpace{}%
\AgdaBound{l}\AgdaSpace{}%
\AgdaBound{cs}\AgdaSpace{}%
\AgdaBound{p}\AgdaSymbol{)}%
\>[37]\AgdaSymbol{=}\AgdaSpace{}%
\AgdaInductiveConstructor{valid}\AgdaSpace{}%
\AgdaSymbol{(}\AgdaBound{l}\AgdaSpace{}%
\AgdaPrimitive{-}\AgdaSpace{}%
\AgdaBound{v}\AgdaSymbol{)}\AgdaSpace{}%
\AgdaBound{cs}\AgdaSpace{}%
\AgdaBound{p}\<%
\\
\>[0]\AgdaFunction{apply}\AgdaSpace{}%
\AgdaSymbol{(}\AgdaInductiveConstructor{addComm}\AgdaSpace{}%
\AgdaBound{c}\AgdaSymbol{)}%
\>[21]\AgdaSymbol{(}\AgdaInductiveConstructor{valid}\AgdaSpace{}%
\AgdaBound{l}\AgdaSpace{}%
\AgdaBound{cs}\AgdaSpace{}%
\AgdaBound{p}\AgdaSymbol{)}%
\>[37]\AgdaSymbol{=}\AgdaSpace{}%
\AgdaInductiveConstructor{valid}\AgdaSpace{}%
\AgdaBound{l}\AgdaSpace{}%
\AgdaSymbol{(}\AgdaFunction{add}%
\AgdaSpace{}\AgdaBound{c}\AgdaSpace{}%
\AgdaBound{cs}\AgdaSymbol{)}\AgdaSpace{}%
\AgdaBound{p}\<%
\\
\>[0]\AgdaFunction{apply}\AgdaSpace{}%
\AgdaSymbol{(}\AgdaInductiveConstructor{delComm}\AgdaSpace{}%
\AgdaBound{c}\AgdaSymbol{)}%
\>[21]\AgdaSymbol{(}\AgdaInductiveConstructor{valid}\AgdaSpace{}%
\AgdaBound{l}\AgdaSpace{}%
\AgdaBound{cs}\AgdaSpace{}%
\AgdaBound{p}\AgdaSymbol{)}%
\>[37]\AgdaSymbol{=}\AgdaSpace{}%
\AgdaInductiveConstructor{valid}\AgdaSpace{}%
\AgdaBound{l}\AgdaSpace{}%
\AgdaSymbol{(}\AgdaFunction{remove}%
\>[56]\AgdaBound{c}\AgdaSpace{}%
\AgdaBound{cs}\AgdaSymbol{)}\AgdaSpace{}%
\AgdaBound{p}\<%
\\
\>[0]\AgdaFunction{apply}\AgdaSpace{}%
\AgdaSymbol{(}\AgdaInductiveConstructor{inflate}\AgdaSpace{}%
\AgdaBound{n}\AgdaSymbol{)}%
\>[21]\AgdaSymbol{(}\AgdaInductiveConstructor{valid}\AgdaSpace{}%
\AgdaBound{l}\AgdaSpace{}%
\AgdaBound{cs}\AgdaSpace{}%
\AgdaBound{p}\AgdaSymbol{)}%
\>[37]\AgdaSymbol{=}\AgdaSpace{}%
\AgdaInductiveConstructor{valid}\AgdaSpace{}%
\AgdaBound{l}\AgdaSpace{}%
\AgdaBound{cs}\AgdaSpace{}%
\AgdaSymbol{(}\AgdaFunction{inflate}\AgdaSpace{}%
\AgdaBound{p}\AgdaSpace{}%
\AgdaBound{n}\AgdaSymbol{)}\<%
\\
\>[0]\AgdaFunction{apply}\AgdaSpace{}%
\AgdaSymbol{(}\AgdaBound{p}\AgdaSpace{}%
\AgdaInductiveConstructor{;}\AgdaSpace{}%
\AgdaBound{q}\AgdaSymbol{)}%
\>[21]\AgdaBound{x}%
\>[37]\AgdaSymbol{=}\AgdaSpace{}%
\AgdaFunction{apply}\AgdaSpace{}%
\AgdaBound{q}\AgdaSpace{}%
\AgdaSymbol{(}\AgdaFunction{apply}\AgdaSpace{}%
\AgdaBound{p}\AgdaSpace{}%
\AgdaBound{x}\AgdaSymbol{)}\<%
\\
\>[0]\AgdaFunction{apply}\AgdaSpace{}%
\AgdaSymbol{(}\AgdaInductiveConstructor{if}\AgdaSpace{}%
\AgdaBound{c}\AgdaSpace{}%
\AgdaInductiveConstructor{then}\AgdaSpace{}%
\AgdaBound{p}\AgdaSymbol{)}%
\>[21]\AgdaBound{x}%
\>[37]\AgdaSymbol{=}\AgdaSpace{}%
\AgdaFunction{if}%
\AgdaSpace{}\AgdaSymbol{(}%
\AgdaFunction{eval}\AgdaSpace{}%
\AgdaBound{c}\AgdaSpace{}%
\AgdaBound{x}\AgdaSymbol{)}\<%
\\
\>[10]\AgdaSpace{}\AgdaSpace{}\AgdaSpace{}\AgdaSpace{}
\AgdaSpace{}%
\AgdaFunction{then}\AgdaSpace{}%
\AgdaSymbol{(}\AgdaFunction{apply}\AgdaSpace{}%
\AgdaBound{p}\AgdaSpace{}%
\AgdaBound{x}\AgdaSymbol{)}\AgdaSpace{}%
\AgdaFunction{else}\AgdaSpace{}%
\AgdaBound{x}\<%
\end{code}
where the \AgdaFunction{inflate} function prepends $n$ copies of the path's source on the front of the path. Note that since the policy language provides no way of increasing a path-weight's local preference or decrease the length of its path, it is not possible to define a non-increasing policy.

Agda is based on type-theory rather than set-theory, and therefore unfortunately we cannot define a set of functions directly. Instead we define the type of policy functions $\extension_{ij}$ for each edge $(i,j)$. Concretely there exists such a policy function for every policy:
\begin{code} \small \>[0]\AgdaKeyword{data}\AgdaSpace{}%
\AgdaDatatype{PolicyFunction}\AgdaSpace{}%
\AgdaSymbol{\{}\AgdaBound{n}\AgdaSymbol{\}}\AgdaSpace{}%
\AgdaSymbol{(}\AgdaBound{i}\AgdaSpace{}%
\AgdaBound{j}\AgdaSpace{}%
\AgdaSymbol{:}\AgdaSpace{}%
\AgdaDatatype{Node}\AgdaSpace{}%
\AgdaBound{n}\AgdaSymbol{)}\AgdaSpace{}%
\AgdaSymbol{:}\AgdaSpace{}%
\AgdaPrimitiveType{Set₁}\AgdaSpace{}%
\AgdaKeyword{where}\<%
\\
\>[0][@{}l@{\AgdaIndent{0}}]%
\>[2]\AgdaInductiveConstructor{ext}\AgdaSpace{}%
\AgdaSymbol{:}\AgdaSpace{}%
\AgdaDatatype{Policy}\AgdaSpace{}%
\AgdaSymbol{→}\AgdaSpace{}%
\AgdaDatatype{Extension}\AgdaSpace{}%
\AgdaBound{i}\AgdaSpace{}%
\AgdaBound{j}\<
and the application of a policy function is defined as:
\begin{code} \small \>[0]\AgdaOperator{\AgdaFunction{\_▷\_}}\AgdaSpace{}%
\AgdaSymbol{:}\AgdaSpace{}%
\AgdaDatatype{PolicyFunction}\AgdaSpace{}%
\AgdaBound{i}\AgdaSpace{}%
\AgdaBound{j}\AgdaSpace{}%
\AgdaSymbol{→}\AgdaSpace{}%
\AgdaDatatype{PathWeight}\AgdaSpace{}%
\AgdaSymbol{→}\AgdaSpace{}%
\AgdaDatatype{PathWeight}\<%
\\
\>[0]\AgdaSymbol{(}\AgdaInductiveConstructor{ext}\AgdaSpace{}%
\AgdaBound{pol}\AgdaSymbol{)}\AgdaSpace{}%
\AgdaOperator{\AgdaFunction{▷}}\AgdaSpace{}%
\>[28]\AgdaInductiveConstructor{invalid}%
\>[43]\AgdaSymbol{=}\AgdaSpace{}%
\AgdaInductiveConstructor{invalid}\<%
\\
\>[0]\AgdaSymbol{(}\AgdaInductiveConstructor{ext}\AgdaSpace{}%
\AgdaBound{pol}\AgdaSymbol{)}\AgdaSpace{}%
\AgdaOperator{\AgdaFunction{▷}}\AgdaSpace{}%
\>[28]\AgdaSymbol{(}\AgdaInductiveConstructor{valid}\AgdaSpace{}%
\AgdaBound{l}\AgdaSpace{}%
\AgdaBound{cs}\AgdaSpace{}%
\AgdaBound{p}\AgdaSymbol{)}%
\>[43]\AgdaKeyword{with}\AgdaSpace{}%
\AgdaSymbol{(}%
\AgdaBound{i}\AgdaSpace{}%
\AgdaOperator{\AgdaInductiveConstructor{,}}\AgdaSpace{}%
\AgdaBound{j}\AgdaSymbol{)}\AgdaSpace{}%
\AgdaOperator{\AgdaFunction{⇿?}}\AgdaSpace{}%
\AgdaBound{p}\AgdaSpace{}%
\AgdaSymbol{|}\AgdaSpace{}%
\AgdaBound{i}\AgdaSpace{}%
\AgdaOperator{\AgdaFunction{∈?}}\AgdaSpace{}%
\AgdaBound{p}\<%
\\
\>[0]\AgdaSymbol{...}\AgdaSpace{}%
\AgdaSymbol{|}\AgdaSpace{}%
\AgdaInductiveConstructor{no}%
\>[10]\AgdaSymbol{\_}%
\>[16]\AgdaSymbol{|}\AgdaSpace{}%
\AgdaSymbol{\_}%
\>[27]\AgdaSymbol{=}\AgdaSpace{}%
\AgdaInductiveConstructor{invalid}\<%
\\
\>[0]\AgdaSymbol{...}\AgdaSpace{}%
\AgdaSymbol{|}\AgdaSpace{}%
\AgdaInductiveConstructor{yes}\AgdaSpace{}%
\AgdaSymbol{\_}%
\>[16]\AgdaSymbol{|}\AgdaSpace{}%
\AgdaInductiveConstructor{yes}\AgdaSpace{}%
\AgdaSymbol{\_}%
\>[27]\AgdaSymbol{=}\AgdaSpace{}%
\AgdaInductiveConstructor{invalid}\<%
\\
\>[0]\AgdaSymbol{...}\AgdaSpace{}%
\AgdaSymbol{|}\AgdaSpace{}%
\AgdaInductiveConstructor{yes}\AgdaSpace{}%
\AgdaBound{ij⇿p}%
\>[16]\AgdaSymbol{|}\AgdaSpace{}%
\AgdaInductiveConstructor{no}%
\>[22]\AgdaBound{i∈p}%
\>[27]\AgdaSymbol{=}\AgdaSpace{}%
\AgdaFunction{apply}\AgdaSpace{}%
\AgdaBound{pol}\AgdaSpace{}%
\AgdaSymbol{(}\AgdaInductiveConstructor{valid}\AgdaSpace{}%
\AgdaBound{l}\AgdaSpace{}%
\AgdaBound{cs}\AgdaSpace{}%
\AgdaSymbol{((}%
\AgdaBound{i}\AgdaSpace{}%
\AgdaOperator{\AgdaInductiveConstructor{,}}\AgdaSpace{}%
\AgdaBound{j}\AgdaSymbol{)}\AgdaSpace{}%
\AgdaOperator{\AgdaInductiveConstructor{∷}}\AgdaSpace{}%
\AgdaBound{p}\AgdaSymbol{))}\<
where \AgdaSymbol{(}\AgdaBound{i}\AgdaSpace{}%
\AgdaInductiveConstructor{,}\AgdaSpace{}%
\AgdaBound{j}\AgdaSymbol{)}\AgdaSpace{}%
\AgdaFunction{⇿?}\AgdaSpace{}%
\AgdaBound{p} tests if the edge $(i,j)$ is a valid extension of path $p$ (i.e. if $j = src(p)$), and \AgdaBound{i}\AgdaSpace{}%
\AgdaFunction{∉?}\AgdaSpace{}%
\AgdaBound{p} tests whether or not~$i$ already exists in $p$ (i.e. if the resulting path would loop).

\matthewrevision{
The constantly invalid functions are those that always use the \AgdaInductiveConstructor{reject} policy:
\begin{code} \small \>[0]\AgdaFunction{f∞}\AgdaSpace{}%
\AgdaSymbol{:}\AgdaSpace{}%
\AgdaSymbol{∀}\AgdaSpace{}%
\AgdaSymbol{(}\AgdaBound{i}\AgdaSpace{}%
\AgdaBound{j}\AgdaSpace{}%
\AgdaSymbol{:}\AgdaSpace{}%
\AgdaDatatype{Fin}\AgdaSpace{}%
\AgdaGeneralizable{n}\AgdaSymbol{)}\AgdaSpace{}%
\AgdaSymbol{→}\AgdaSpace{}%
\AgdaDatatype{Extension}\AgdaSpace{}%
\AgdaBound{i}\AgdaSpace{}%
\AgdaBound{j}\<%
\\
\>[0]\AgdaFunction{f∞}\AgdaSpace{}%
\AgdaBound{i}\AgdaSpace{}%
\AgdaBound{j}\AgdaSpace{}%
\AgdaSymbol{=}\AgdaSpace{}%
\AgdaInductiveConstructor{ext}\AgdaSpace{}%
\AgdaInductiveConstructor{reject}\<
The $path$ function from path-weights to simple paths required by a path algebra can be defined as:
\begin{code} \small \>[0]\AgdaFunction{path}\AgdaSpace{}%
\AgdaSymbol{:}\AgdaSpace{}%
\AgdaDatatype{PathWeight}\AgdaSpace{}%
\AgdaSymbol{→}\AgdaSpace{}%
\AgdaDatatype{Path}\AgdaSpace{}\<%
\\
\>[0]\AgdaFunction{path}\AgdaSpace{}%
\AgdaInductiveConstructor{invalid}%
\>[19]\AgdaSymbol{=}\AgdaSpace{}%
\AgdaInductiveConstructor{⊥}\<%
\\
\>[0]\AgdaFunction{path}\AgdaSpace{}%
\AgdaSymbol{(}\AgdaInductiveConstructor{valid}\AgdaSpace{}%
\AgdaSymbol{\_}\AgdaSpace{}%
\AgdaSymbol{\_}\AgdaSpace{}%
\AgdaBound{p}\AgdaSymbol{)}\AgdaSpace{}%
\>[19]\AgdaSymbol{=}\AgdaSpace{}%
\AgdaFunction{deflate}\AgdaSpace{}\AgdaBound{p}\<%
\end{code}
where \AgdaFunction{deflate} strips from the path any consecutive duplicate routers that might have been introduced by \AgdaFunction{inflate}. The reader might have noted at this point that all our proofs in this paper work over simple (i.e. loop free) paths, yet this algebra internally stores non-simple inflated paths. However this apparent contradiction is resolved by observing that, as was discussed in Section~\ref{sec:model-contributions},  the definition of a path algebra only requires that we can extract a simple path from a path-weight, rather than requiring that the path-weight stores a simple path. This is yet another demonstration of the power of the path algebra abstraction. 

\matthewrevision{By combining the primitives together, we can now construct a \AgdaFunction{RawRoutingAlgebra} object for the protocol:
\begin{code} \small \>[0]\AgdaFunction{bgpLite}\AgdaSpace{}%
\AgdaSymbol{:}\AgdaSpace{}%
\AgdaRecord{RawRoutingAlgebra}\AgdaSpace{}%
\AgdaSymbol{\AgdaUnderscore{}}\AgdaSpace{}%
\AgdaSymbol{\AgdaUnderscore{}}\AgdaSpace{}%
\AgdaSymbol{\AgdaUnderscore{}}\<%
\\
\>[0]\AgdaFunction{bgpLite}%
\>[359I]\AgdaSymbol{=}\AgdaSpace{}%
\AgdaKeyword{record}\<%
\\
\>[.][@{}l@{}]%
\>[2]\AgdaSpace{}\AgdaSpace{}\AgdaSpace{}\AgdaSpace{}\AgdaSymbol{\{}\AgdaSpace{}%
\AgdaField{PathWeight}%
\>[23]\AgdaSymbol{=}\AgdaSpace{}%
\AgdaDatatype{PathWeight}\<%
\\
\>[2]\AgdaSpace{}\AgdaSpace{}\AgdaSpace{}\AgdaSpace{}\AgdaSymbol{;}\AgdaSpace{}%
\AgdaField{Step}%
\>[23]\AgdaSymbol{=}\AgdaSpace{}%
\AgdaDatatype{Extension}\<%
\\
%
%
\>[2]\AgdaSpace{}\AgdaSpace{}\AgdaSpace{}\AgdaSpace{}\AgdaSymbol{;}\AgdaSpace{}%
\AgdaOperator{\AgdaField{\AgdaUnderscore{}⊕\AgdaUnderscore{}}}%
\>[23]\AgdaSymbol{=}\AgdaSpace{}%
\AgdaOperator{\AgdaFunction{\AgdaUnderscore{}⊕\AgdaUnderscore{}}}\<%
\\
\>[2]\AgdaSpace{}\AgdaSpace{}\AgdaSpace{}\AgdaSpace{}\AgdaSymbol{;}\AgdaSpace{}%
\AgdaOperator{\AgdaField{\AgdaUnderscore{}▷\AgdaUnderscore{}}}%
\>[23]\AgdaSymbol{=}\AgdaSpace{}%
\AgdaOperator{\AgdaFunction{\AgdaUnderscore{}▷\AgdaUnderscore{}}}\<%
\\
\>[2]\AgdaSpace{}\AgdaSpace{}\AgdaSpace{}\AgdaSpace{}\AgdaSymbol{;}\AgdaSpace{}%
\AgdaField{0\#}%
\>[23]\AgdaSymbol{=}\AgdaSpace{}%
\AgdaFunction{0\#}\<%
\\
\>[2]\AgdaSpace{}\AgdaSpace{}\AgdaSpace{}\AgdaSpace{}\AgdaSymbol{;}\AgdaSpace{}%
\AgdaField{∞\#}%
\>[23]\AgdaSymbol{=}\AgdaSpace{}%
\AgdaFunction{∞\#}\<%
\\
\>[2]\AgdaSpace{}\AgdaSpace{}\AgdaSpace{}\AgdaSpace{}\AgdaSymbol{;}\AgdaSpace{}%
\AgdaField{f∞}%
\>[23]\AgdaSymbol{=}\AgdaSpace{}%
\AgdaFunction{f∞}\<%
\\
\>[2]\AgdaSpace{}\AgdaSpace{}\AgdaSpace{}\AgdaSpace{}\AgdaSymbol{;}\AgdaSpace{}%
...
\\
\>[2]\AgdaSpace{}\AgdaSpace{}\AgdaSpace{}\AgdaSpace{}\AgdaSymbol{\}}\<
Agda is a little more strict than pen-and-paper mathematics and so the ``...'' hide a couple of additional extra fields that are required in the formalisation but are left in implicit in the definitions in this paper. For example, Agda requires us to give an explicit definition for what we mean when we say that two path-weights are equal.

Using the algebra we can automatically construct the asynchronous state function from Definition~\ref{def:aStateFun} as follows:
\begin{code} \small \>[0]\AgdaFunction{δ}\AgdaSpace{}%
\AgdaSymbol{:}\AgdaSpace{}%
\AgdaFunction{Network}\AgdaSpace{}%
\AgdaBound{n}\AgdaSpace{}%
\AgdaSymbol{→}\AgdaSpace{}%
\AgdaRecord{Schedule}\AgdaSpace{}%
\AgdaBound{n}\AgdaSpace{}%
\AgdaSymbol{→}\AgdaSpace{}%
\AgdaFunction{𝕋}\AgdaSpace{}%
\AgdaSymbol{→}\AgdaSpace{}%
\AgdaFunction{RoutingMatrix}\AgdaSpace{}%
\AgdaBound{n}\<%
\\
\>[0]\AgdaFunction{δ}\AgdaSpace{}%
\AgdaSymbol{\{}\AgdaBound{n}\AgdaSymbol{\}}\AgdaSpace{}%
\AgdaBound{N}\AgdaSpace{}%
\AgdaBound{ψ}\AgdaSpace{}%
\AgdaSymbol{=}\AgdaSpace{}%
\AgdaFunction{AsyncRouting.δ}\AgdaSpace{}%
\AgdaFunction{bgpLite}\AgdaSpace{}%
\AgdaBound{N}\AgdaSpace{}%
\AgdaBound{ψ}\AgdaSpace{}%
\AgdaSymbol{(}\AgdaFunction{I}\AgdaSpace{}%
\AgdaBound{n}\AgdaSymbol{)}\<
\noindent where \AgdaBound{n} is the number of routers in the protocol. This function is fully computable, so if you provide a network, a schedule and a time $t$ it will compute the \AgdaFunction{RoutingMatrix} that represents the current protocol state.

In order to prove that \AgdaFunction{δ} is always convergent, we would like to apply our formalisation of Theorem~\ref{thm:pv-strictly-increasing}:
\begin{code} \small \>[0]\AgdaFunction{incrPaths⇒convergent}\AgdaSpace{}%
\AgdaSymbol{:}%
\>[92I]\AgdaRecord{IsRoutingAlgebra}\AgdaSpace{}%
\AgdaBound{algebra}\AgdaSpace{}%
\AgdaSymbol{→}\<%
\\
\>[.][@{}l@{}]\<[92I]%
\>[36]\AgdaFunction{IsPathAlgebra}\AgdaSpace{}%
\AgdaBound{algebra}\AgdaSpace{}%
\AgdaSymbol{→}\<%
\\
\>[.][@{}l@{}]\<[92I]%
\>[36]\AgdaFunction{IsIncreasing}\AgdaSpace{}%
\AgdaBound{algebra}\AgdaSpace{}%
\AgdaSymbol{→}\<%
\\
\>[36]\AgdaFunction{Convergent}\AgdaSpace{}%
\AgdaBound{algebra}\AgdaSpace{}%
\<%
\\
\>[0]\AgdaFunction{incrPaths⇒convergent}\AgdaSpace{}%
\AgdaSymbol{=}\AgdaSpace{}%
\AgdaBound{...}\<
In order to apply it, we need to prove that the algebra \AgdaFunction{bgpLite} is a routing algebra, a path algebra and increasing:
\begin{code} \small \>[0]\AgdaFunction{bgpLite-isRoutingAlgebra}\AgdaSpace{}%
\AgdaSymbol{:}\AgdaSpace{}%
\AgdaRecord{IsRoutingAlgebra}\AgdaSpace{}%
\AgdaFunction{bgpLite}\<%
\\
\>[0]\AgdaFunction{bgpLite-isRoutingAlgebra}\AgdaSpace{}%
\AgdaSymbol{=}\AgdaSpace{}%
\AgdaKeyword{record}\<%
\\
\>[0][@{}l@{\AgdaIndent{0}}]%
\>[2]\AgdaSymbol{\{}\AgdaSpace{}%
\AgdaField{⊕-sel}%
\>[20]\AgdaSymbol{=}\AgdaSpace{}%
\AgdaFunction{...}\<%
\\
\>[2]\AgdaSymbol{;}\AgdaSpace{}%
\AgdaField{⊕-comm}%
\>[20]\AgdaSymbol{=}\AgdaSpace{}%
\AgdaFunction{...}\<%
\\
\>[2]\AgdaSymbol{;}\AgdaSpace{}%
\AgdaField{⊕-assoc}%
\>[20]\AgdaSymbol{=}\AgdaSpace{}%
\AgdaFunction{...}\<%
\\
\>[2]\AgdaSymbol{;}\AgdaSpace{}%
\AgdaField{⊕-zeroʳ}%
\>[20]\AgdaSymbol{=}\AgdaSpace{}%
\AgdaFunction{...}\<%
\\
\>[2]\AgdaSymbol{;}\AgdaSpace{}%
\AgdaField{⊕-identityʳ}%
\>[20]\AgdaSymbol{=}\AgdaSpace{}%
\AgdaFunction{...}\<%
\\
\>[2]\AgdaSymbol{;}\AgdaSpace{}%
\AgdaField{▷-fixedPoint}\AgdaSpace{}%
\>[20]\AgdaSymbol{=}\AgdaSpace{}%
\AgdaFunction{...}\<%
\\
\>[2]\AgdaSymbol{;}\AgdaSpace{}%
\AgdaField{f∞-reject}\AgdaSpace{}%
\>[20]\AgdaSymbol{=}\AgdaSpace{}%
\AgdaFunction{...}\<%
\\
\>[2]\AgdaSymbol{\}}\<
\begin{code} \small \>[0]\AgdaFunction{bgpLite-isPathAlgebra}\AgdaSpace{}%
\AgdaSymbol{:}\AgdaSpace{}%
\AgdaRecord{IsPathAlgebra}\AgdaSpace{}%
\AgdaFunction{bgpLite}\<%
\\
\>[0]\AgdaFunction{bgpLite-isPathAlgebra}\AgdaSpace{}%
\AgdaSymbol{=}\AgdaSpace{}%
\AgdaKeyword{record}\<%
\\
\>[0][@{}l@{\AgdaIndent{0}}]%
\>[2]\AgdaSymbol{\{}\AgdaSpace{}%
\AgdaField{path}%
\>[21]\AgdaSymbol{=}\AgdaSpace{}%
\AgdaFunction{path}\<%
\\
\>[2]\AgdaSymbol{;}\AgdaSpace{}%
\AgdaField{r≈0⇒path[r]≈[]}%
\>[21]\AgdaSymbol{=}\AgdaSpace{}%
\AgdaFunction{...}\<%
\\
\>[2]\AgdaSymbol{;}\AgdaSpace{}%
\AgdaField{r≈∞⇒path[r]≈∅}%
\>[21]\AgdaSymbol{=}\AgdaSpace{}%
\AgdaFunction{...}\<%
\\
\>[2]\AgdaSymbol{;}\AgdaSpace{}%
\AgdaField{path[r]≈∅⇒r≈∞}%
\>[21]\AgdaSymbol{=}\AgdaSpace{}%
\AgdaFunction{...}\<%
\\
\>[2]\AgdaSymbol{;}\AgdaSpace{}%
\AgdaField{path-reject}%
\>[21]\AgdaSymbol{=}\AgdaSpace{}%
\AgdaFunction{...}\<%
\\
\>[2]\AgdaSymbol{;}\AgdaSpace{}%
\AgdaField{path-accept}%
\>[21]\AgdaSymbol{=}\AgdaSpace{}%
\AgdaFunction{...}\<%
\\
\>[2]\AgdaSymbol{;}\AgdaSpace{}%
\AgdaField{...}%
\\
\>[2]\AgdaSymbol{\}}\<
\begin{code} \small \>[0]\AgdaFunction{bgpLite-isIncreasing}\AgdaSpace{}%
\AgdaSymbol{:}\AgdaSpace{}%
\AgdaFunction{IsIncreasing}\AgdaSpace{}%
\AgdaFunction{bgpLite}\<%
\\
\>[0]\AgdaFunction{bgpLite-isIncreasing}\AgdaSpace{}%
\>[42]\AgdaSymbol{=}\AgdaSpace{}%
\AgdaFunction{...}\<
\noindent We omit the formalisations of the 13 proofs required above, as they are too long to reproduce here and can be found in the library. Some of them are very easy. For example, \AgdaFunction{r≈0⇒path[r]≈[]} is trivially true by the definitions of \AgdaFunction{0\#} and \AgdaFunction{path} above and therefore the proof is simply an appeal to the reflexivity of equality:
\begin{code} \small \>[0]\AgdaFunction{r≈0⇒path[r]≈[]}\AgdaSpace{}%
\AgdaSymbol{:}\AgdaSpace{}%
\AgdaGeneralizable{r}\AgdaSpace{}%
\AgdaOperator{\AgdaDatatype{≡}}\AgdaSpace{}%
\AgdaFunction{0\#}\AgdaSpace{}%
\AgdaSymbol{→}\AgdaSpace{}%
\AgdaFunction{path}\AgdaSpace{}%
\AgdaGeneralizable{r}\AgdaSpace{}%
\AgdaOperator{\AgdaDatatype{≡}}\AgdaSpace{}%
\AgdaInductiveConstructor{valid}\AgdaSpace{}%
\AgdaInductiveConstructor{[]}\<%
\\
\>[0]\AgdaFunction{r≈0⇒path[r]≈[]}\AgdaSpace{}%
\AgdaInductiveConstructor{refl}\AgdaSpace{}%
\AgdaSymbol{=}\AgdaSpace{}%
\AgdaInductiveConstructor{refl}\<
Others, notably associativity, commutativity and increasingness, are more difficult.
However, all of these proofs are trivial in comparison to the proof of Theorem~\ref{thm:pv-strictly-increasing}, which we can now apply for free as follows:
\begin{code} \small \>[0]\AgdaFunction{bgpLite-convergent}\AgdaSpace{}%
\AgdaSymbol{:}\AgdaSpace{}%
\AgdaFunction{Convergent}\AgdaSpace{}%
\AgdaFunction{bgpLite}\<%
\\
\>[0]\AgdaFunction{bgpLite-convergent}\AgdaSpace{}%
\AgdaSymbol{=}\AgdaSpace{}%
\AgdaFunction{incrPaths⇒convergent}\AgdaSpace{}%
\\
\>[2]\AgdaSpace{}\AgdaSpace{}\AgdaSpace{}\AgdaSpace{}\AgdaFunction{bgpLite-isRoutingAlgebra}\AgdaSpace{}%
\\
\>[2]\AgdaSpace{}\AgdaSpace{}\AgdaSpace{}\AgdaSpace{}\AgdaFunction{bgpLite-isPathAlgebra}\AgdaSpace{}%
\\
\>[2]\AgdaSpace{}\AgdaSpace{}\AgdaSpace{}\AgdaSpace{}\AgdaFunction{bgpLite-isIncreasing}\end{code}
\noindent to obtain the final computer-checked proof that the \AgdaFunction{bgpLite} algebra is always convergent}, i.e. from any starting state \AgdaFunction{δ} converges to a unique solution even in the presence of message loss, reordering and duplication. An alternative view of this result is that it is impossible to write policies that interfere with the convergence of this protocol. One final interesting point is that this proof is constructive, i.e. from it you can extract a time at which the protocol is guaranteed to have converged.

There are many other features of BGP that are safely increasing but, for space reasons, are not included in this model. For example, we hypothesise that hierarchical AS paths are possible to add with minor tweaks to the $path$ function and the policy language.

\section{Open questions}
\label{sec:conclusions}

\subsection{Rate of convergence of path-vector protocols}

One aspect of convergence that this paper does not address is how long it takes for convergence to occur. Consider a network of $n$ routers.
It has long been known that path-vector protocols with distributive, strictly increasing policies require $\Theta(n)$ time to converge in the worst-case~\cite{gondran2008graphs,baras2010path}. However, it was only recently proved that non-distributive, strictly increasing path-vector protocols require $\Theta(n^2)$ time to converge in the worst-case~\cite{daggitt2018rate}. Meanwhile, a tight bound for the time required for the worst-case convergence of path-vector protocols in free networks is still not known however. Current results have it lying between $\Omega(n^2)$ and $O(n!)$.

Another question of interest is that it appears that not all non-distributive, strictly increasing algebras require $\Omega(n^2)$ iterations (e.g. the shortest-widest-paths algebra). We suspect that a careful analysis of policy language features might be able to tease apart distinct classes with respect to worst-case convergence time. 

\subsection{Hidden information}
\label{sec:hidden}

Unlike in (external) BGP, in the algebra described in Section~\ref{sec:bgp-lite-protocol}
the local preference attribute is not deleted when exporting a route.
This raises a more general issue for routing protocols that allow information to be hidden.
It is an open question as to whether it is possible to have increasing
algebras with hidden information without requiring global coordination. 

Ensuring increasing policies in today's BGP  may require communicating lost information with some other mechanism such as community values.
Of course only the relative ranking of local preference values assigned within an AS
matter.
For example one AS might use a local preference of 100 for
its most preferred routes, while another could use 2000.
In this context can we ensure increasing policies using only bilateral agreements between neighbouring networks or does it truly require  global coordination?
If the latter, then a political rather than a technical solution is required.

\subsection{Verification of data-center policies}
\timrevision{
BGP is widely used today to implement (private) connectivity
within and between data-centers~\cite{B4:SDN,rfc7938,Abhashkumar2021RunningBI}.
In such an environment the network architects have total control of the global topology and therefore, unlike in inter-domain routing, hidden information is not an issue.
Recent work~\cite{Abhashkumar2021RunningBI}
provides good examples of how conditional policies, combined
with filtering and the manipulation of local preference on routes,
are used at Facebook to achieve design goals in this environment. 
Perhaps tools such as Propane~\cite{propane:2016} could be extended to either ensure that
all policies are strictly increasing, or at the very least provide warnings when they are not.
}

\section{Previous versions of this work}
\label{sec:previous-versions}

A previous version of this work has appeared before in conference proceedings~\cite{daggitt2018asynchronous}. 
This version makes the following new contributions, listed in order of appearance:
\begin{enumerate}
\item in this version, the set of policy functions $\extension$ in a routing algebra in Definition~\ref{def:raw-routing-algebra} are parameterised by edges, which allows us to define a path algebra in Definition~\ref{def:path-algebra} without having it depend on $\A$, the current topology of the network. Although an apparently minor change, this makes the path algebras significantly easier to work with.

\item in this version we use the new dynamic model of asynchronous iterations from~\cite{daggitt2022dynamic} rather than the static model from~\cite{uresin1990parallel} which only captures the dynamics of a single epoch. This is not only a more accurate model as it can capture messages from non-participating nodes from previous epochs, but it also results in the model being fully executable.

\item as a consequence of the previous point, in this version we therefore use the dynamic AMCO convergence results from~\cite{daggitt2022dynamic} instead of the static ultra-metric convergence results from Gurney~\cite{gurney:ultra:metric} used in the previous version.

\item in the previous version we only proved results about strictly increasing algebras (Theorems~\ref{thm:dv-finite-strictly-increasing}~\&~\ref{thm:pv-strictly-increasing}), whereas in this version we include results about free networks (Theorems~\ref{thm:dv-finite-free}~\&~\ref{thm:pv-free}). 
\end{enumerate}
With the exception of Theorems~\ref{thm:dv-finite-free}~\&~\ref{thm:pv-free}, much of this work appears in Chapters~4~\&~5 in the PhD thesis of the first author~\cite{daggitt2019thesis}.

\section{Conflicts of interest}

On behalf of all authors, the corresponding author states that there is no conflict of interest.

\section{Acknowledgements}

Matthew L. Daggitt was supported by an EPSRC Doctoral Training grant and the EPSRC AISEC grant.
\vspace{-2em}

\begin{IEEEbiography}[{\includegraphics[width=1in,height=1.25in,clip,keepaspectratio]{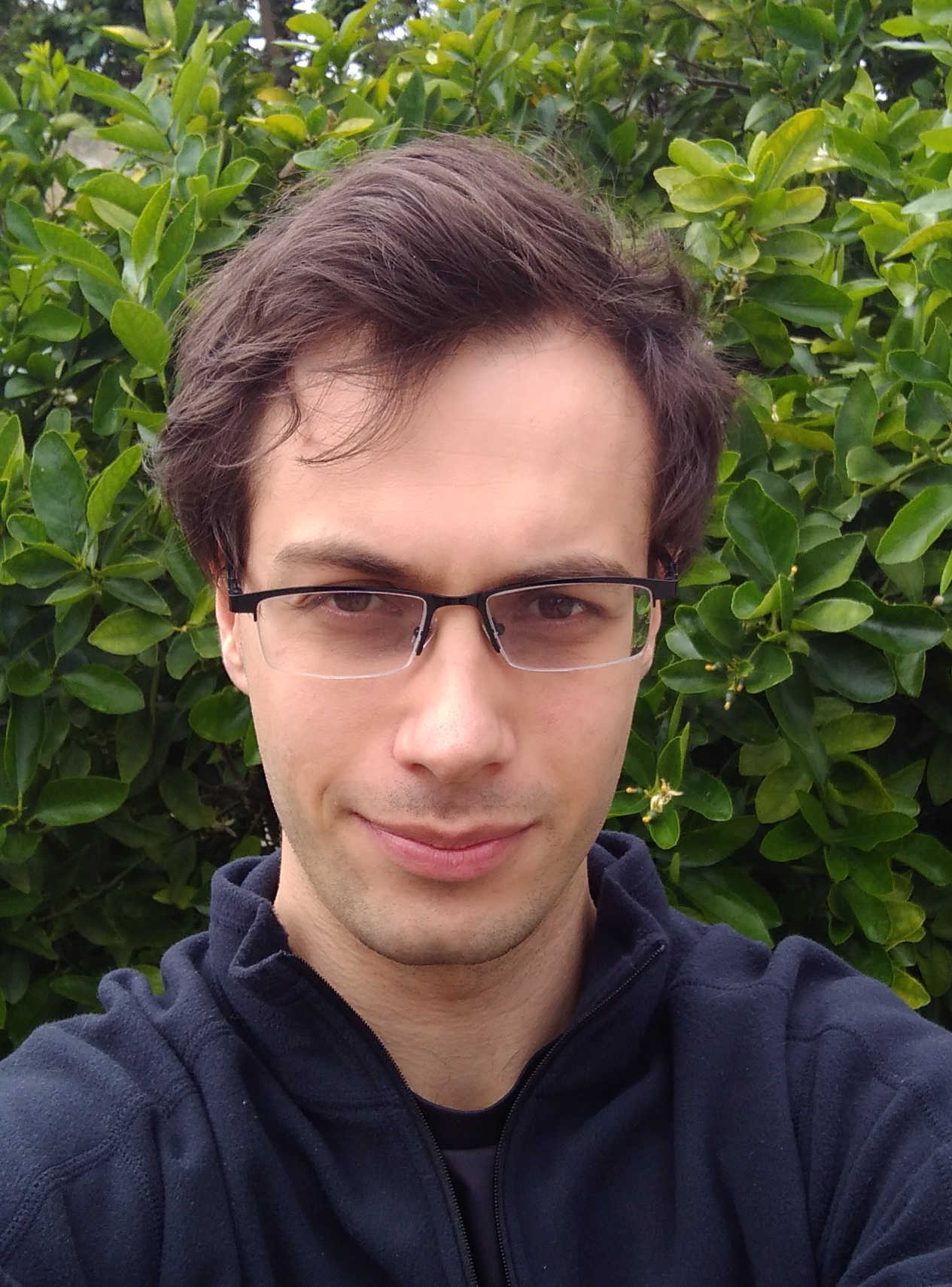}}]{Dr Matthew L. Daggitt} received his PhD degree in computer science from the University of Cambridge in 2018 and is currently working as a post-doc at Heriot-Watt University, Scotland. His main interests lie in the formal analysis and verification of complex systems, such as networking protocols and machine learning algorithms.
\end{IEEEbiography}

\begin{IEEEbiography}[{\includegraphics[width=1in,height=1.25in,clip,keepaspectratio]{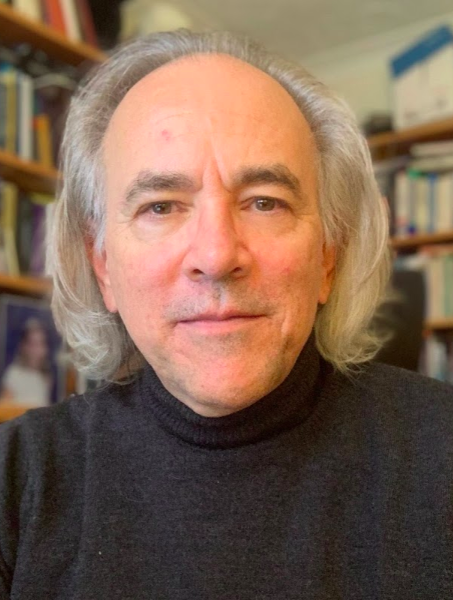}}]{Prof. Timothy G Griffin}
received his B.S. degree in mathematics from the University of Wisconsin, Madison, in 1979, and M.S. and his Ph.D. degrees in computer science from Cornell University, Ithaca, NY, in 1985 and 1988, respectively. Previous experience includes teaching at UNICAMP in Brazil and more than a dozen years in industry at Bell Laboratories, AT\&T Research, and Intel Labs. He joined the Computer Lab at the University of Cambridge in 2005.
\end{IEEEbiography}

\bibliographystyle{IEEEtran}
\bibliography{IEEEabrv,references}

\appendices

\clearpage
\section{Notational glossary} 
\label{app:glossary}

\begin{table}[h]
\newcommand{\na}{N/A}
\begin{tabular}{lll}
\toprule
Notation & Meaning & First defined \\
\midrule
$\allNodes$ & Set of all node labels & Section~\ref{sec:paths-model} \\
$\allEdges$ & Set of all edges & Section~\ref{sec:paths-model} \\
$\paths$ & Set of all simple paths & Section~\ref{sec:paths-model} \\
$\trivialpath$ & The trivial path & Section~\ref{sec:paths-model} \\
$\invalidpath$ & The invalid path & Section~\ref{sec:paths-model} \\
$\pathsaligned$  & Alignment of edges and paths & Section~\ref{sec:paths-model} \\
$\cons$  & Concentation of edges to paths & Section~\ref{sec:paths-model} \\
$i$      & Arbitrary origin node & \na \\
$j$      & Arbitrary destination node & \na \\
$k$      & Arbitrary neighbour node & \na \\
$p$      & Arbitrary path & \na \\
\midrule
$S$ 			& Set of path-weights & Definition~\ref{def:raw-routing-algebra} \\
$\oplus$ 		& Choice operator over path-weights & Definition~\ref{def:raw-routing-algebra} \\
$\extension_{ij}$ & Set of policy functions for edge $(i,j)$ & Definition~\ref{def:raw-routing-algebra} \\
$\allExtensions$ & Set of all policy functions & Definition~\ref{def:raw-routing-algebra} \\
$\trivial$		& Trivial path-weight & Definition~\ref{def:raw-routing-algebra} \\
$\invalid$ 		& Invalid path-weight & Definition~\ref{def:raw-routing-algebra} \\
$\invalidedge$ & The constantly invalid policy & Definition~\ref{def:raw-routing-algebra} \\
$\pathf$ & Function from path-weights to paths & Definition~\ref{def:path-algebra} \\
$\leqWeight$ & Total order over path-weights & Definition~\ref{def:leq-path-weights} \\
$\lessWeight$ & Strict total order over path-weights & Definition~\ref{def:leq-path-weights} \\
$x,y,z$ & Arbitrary path-weights & \na \\
$f$ & Arbitrary policy function & \na \\
\midrule
$\nodes$ & Set of labels of nodes that participate & Section~\ref{sec:synchronous-model} \\
$\tcarrier$  & Set of routing tables ($n$ vectors over $\carrier$) & Section~\ref{sec:synchronous-model} \\
$\mcarrier$ 	& Set of states ($n \times n$ matrices over $S$) & Section~\ref{sec:synchronous-model} \\
$\A$ 			& Arbitrary topology/adjacency matrix & Section~\ref{sec:synchronous-model} \\
$\I$ 			& Identity matrix & Section~\ref{sec:synchronous-model} \\
$\aFun_\A$		& A single sync. iteration with topology $\A$ & Section~\ref{sec:synchronous-model} \\
$\sigma$ 		& Synchronous state function & Definition~\ref{def:sStateFun} \\
$\xt,\yt$		& Arbitrary routing tables & \na \\
$\X, \Y$        & Arbitrary routing states & \na \\
$\X^*$ 			& Arbitrary stable routing state & \na  \\
\midrule
$\timess$		& The set of times of events of interest & Section~\ref{sec:asynchronous-model} \\
$\epochs$       & The set of epochs & Section~\ref{sec:asynchronous-model} \\
$\network$		& An arbitrary network & Section~\ref{sec:asynchronous-model} \\
$\alpha$ 		& Schedule activation function & Definition~\ref{def:schedule} \\
$\beta$ 		& Schedule data flow function & Definition~\ref{def:schedule} \\
$\eta$          & Schedule epoch function & Definition~\ref{def:schedule} \\
$\pi$           & Schedule participants function & Definition~\ref{def:schedule} \\
$\rho$          & Participants function & Section~\ref{sec:asynchronous-model} \\
$\A^{ep}$		& Participation adjacency matrix & Definition~\ref{def:participating-topology} \\
$\delta$ 		& Asynchronous state function & Definition~\ref{def:aStateFun} \\
$\accSet$ & Set of accordant states & Definition~\ref{def:accordant-states} \\
$\configs$ & Set of all configurations & Section~\ref{sec:definition-of-convergence} \\
$\freeConfigs$ & Set of free configurations & Section~\ref{sec:routing-algebra-classes} \\
$t$				& Arbitrary time & \na \\
$e$				& Arbitrary epoch & \na \\
$p$				& Arbitrary set of participants & \na \\
\midrule
$\assignments$ & Set of assignments & Definition~\ref{def:assignments} \\
$\lessAss$ & Strict partial order over assignments & Definition~\ref{def:assignment-preference} \\
$\extendedBy{\A}$ & Extended by relation over assignments & Definition~\ref{def:assignment-extension} \\
$\threatenedBy{\A}$ & Threatens relation over assignments & Definition~\ref{def:assignment-threatens} \\
$a, b, c$ & Arbitrary assignments & \na \\
\midrule
$\proofOrder$ & Dislodging order over assignments & Appendix~\ref{app:distance_vector_proof} \\
$h^{ep}$ & Height of assignments & Appendix~\ref{app:distance_vector_proof} \\
$r^{ep}_i$ & Dissimilarity over path-weights & Appendix~\ref{app:distance_vector_proof} \\
$d^{ep}_i$ & Dissimilarity over routing table states & Definition~\ref{def:amco} \\
$D^{ep}$   & Dissimilarity over routing states & Definition~\ref{def:amco} \\
\bottomrule
\end{tabular}
\end{table}

\clearpage
\section{Proof of Theorem~\ref{thm:dv-finite-free} - distance-vector protocols}
\label{app:distance_vector_proof}

This appendix contains the proof that distance-vector protocols over a finite routing algebra are convergent whenever the network topology is free.

Assume that we have a finite routing algebra $\genericRoutingAlgebra$ and some network $\network$. We will prove that asynchronous model of the protocol $\delta$ is convergent whenever the network topology $\A^{ep}$ is free, by showing that the set of synchronous iterations $F_{\A^{ep}}$ is an AMCO over $\freeConfigs$. The required result then follows by Theorem~\ref{thm:amco-convergent}.

\subsection{Constructing the dissimilarity functions}

Consider an arbitrary configuration consisting of an epoch~$e$ and set of participants~$p$ such that the participation topology~$\A^{ep}$ is free. 
To fulfil the AMCO conditions, for each router $i$ we must construct a suitable dissimilarity function,~$\tmetric{i}$, that can be used to compare the different states of its routing table. Intuitively, our notion of dissimilarity will be based upon the ability of the router $i$'s current entries of \emph{dislodging} entries in other router's routing tables.

We now describe the formal construction of such a dissimilarity function. Note that for notational convenience, whenever we refer to a concept that depends an adjacency matrix we will use $ep$ as a shorthand for $\A^{ep}$. For example we will write $\aFun^{ep}$ instead of $\aFun_{\A^{ep}}$ and $\extendedBy{}^{ep}$ instead of $\extendedBy{\A^{ep}}$. 

\vspace{1em}

\subsubsection{Dislodgement order over assignments}

To begin with we construct a strict partial order $<^{ep}$ over the set of assignments, which we will call the \emph{dislodgement order}.
The key idea for this is adapted from Sobrinho's notion of a \emph{couplet digraph} which we generalise to an \emph{assignment} digraph. 
The vertices of this digraph are the set of assignments $\assignments$.
There is an edge from assignment $a$ to assignment $b$ in the digraph if, in the current topology, either the latter is an extension of the former (i.e. $a \extendedByU b$), or the former is strictly preferred to the latter (i.e. $a \lessAss b$). 
Subsequently, we have $a \proofOrder b$ if there exists a path from assignment $a$ to assignment $b$ in this digraph. 

Mathematically, this relation is the transitive closure of the union of the $\extendedByU$ and $\lessAss$ relations. Concretely, if we define the $m$-step version of the order $\proofOrder_m$ inductively as:
\begin{align*}
a \proofOrder_1 \:\: b \spacing & \triangleq \spacing a \extendedByU b \vee a \lessAss b \\
a \proofOrder_{m+1} b \spacing & \triangleq \spacing \exists c : a \proofOrder_1 c \wedge c \proofOrder_m b \\
\intertext{then $\proofOrder$ is defined as:}
a \proofOrder \:\: b \spacing & \triangleq \spacing \exists m : a \proofOrder_m b
\end{align*}
Intuitively, $a \proofOrder b$ describes a chain of events triggered by a router choosing to use assignment $a$ that would eventually cause another router to remove assignment $b$ from its routing table, i.e. $a$ can eventually cause $b$ to be dislodged. In particular, each $\extendedByU$ link represents an extension event and each $\lessAss$ link represents a choice event.

The crucial properties of this relation are as follows:
\begin{enumerate}[label=\textbf{O\arabic*}]
\item \emph{$\proofOrder$ is irreflexive} - An assignment cannot dislodge itself.
\begin{equation*}
\forall a \in \assignments: a \nless^{ep} a
\end{equation*}
\item \emph{$\proofOrder$ is transitive} - If assignment $a$ can dislodge $b$ and $b$ can dislodge $c$ then $a$ can dislodge $c$.
\begin{equation*}
\forall a,b,c \in \assignments: a \proofOrder b \wedge b \proofOrder c \Rightarrow a \proofOrder c
\end{equation*}
\item \emph{$\proofOrder$ respects $\extendedByU$} - If assignment $a$ is extended by assignment $b$ then $a$ can dislodge $b$.
\begin{equation*}
\forall a,b \in \assignments: a \extendedByU b \Rightarrow a \proofOrder b
\end{equation*}
\item \emph{$\proofOrder$ respects $\lessWeight$} - If assignment $a$ is strictly preferred to assignment $b$ then $a$ can dislodge $b$.
\begin{equation*}
\forall a,b \in \assignments: a \lessAss b \Rightarrow a \proofOrder b
\end{equation*}
\end{enumerate}
Properties O2, O3 and O4 are immediate from the definition of~$\proofOrder$. In contrast O1, the proof of irreflexivity, is non-trivial and uses the assumption that $\A^{ep}$ is free. However, modulo the differences in the algebraic structures and the change from couplets to assignments, the proof is identical to that of Lemma~1 in the appendix of~\cite{sobrinho2005algebraic}. Readers interested in a proof using our algebraic model may consult our accompanying Agda code.

\vspace{1em}

\subsubsection{Height of assignments} 

As both $\carrier$ and $\nodes$ are finite, the set of assignments $\assignments$ is also finite. Therefore all upwards closed subsets under the relation $<^{ep}$ must also be finite. Consequently the \emph{height} of an assignment can be defined as the number of assignments it is capable of dislodging:
\begin{equation*}
\rheight(a) \triangleq \left\vert \{ b \in \assignments \mid a <^{ep} b \} \right\vert
\end{equation*}
The crucial properties of $\rheight$ are as follows:
\begin{enumerate}[label=\textbf{H\arabic*}]
\item If $a$ is extended by $b$ then the height of $b$ is strictly less than the height of $a$:
\begin{equation*}
\forall a,b \in \assignments: a \extendedByU b \Rightarrow \rheight(b) < \rheight(a)
\end{equation*}
\item If $a$ is strictly preferred to $b$ then the height of $b$ is strictly less than the height of $a$:
\begin{equation*}
\forall a,b \in \assignments: a \lessAss b \Rightarrow \rheight(b) < \rheight(a)
\end{equation*}
\item There exists a maximum height:
\begin{equation*}
\exists \rheightMax \in \NN : \forall a \in \assignments : \rheight(a) \leq \rheightMax
\end{equation*}
\end{enumerate}
H1 and H2 trivially follow from properties O1, O2, O3 and O4. H3 follows from the fact that $A$ is finite.

\vspace{1em}

\subsubsection{Dissimilarity between path-weights} 

For each router $i$ we can now construct a dissimilarity function ${\rmetric{i} : \carrier \times \carrier \rightarrow \mathbb{N}}$ between path-weights as follows:
\begin{equation*}
\rmetric{i}(x,y) \triangleq 
\begin{cases}
	0 							& \text{if $x = y$} \\
	1 + \max(\rheight(i, x), \rheight(i, y))	& \text{otherwise}
\end{cases} \\
\end{equation*}
By this definition the dissimilarity between a pair of non-equal path-weights is proportional to the maximum number of assignments that are dislodgeable when router $i$ chooses to use either $x$ or $y$. Intuitively this is a reasonable measure of dissimilarity as if you have two alternative assignments then, from a convergence perspective, the seriousness of the disagreement between them is directly proportional to the number of resulting changes in the rest of the network when adopting one over the other.

\vspace{1em}

\subsubsection{Dissimilarity between routing tables} For each router $i$, the dissimilarity between a pair of routing table states that it may adopt is defined to be the maximum of the pairwise dissimilarity between their entries:
\begin{equation*}
\tmetric{i}(\xt,\yt) \triangleq \max_{j \in \nodes} \: \rmetric{i}(\xt_j,\yt_j)
\end{equation*}
If the two routing tables are identical then they will have zero dissimilarity between them, otherwise the dissimilarity is equal to the maximum number of assignments that are dislodgeable by $i$ choosing to use a route that  $\xt$ and $\yt$ disagree on. This is the dissimilarity function $\tmetric{i}$ required by the AMCO conditions in Definition~\ref{def:amco}.

\vspace{1em}

\subsubsection{Dissimilarity between routing states} As required by the AMCO conditions, the dissimilarity function over routing states is defined as follows:
\begin{equation*}
\smetric(\X,\Y) \triangleq \max_{i \in p} \: \tmetric{i}(\X_{i},\Y_{i})
\end{equation*}
Again $\smetric$ measures the dissimilarity between states~$\X$~and~$\Y$. If all the elements of $\X$ and $\Y$ are equal then there is zero dissimilarity between them, otherwise the dissimilarity is equal to the maximum number of assignments that are dislodgeable by any participating router $i$ choosing a route that $\X$ and $\Y$ disagree on.

\subsection{Properties of the dissimilarity functions}

We must now show that $\tmetric{i}$ and $\smetric$ satisfy the AMCO conditions. It is clear that $\tmetric{i}$ is always less than or equal to $\rheightMax + 1$ by H3 and that $\tmetric{i}(\xt,\yt) = 0$ iff $\xt = \yt$. In fact it can be shown that $\rmetric{i}$/$\tmetric{i}$/$\smetric$ form an ultrametric space over $\carrier$/$\tcarrier$/$\mcarrier$ but showing this is not necessary for the AMCO conditions.

Therefore it remains to show that $\aFun^{ep}$ is both strictly contracting over orbits and strictly contracting over any fixed point. In fact, in this particular case  $\aFun^{ep}$ can be shown to be strictly contracting:
\begin{equation*}
\X \neq_p \Y \Rightarrow \smetric(\aFun^{ep}(\X),\aFun^{ep}(\Y)) < \smetric(\X,\Y)
\end{equation*}
which implies it is strictly contracting over both orbits and any fixed points. 

However, instead of proving this result directly, we will first separate out a couple of smaller lemmas which we will later reuse in Appendix~\ref{app:path-vector-proof} in the proof of convergence for path-vector protocols.

\begin{lemma} \label{lem:constant_diagonal}
After an iteration, any router $i$'s path-weight to itself is always the trivial path-weight:
\begin{equation*}
\forall i, \X: \aFun^{ep}(\X)_{ii} = \trivial
\end{equation*}
\end{lemma}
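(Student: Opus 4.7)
The plan is to derive the result directly from the definition of $\aFun^{ep}$ combined with Assumption~\ref{ass:trivial-annihilator} (that $\trivial$ is an annihilator for $\oplus$). This is essentially a restatement of the $i = j$ case of the expansion of $\aFun_\A(\X)_{ij}$ already derived in Equation~\ref{eq:sigma_expanded}, so no new ideas are needed.

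Concretely, I would first unfold $\aFun^{ep}(\X)_{ii} = \A^{ep}(\X)_{ii} \oplus \I_{ii}$ using the definition from Section~\ref{sec:synchronous-model}. Next, by the definition of the identity matrix, $\I_{ii} = \trivial$. Finally, applying \ref{ass:trivial-annihilator} we have $\A^{ep}(\X)_{ii} \oplus \trivial = \trivial$, which yields the desired equality. This gives a short chain of equalities:
\begin{equation*}
\aFun^{ep}(\X)_{ii} \:=\: \A^{ep}(\X)_{ii} \oplus \I_{ii} \:=\: \A^{ep}(\X)_{ii} \oplus \trivial \:=\: \trivial.
\end{equation*}

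The main (and only) ``obstacle'' is bookkeeping: noting that the computation goes through regardless of whether $i \in p$ or not, since the definition of $\aFun^{ep}$ and of $\I$ do not depend on the participant set. No appeal to properties of the policy functions $\A^{ep}_{ik}$ or to freeness of the topology is required, and no induction is needed. This is a pure consequence of the two algebraic facts that $\I_{ii} = \trivial$ and that $\trivial$ absorbs all other path-weights under $\oplus$.
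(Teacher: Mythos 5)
Your proof is correct and is essentially the paper's own argument: the paper simply cites the $i=j$ case of Equation~\ref{eq:sigma_expanded} (whose derivation already uses Assumption~\ref{ass:trivial-annihilator}) together with $\I_{ii} = \trivial$, whereas you unfold that same computation inline. No gap.
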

\begin{proof}
$\aFun^{ep}(\X)_{ii} = \I_{ii}$ by Equation~\ref{eq:sigma_expanded} in Section~\ref{sec:synchronous-model} and $\I_{ii} = \trivial$ by definition of $\I$.
\end{proof}

\begin{lemma}
\label{lem:strictly-contracting-sub}
For each destination, the maximum dissimilarity between all entries of two routing states must strictly decrease after a single synchronous iteration:
\begin{align*}
&\forall \X,\Y,i,j,v > 0: \\
&\quad (\forall k : \rmetric{k}(\X_{kj},\Y_{kj}) \leq v) \Rightarrow
\rmetric{i}(\aFun^{ep}(\X)_{ij},\aFun^{ep}(\Y)_{ij}) < v
\end{align*}
\end{lemma}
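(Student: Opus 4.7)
The plan is to proceed by case analysis on the two natural splits: whether $i = j$ and whether $\aFun^{ep}(\X)_{ij}$ coincides with $\aFun^{ep}(\Y)_{ij}$. If $i = j$, Lemma~\ref{lem:constant_diagonal} immediately gives $\aFun^{ep}(\X)_{ii} = \trivial = \aFun^{ep}(\Y)_{ii}$, so the dissimilarity is $0 < v$. If $\aFun^{ep}(\X)_{ij} = \aFun^{ep}(\Y)_{ij}$, the conclusion is trivial from the definition of $\rmetric{i}$. This reduces matters to the substantive case where $i \neq j$ and $a \triangleq \aFun^{ep}(\X)_{ij}$ differs from $b \triangleq \aFun^{ep}(\Y)_{ij}$.

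Since $\leqWeight$ is a total order (by axioms~\ref{ass:sel},~\ref{ass:assoc}~and~\ref{ass:comm}), I would assume without loss of generality that $a \lessWeight b$, which in particular forces $a \neq \invalid$. Because $i \neq j$, the expansion $\aFun^{ep}(\X)_{ij} = \bigoplus_k \A^{ep}_{ik}(\X_{kj})$ together with selectivity of $\oplus$ produces an index $k$ with $a = \A^{ep}_{ik}(\X_{kj})$. The main obstacle, and the crux of the argument, is to show that \emph{this same} coordinate $k$ witnesses a genuine disagreement between $\X$ and $\Y$ at destination $j$. My strategy is to exploit that $b$ is the $\oplus$-minimum over $\Y$, so $b \leqWeight \A^{ep}_{ik}(\Y_{kj})$; combined with $a \lessWeight b$ this yields $\A^{ep}_{ik}(\X_{kj}) \neq \A^{ep}_{ik}(\Y_{kj})$, whence $\X_{kj} \neq \Y_{kj}$.

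With this disagreement secured, unfolding the hypothesis $\rmetric{k}(\X_{kj}, \Y_{kj}) \leq v$ through the definition of $\rmetric{k}$ gives $\rheight(k, \X_{kj}) \leq v - 1$. Since $a \neq \invalid$, the contrapositive of axiom~\ref{ass:invalid-fixed-point} forces $\X_{kj} \neq \invalid$, so by Definition~\ref{def:assignment-extension} we have $(k, \X_{kj}) \extendedByU (i, a)$, and property H1 then delivers $\rheight(i, a) < \rheight(k, \X_{kj}) \leq v - 1$. Finally, $(i, a) \lessAss (i, b)$ combined with property H2 gives $\rheight(i, b) < \rheight(i, a)$, so $\max(\rheight(i, a), \rheight(i, b)) = \rheight(i, a)$, and therefore $\rmetric{i}(a, b) = 1 + \rheight(i, a) \leq v - 1 < v$, as required. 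I expect the only subtle point to be the existential step that picks out $k$ and verifies that $\X_{kj} \neq \Y_{kj}$ — a naive attempt might try to use a coordinate where $\X$ and $\Y$ disagree, but without the selectivity argument above one has no way to relate that coordinate back to the path-weight $a$ whose height we need to bound.
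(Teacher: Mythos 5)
Your proposal is correct and follows essentially the same route as the paper's proof: the same use of selectivity to extract the witness $k$ with $\aFun^{ep}(\X)_{ij} = \A^{ep}_{ik}(\X_{kj})$, the same argument (via $\sigY_{ij}$ being the $\oplus$-minimum) that $\X_{kj} \neq \Y_{kj}$, and the same appeal to H1 and H2 to close the inequality chain. The only differences are cosmetic — you dispatch $i=j$ up front via Lemma~\ref{lem:constant_diagonal} where the paper derives a contradiction inside the non-equal case, and you bound $\rheight(k,\X_{kj})$ by $v-1$ directly rather than re-packaging it through $\rmetric{k}$.
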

\begin{proof}
Assume
\begin{equation}
\label{eq:rXYk<=v}
\forall k : \rmetric{k}(\X_{kj},\Y_{kj}) \leq v
\end{equation}

\begin{case}{1}{$\sigX_{ij} = \sigY_{ij}$}
Then the required inequality is immediate as:
\begin{align*}
\rmetric{i}(\sigX_{ij},\sigY_{ij}) 
&= 0
& \text{(by Case 1 and def. of $\rmetric{i}$)}
\\
&< v
& \text{(by $v > 0$)}
\end{align*}
\end{case}

\begin{case}{2}{$\sigX_{ij} \neq \sigY_{ij}$}
Without loss of generality assume that $\sigX_{ij}$ is a more desirable path-weight than $\sigY_{ij}$ and therefore:
\begin{equation} 
\label{eq:sXij<sYij}
\sigX_{ij} \lessWeight \sigY_{ij} 
\end{equation}

\begin{case}{2.1}{$i = j$}
A router's path-weight to itself is always the trivial path-weight by Lemma~\ref{lem:constant_diagonal} and so if $i = j$ then $\sigX_{ij} = \trivial = \sigY_{ij}$ which contradicts the assumption of Case 2.
\end{case}

\vspace{1em}

\begin{case}{2.2}{$i \neq j$}
Consequently by Equation~\ref{eq:sigma_expanded} in Section~\ref{sec:synchronous-model}:
\begin{align*}
\sigX_{ij} = \bigoplus_k \A^{ep}_{ik}(\X_{kj})
\end{align*}
and so as $\oplus$ is selective there must exist a router $k$ such that:
\begin{equation} \label{eq:extension}
\sigX_{ij} = \A^{ep}_{ik}(\X_{kj})
\end{equation}
If $\X_{kj} = \invalid$ then $\sigX_{ij} = \invalid$ which contradicts (\ref{eq:sXij<sYij}) and therefore:
\begin{equation} \label{eq:Xkj!=0}
\X_{kj} \neq \invalid
\end{equation}
It cannot be the case that $\X_{kj} = \Y_{kj}$ as otherwise it would be possible to prove the following:
\begin{align*}
\sigX_{ij}
	& = \A^{ep}_{ik}(\X_{kj}) 
	& \text{(by (\ref{eq:extension}))} \\
	& = \A^{ep}_{ik}(\Y_{kj})  
	& \text{(by $\X_{kj}=\Y_{kj}$)} \\
	& \geq_{\oplus} \bigoplus_k \A^{ep}_{ik}(\Y_{kj})
	& \text{(by def. of $\leq_\oplus$)} \\
	& = \sigY_{ij} 
	& \text{(by Case 2.2 \& Equation~\ref{eq:sigma_expanded})} 
\end{align*}
which again contradicts (\ref{eq:sXij<sYij}) and so:
\begin{equation} \label{eq:Xkj!=Ykj}
\X_{kj} \neq \Y_{kj}
\end{equation}
The required inequality can now be proved as follows:
\begin{align*}
& \rmetric{i}(\sigX_{ij},\sigY_{ij})) \\
	& \quad = 1 + \max(\rheight(i , \sigX_{ij}), \rheight(i, \sigY_{ij}))  \\
& \quad\qquad\qquad\qquad\qquad\qquad\qquad\quad\: \text{(by def. of $\rmetric{i}$ \& Case 2)} \\
	& \quad = 1 + \rheight(i, \sigX_{ij}) 
\quad\qquad\qquad\qquad\quad \text{(by H2 \& (\ref{eq:sXij<sYij}))} \\
	& \quad < 1 + \rheight(k, \X_{kj})
\quad\qquad\qquad\qquad\qquad\quad \text{(by H1 \& (\ref{eq:extension}))} \\
	& \quad \leq 1 + \max(\rheight(k , \X_{kj}), \rheight(k , \Y_{kj}))
\quad \text{(by def. of $\max$)} \\
	& \quad = \rmetric{k}(\X_{kj},\Y_{kj}) 
\qquad\qquad\qquad\:\, \text{(by def. of $\rmetric{k}$ and (\ref{eq:Xkj!=Ykj}))} \\
	& \quad \leq v
\qquad\qquad\qquad\qquad\qquad\qquad\qquad\qquad\qquad\quad \text{(by (\ref{eq:rXYk<=v}))}
\end{align*}
\end{case}
\end{case}
\end{proof} 

\noindent We can now use Lemma~\ref{lem:strictly-contracting-sub} to prove that $\aFun^{ep}$ is strictly contracting.

\begin{lemma}
\label{lem:strictly_contracting}
The operator $\aFun^{ep}$ is strictly contracting w.r.t. $\smetric$ over $\accSet$.
\begin{align*}
&\forall \X,\Y \in \accSet: \\
&\qquad \X \neq_p \Y \Rightarrow \smetric(\aFun^{ep}(\X),\aFun^{ep}(\Y)) < \smetric(\X,\Y)
\end{align*}
\end{lemma}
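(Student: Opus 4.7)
The plan is to derive this strict contraction as a direct corollary of Lemma~\ref{lem:strictly-contracting-sub}, applied coordinate-wise and then passed through the two layers of maxima that define $\tmetric{i}$ and $\smetric$. Concretely, I would set $v = \smetric(\X,\Y)$ and show (a) that $v > 0$, and (b) that for every $i \in p$ and every $j$ one has $\rmetric{i}(\aFun^{ep}(\X)_{ij}, \aFun^{ep}(\Y)_{ij}) < v$; taking the max in $j$ and then in $i \in p$ then gives $\smetric(\aFun^{ep}(\X), \aFun^{ep}(\Y)) < v$ as required.

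For (a), the hypothesis $\X \neq_p \Y$ produces some participating $i \in p$ and some $j$ with $\X_{ij} \neq \Y_{ij}$, so $\rmetric{i}(\X_{ij}, \Y_{ij}) \geq 1$ by definition of $\rmetric{i}$, and chasing the definitions of $\tmetric{i}$ and $\smetric$ gives $v \geq 1$. For (b), I need to verify the hypothesis of Lemma~\ref{lem:strictly-contracting-sub} at each destination $j$: that $\rmetric{k}(\X_{kj}, \Y_{kj}) \leq v$ for every router $k \in \nodes$, not merely for $k \in p$. When $k \in p$ this is immediate from unfolding $\smetric$ as a max of $\tmetric{k}$, each of which is a max of $\rmetric{k}$. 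When $k \notin p$ I invoke accordancy ($\X, \Y \in \accSet$, Definition~\ref{def:accordant-states}), which forces $\X_k = \I_k = \Y_k$, so $\X_{kj} = \Y_{kj}$ and $\rmetric{k}(\X_{kj}, \Y_{kj}) = 0 \leq v$. Since $v > 0$ by (a), Lemma~\ref{lem:strictly-contracting-sub} then yields the pointwise strict inequality in (b) for every $i \in p$ and every $j$.

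The final step is purely bookkeeping: taking $\max_j$ of strict inequalities bounded above by the same positive $v$ preserves strictness (the index set $\nodes$ is finite and non-empty), giving $\tmetric{i}(\aFun^{ep}(\X)_i, \aFun^{ep}(\Y)_i) < v$ for each $i \in p$, and then the max over the non-empty set $p$ gives $\smetric(\aFun^{ep}(\X), \aFun^{ep}(\Y)) < v = \smetric(\X, \Y)$. The only non-mechanical step is remembering to discharge the non-participating case using accordancy; essentially all the real work has already been packaged into Lemma~\ref{lem:strictly-contracting-sub}, whose proof was where the dislodgement order $\proofOrder$ and the height function $\rheight$ do the heavy lifting via properties H1 and H2.
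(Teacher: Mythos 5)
Your proposal is correct and follows essentially the same route as the paper's proof: apply Lemma~\ref{lem:strictly-contracting-sub} with $v = \smetric(\X,\Y)$, verify $v > 0$ from $\X \neq_p \Y$, and discharge the hypothesis for non-participating $k$ via accordancy (the paper notes $\X_{kj} = \Y_{kj} = \I_{kj}$, giving zero dissimilarity, exactly as you do). No gaps.
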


\begin{proof}
Consider states $\X, \Y \in \accSet$ such that ${\X \neq \Y}$. As: 
\begin{equation*}
\smetric(\aFun^{ep}(\X),\aFun^{ep}(\Y)) = \max_{i \in p, j \in \nodes} \: \rmetric{i}(\aFun^{ep}(\X)_{ij},\aFun^{ep}(\Y)_{ij})
\end{equation*}
it suffices to show for all routers $i \in p$ and $j \in \nodes$ that
\begin{equation*} \label{eq:d<dXijYij}
\rmetric{i}(\aFun^{ep}(\X)_{ij},\aFun^{ep}(\Y)_{ij}) < \smetric(\X,\Y)
\end{equation*}
This can be achieved by applying Lemma~\ref{lem:strictly-contracting-sub} where ${v = \smetric(\X,\Y)}$. However in order to apply the lemma we must first check that ${0 < \smetric(\X,\Y)}$ and that for all routers $k$ then $\rmetric{k}(\X_{kj},\Y_{kj}) \leq \smetric(\X,\Y)$.

The former holds as $\X \neq \Y$ and $\X, \Y \in \accSet$ so there must exist entries in participating routers in $\X$ and $\Y$ whose dissimilarity is non-zero. To show that the latter holds consider whether or not $k$ is participating. If $k \in p$ then the required inequality holds simply from the definition of~$\smetric$. Otherwise if $k \notin p$ then as $\X$ and $\Y$ are accordant with $p$ then $\X_{kj} = \Y_{kj} = \invalid$ and the inequality holds trivially as $\rmetric{k}(\X_{kj},\Y_{kj}) = 0$.
\end{proof}

\noindent Finally all the pieces to show that the set of functions $\aFun^{ep}$ are an AMCO over $\freeConfigs$ have now been assembled. In particular:
\begin{enumerate}[label=\textbf{D\arabic*}:]
\item $\forall \xt, \yt : \tmetric{i}(\xt,\yt) = 0 \Leftrightarrow \xt = \yt$

Immediate from the definition of $\tmetric{i}$ and $\rmetric{i}$.

\item $\exists n : \forall \xt, \yt: \tmetric{i}(\xt,\yt) \leq n$

The height of assignments is bounded above by $\rheightMax$ and hence every dissimilarity is bounded above $1 + \rheightMax$.

\item $F^{ep}$ is strictly contracting on orbits over $\smetric$

Immediate by applying Lemma~\ref{lem:strictly_contracting} to $\X$ and $\sigX$.

\item $F^{ep}$ is strictly contracting on fixed points over $\smetric$

Immediate by applying Lemma~\ref{lem:strictly_contracting} to $\X$ and $\X^*$.

\item $\forall \X : F^{ep}(\X) \in \accSet$.

Immediate by the definition of $F^{ep}$ as $\A^{ep}$ discards the state of all non-participating routers.
\end{enumerate}
As $\aFun^{ep}$ is an AMCO over $\freeConfigs$ then Theorem~\ref{thm:amco-convergent} we have that $\delta$ is convergent over $\genericRoutingAlgebra$ whenever the network topology is free.

\clearpage
\section{Proof of Theorem~\ref{thm:pv-free} - Path-vector protocols}
\label{app:path-vector-proof}

As discussed in Section~\ref{sec:convergence-results}, the requirement that $\carrier$, the set of path-weights, is finite proves very restrictive in practice. One principled approach to avoiding this requirement is that of path-vector protocols which track and remove path-weights generated along looping paths, as encoded by the path algebra axioms in Definition~\ref{def:path-algebra}. This turns out to be sufficient to guarantee that eventually the protocol will always reach a finite subset of \emph{consistent} path-weights from which it will then converge. As with distance-vector protocols, we will use this to construct a suitable set of dissimilarity functions $\rmetric{i}$ that fulfil the AMCO conditions.

\subsection{Consistent path-weights}
\label{sec:consistent-path-weights}

We now formally define the notion of a path-weight being \emph{consistent} with the current network topology. Consider an arbitrary path algebra~$\genericPathAlgebra$.

\begin{definition}[Weight of a path]
The ${\weightf{\A} : \paths \rightarrow \carrier}$, which calculates the weight of a path~$p$ with respect to the current network topology $\A$, is defined as:
\begin{equation*}
\weightf{\A}(p) \triangleq \begin{cases}
\invalid 
	& \text{if $p = \invalidpath$} \\
\trivial 
	& \text{if $p = \trivialpath$} \\
\A_{ij}(\weightf{\A}(q)) 
	& \text{if $p = (i,j) :: q$}
\end{cases}
\end{equation*}
\end{definition}

\begin{definition} 
\label{def:consistent-path-weight}
A path-weight $x$ is \emph{consistent} with respect to the topology $\A$ iff it is equal to the current weight of the path along which it was generated. The set of consistent routes and the set of consistent routing states are therefore defined to be:
\begin{align*}
\consistent{\A} & \triangleq \{ x \in \carrier \mid \weightf{\A}(\pathf(x)) = x\} \\
\mconsistent{\A} & \triangleq \{\X \in \mcarrier \mid \forall i,j : \X_{ij} \in \consistent{\A}\}
\end{align*}
\end{definition}
\noindent The following lemmas show that consistency is preserved by the routing operations.
\begin{lemma}
\label{lem:choice-consistent-closure}
Choice preserves consistency:
\begin{equation*}
\forall x, y \in \consistent{\A} : x \oplus y \in \consistent{\A}
\end{equation*}
\end{lemma}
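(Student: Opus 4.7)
The plan is to exploit selectivity of the choice operator (axiom \ref{ass:sel}) to reduce the statement to the consistency of one of the two inputs.

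First I would unfold the goal $x \oplus y \in \consistent{\A}$ using Definition~\ref{def:consistent-path-weight}: we must show $\weightf{\A}(\pathf(x \oplus y)) = x \oplus y$. By selectivity of $\oplus$, we have $x \oplus y \in \{x, y\}$, so we split into two cases.

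In the case $x \oplus y = x$, we compute
\[
\weightf{\A}(\pathf(x \oplus y)) = \weightf{\A}(\pathf(x)) = x = x \oplus y,
\]
where the middle equality uses the assumption $x \in \consistent{\A}$. The case $x \oplus y = y$ is symmetric, invoking $y \in \consistent{\A}$ instead.

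There is no real obstacle here: the only ingredients are selectivity of $\oplus$ and the definition of consistency, and no path-algebra axioms (P1--P3) are needed. In the Agda formalisation the only mild subtlety will be handling the fact that the equality of path-weights and path equality may need to be transported across the selective choice, but this is routine.
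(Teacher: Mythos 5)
Your proof is correct and is exactly the argument the paper intends when it says the result is ``immediate from selectivity'': since $x \oplus y \in \{x,y\}$ and both $x$ and $y$ are consistent, the chosen element is consistent. You have simply spelled out the case split that the paper leaves implicit.
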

\begin{proof}
Immediate from axiom~\ref{ass:sel} that $\oplus$ is selective.
\end{proof}

\begin{lemma}
\label{lem:extension-consistent-closure}
Extension preserves consistency:
\begin{equation*}
\forall i,j \in \nodes, x \in \consistent{\A}: \A_{ij}(x) \in \consistent{\A}
\end{equation*} 
\end{lemma}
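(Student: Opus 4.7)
The plan is to prove the lemma by a direct case analysis on axiom \ref{ass:path-extension} of a path algebra, which describes exactly what $\pathf(\A_{ij}(x))$ can be. Unfolding the definition of consistency, our goal reduces to showing that $\weightf{\A}(\pathf(\A_{ij}(x))) = \A_{ij}(x)$, given the hypothesis that $\weightf{\A}(\pathf(x)) = x$.

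First I would split into the three cases dictated by \ref{ass:path-extension}. In the first case ($i \in \pathf(x)$) and the second case (not $j \pathsaligned \pathf(x)$), we have $\pathf(\A_{ij}(x)) = \invalidpath$. By \ref{ass:path-invalid} this immediately forces $\A_{ij}(x) = \invalid$, and then $\weightf{\A}(\invalidpath) = \invalid = \A_{ij}(x)$ by the first clause in the definition of $\weightf{\A}$, so consistency holds trivially. In the third and only substantive case, $\pathf(\A_{ij}(x)) = (i,j) \cons \pathf(x)$, and so
\begin{align*}
\weightf{\A}(\pathf(\A_{ij}(x)))
&= \weightf{\A}((i,j) \cons \pathf(x)) \\
&= \A_{ij}(\weightf{\A}(\pathf(x))) \\
&= \A_{ij}(x),
\end{align*}
where the second equality is the recursive clause in the definition of $\weightf{\A}$ and the last uses the consistency hypothesis on $x$.

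There is essentially no obstacle here: the lemma is a direct consequence of the design of the path-algebra axioms, which were stated precisely so that extension interacts correctly with $\pathf$ and $\weightf{\A}$. The only mild subtlety to keep in mind is that the recursive clause of $\weightf{\A}$ must be applicable to the concatenated path $(i,j) \cons \pathf(x)$, but this is guaranteed because \ref{ass:path-extension} only produces this concatenation in the case where $j \pathsaligned \pathf(x)$ holds, so the cons is well-formed.
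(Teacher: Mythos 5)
Your proof is correct and follows essentially the same route as the paper's: a case split on axiom~\ref{ass:path-extension}, with the invalid-path cases discharged via~\ref{ass:path-invalid} and the definition of $\weightf{\A}$, and the concatenation case discharged by the recursive clause of $\weightf{\A}$ together with the consistency hypothesis on $x$. The only cosmetic difference is that the paper merges your first two cases into a single ``$\pathf(\A_{ij}(x)) = \invalidpath$'' case, which your argument handles identically.
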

\begin{proof}
Consider a path-weight $x \in \consistent{\A}$ being extended along an arbitrary edge $(i , j)$. By axiom~P3 in the definition of a path algebra there are two cases.

\begin{case}{1}{$\pathf(\A_{ij}(x)) = \invalidpath$}
Then the required result follows:
\begin{align*}
& \weightf{\A}(\pathf(\A_{ij}(x))) \\
& \quad = \weightf{\A}(\invalidpath) 
& (\text{by Case 1}) \\
& \quad = \invalid
& (\text{by def. of $\weightf{\A}$}) \\
& \quad = \A_{ij}(x)  
& (\text{by axiom P1 and Case 1})
\end{align*}
\end{case}

\begin{case}{2}{$\pathf(\A_{ij}(x)) = (i , j) \cons \pathf(x)$}
Then we have the required result as follows:
\begin{align*}
& \weightf{\A}(\pathf(\A_{ij}(x))) \\
& \quad = \weightf{\A}((i,j) \cons \pathf(x)) 
& (\text{by Case 2}) \\
& \quad = \A_{ij}(\weightf{\A}(\pathf(x)) 
& (\text{by def. of $\weightf{\A}$}) \\
& \quad = \A_{ij}(x)  
& (\text{by $x \in \consistent{\A}$})
\end{align*}
\end{case}
\end{proof}

\begin{lemma}
\label{lem:F-consistent-closure}
A single synchronous iteration preserves consistency:
\begin{equation*}
\forall \X \in \mconsistent{\A} : \aFun_{\A}(\X) \in \mconsistent{\A}
\end{equation*}
\end{lemma}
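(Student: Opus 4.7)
The plan is to unfold the definition of $\aFun_{\A}(\X)_{ij}$ and verify that each of its constituent pieces lies in $\consistent{\A}$, then combine them using the two preceding closure lemmas. Recall that $\aFun_{\A}(\X)_{ij} = \bigoplus_{k} \A_{ik}(\X_{kj}) \oplus \I_{ij}$, so the goal reduces to showing (i) each summand $\A_{ik}(\X_{kj})$ is consistent, (ii) the additional term $\I_{ij}$ is consistent, and (iii) consistency is preserved when these are combined via $\oplus$.

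First I would handle the summands. By hypothesis $\X \in \mconsistent{\A}$, so $\X_{kj} \in \consistent{\A}$ for every $k$. Lemma~\ref{lem:extension-consistent-closure} then gives $\A_{ik}(\X_{kj}) \in \consistent{\A}$ immediately. A straightforward induction on the finite set $\nodes$, using Lemma~\ref{lem:choice-consistent-closure} at each step, lifts this to $\bigoplus_{k} \A_{ik}(\X_{kj}) \in \consistent{\A}$.

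Next I would verify that $\I_{ij} \in \consistent{\A}$ in both cases of its definition. When $i = j$ we have $\I_{ij} = \trivial$; by axiom~\ref{ass:path-trivial}, $\pathf(\trivial) = \trivialpath$, and by the definition of $\weightf{\A}$, $\weightf{\A}(\trivialpath) = \trivial$, so $\trivial \in \consistent{\A}$. When $i \neq j$ we have $\I_{ij} = \invalid$; by axiom~\ref{ass:path-invalid}, $\pathf(\invalid) = \invalidpath$, and by the definition of $\weightf{\A}$, $\weightf{\A}(\invalidpath) = \invalid$, so $\invalid \in \consistent{\A}$.

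Finally, one further application of Lemma~\ref{lem:choice-consistent-closure} combines the consistent sum with the consistent identity entry, yielding $\aFun_{\A}(\X)_{ij} \in \consistent{\A}$. Since $i$ and $j$ were arbitrary, $\aFun_{\A}(\X) \in \mconsistent{\A}$. There is no serious obstacle here — the lemma is essentially a bookkeeping exercise that chains the two previous closure results together with the trivial observation that $\I$'s entries are consistent — and its main role is to set up the invariant needed later when reasoning about the AMCO metric for path-vector protocols.
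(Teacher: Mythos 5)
Your proof is correct and follows the same route as the paper, which simply cites Lemmas~\ref{lem:choice-consistent-closure} and~\ref{lem:extension-consistent-closure} and notes that every operation in $\aFun_{\A}$ preserves consistency. You additionally spell out the consistency of the identity-matrix entries $\trivial$ and $\invalid$ via axioms P1 and P2, a detail the paper leaves implicit but which is needed for the argument to go through.
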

\begin{proof}
This is an immediate from Lemmas~\ref{lem:choice-consistent-closure}~\&~\ref{lem:extension-consistent-closure} as every operation in $\aFun_{\A}$ preserves consistency. 
\end{proof}
\noindent One consequence of Lemma~\ref{lem:F-consistent-closure} is that applying an update never spontaneously introduces new inconsistent path-weights into the routing state. Therefore the only way inconsistent path-weights can be introduced into the routing state is by the network topology changing (consequently beginning a new epoch).

\begin{lemma}
$\consistent{\A}$ is finite.
\end{lemma}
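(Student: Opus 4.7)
The plan is to exploit the defining property of consistency: every consistent path-weight $x$ satisfies $x = \weightf{\A}(\pathf(x))$, which means $x$ is completely determined by its associated simple path $\pathf(x) \in \paths$. So I would define a map $\phi : \consistent{\A} \to \paths$ by $\phi(x) = \pathf(x)$, and then show that $\phi$ is injective: if $\phi(x) = \phi(y)$ then by consistency
\[
 x = \weightf{\A}(\pathf(x)) = \weightf{\A}(\pathf(y)) = y.
\]
This reduces the problem to showing that $\paths$ is finite.

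Next, I would argue that $\paths$, the set of simple paths over the node set $\nodes$, is finite. Since $\nodes$ is finite with $n = |\nodes|$ elements, any simple path has length at most $n - 1$ (including the trivial path $\trivialpath$ and, by convention, $\invalidpath$). The number of such paths is bounded by something like $1 + 1 + \sum_{k=1}^{n} \binom{n}{k} k!$, which is certainly finite. Hence $\paths$ is finite.

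Combining these two facts, $\consistent{\A}$ injects into the finite set $\paths$ and is therefore finite. The main (and really only) subtle point is making sure the injection argument is clean: one has to observe that consistency forces $x$ to be recoverable from $\pathf(x)$ via $\weightf{\A}$, which is exactly the content of Definition~\ref{def:consistent-path-weight}. Nothing beyond basic finiteness reasoning and the definition is needed, so I would expect the proof to be a short one or two lines once the injection is set up.
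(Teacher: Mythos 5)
Your proof is correct and is essentially the paper's argument in dual form: the paper observes that $\consistent{\A} = \{\weightf{\A}(p) \mid p \in P\}$ is the image of the finite set $P$ of simple paths in the topology, while you phrase the same fact as an injection $x \mapsto \pathf(x)$ into the simple paths, both resting on the defining identity $\weightf{\A}(\pathf(x)) = x$. The only point of care is that the relevant codomain is the finite set of simple paths over the participating nodes $\nodes$ (as the paper's $P$ makes explicit) rather than paths over all of $\allNodes$, which you do address by bounding path lengths by $n$.
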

\begin{proof}
By Definition~\ref{def:consistent-path-weight} every consistent path-weight is associated with at least one simple path in the current network topology. The set of consistent path-weights, $\consistent{\A}$, is therefore equal to:
\begin{equation*}
\consistent{\A} = \{ \weightf{\A}(p) \mid p \in P \}
\end{equation*}
where $P$ is the set of simple paths in topology $\A$. As $P$ is finite, so too is $\consistent{\A}$.
\end{proof}

As in the previous section, in order to prove that $\aStateFun$ is convergent in epochs where then network topology is free. it is necessary to find a quantity that always decreases when applying $\aFun_{\A}$. The key insight is that because the set of consistent path-weights is finite and consistency is preserved by iteration, it is therefore possible to reuse the dissimilarity functions defined in Section~\ref{app:distance_vector_proof} when calculating the dissimilarity between two consistent states. In particular, $\aFun_{\A}$ remains strictly contracting over this dissimilarity function with respect to the set of consistent states. The remaining problem is to find a quantity that decreases when applying $\aFun_{\A}$ to a routing state that contains inconsistent path-weights. 

 The solution is to consider the lengths of the inconsistent paths in the state. In particular the length of the shortest inconsistent path, $s(\X)$ in the state $\X$ must strictly increase after a single iteration. This is because the contraposition of Lemma~\ref{lem:F-consistent-closure} implies that the shortest inconsistent path-weight in $\aFun_{\A}(\X)$ must be an extension of some inconsistent path-weight in~$\X$. By \ref{ass:path-extension} the paths of the two path-weights must be of the form $(i , j) \cons p$ and $p$ respectively, and by definition of $s(\X)$ we have that $s(\X) \leq |p|$. Hence $s(\X) \leq |p| < |(i,j) \cons p| = s(\aFun_{\A}(\X))$.

One simple consequence of  is that, in the absence of further topology changes, eventually all inconsistent path-weights must be flushed from the state. This follows as if the length of the path of the shortest inconsistent path-weight continues to increase then eventually all inconsistent path-weights must be of at least length $n$. Hence their path will necessarily contain a loop and therefore they will be flushed from the routing state as a consequence of assumption~\ref{ass:path-extension}. Therefore after $n$ applications of $\aFun_\A$ the state must be consistent.

Consequently ``$n - s(\X)$'' provides the required strictly decreasing quantity for inconsistent states. The two separate strictly decreasing quantities, the length of the shortest inconsistent path and the height of the consistent routes, can then be combined to form a unified strictly decreasing dissimilarity function over both inconsistent and consistent path-weights. Figure~\ref{fig:metric_structure} shows how all the dissimilarity functions fit together.

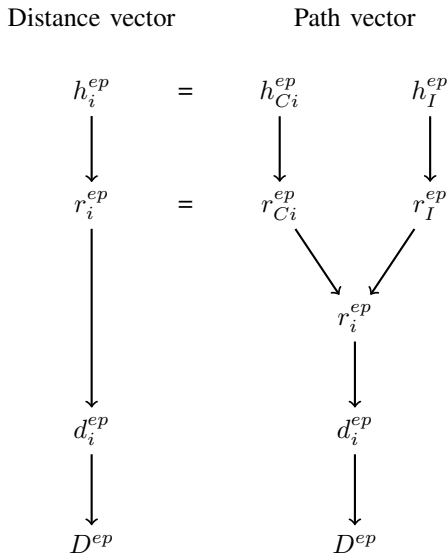
\begin{figure}[tbp]
\begin{center}
\begin{tikzpicture}
\def\typex{-2}
\def\distx{-1}
\def\pathxl{1.5}
\def\pathxc{2.5}
\def\pathxr{3.5}
\def\hy{6}
\def\dyu{4.5}
\def\dyl{3}
\def\ty{1.5}
\def\Dy{0}

\node [] (top) at (0,1) {};
\node [] (bottom) at (0,7) {};


\node [] at (\distx,7) {Distance vector};
\node [] (hd) at (\distx,\hy) {$\rheight_{i}$};
\node [] (dd) at (\distx,\dyu) {$\rmetric{i}$};
\node [] (td) at (\distx,\ty) {$\tmetric{i}$};
\node [] (Dd) at (\distx,\Dy) {$\smetric$};

\node [] at (\pathxc,7) {Path vector};
\node [] (hcp) at (\pathxl,\hy) {$\rheightC{i}$};
\node [] (hip) at (\pathxr,\hy) {$\rheightI$};
\node [] (dcp) at (\pathxl,\dyu) {$\rmetricC{i}$};
\node [] (dip) at (\pathxr,\dyu) {$\rmetricI$};
\node [] (dp) at (\pathxc,\dyl) {$\rmetric{i}$};
\node [] (tp) at (\pathxc,\ty) {$\tmetric{i}$};
\node [] (Dp) at (\pathxc,\Dy) {$\smetric$};

\path [->,draw,thick] (hd) -- (dd);
\path [->,draw,thick] (dd) -- (td);
\path [->,draw,thick] (td) -- (Dd);

\path [->,draw,thick] (hcp) -- (dcp);
\path [->,draw,thick] (hip) -- (dip);
\path [->,draw,thick] (dcp) -- (dp);
\path [->,draw,thick] (dip) -- (dp);
\path [->,draw,thick] (dp) -- (tp);
\path [->,draw,thick] (tp) -- (Dp);

\node [] at (0.25, \hy) {=};
\node [] at (0.25, \dyu) {=};
\end{tikzpicture}
\caption{The construction of the dissimilarity functions in the convergence proofs.}
\label{fig:metric_structure}
\end{center}
\vspace{-1em}
\end{figure}

\subsection{Constructing the dissimilarity functions}

Consider an arbitrary configuration consisting of an epoch~$e$ and set of participants~$p$ such that the participation topology~$\A^{ep}$ is free.  Note that once again for notational convenience, whenever we refer to a concept that depends an adjacency matrix we will use $ep$ as a shorthand for $\A^{ep}$. For example we will write $\aFun^{ep}$ instead of $\aFun_{\A^{ep}}$ and $\lconsistent$ instead of $\consistent{\A^{ep}}$. 

\vspace{1em}

\subsubsection{Inconsistent height of path-weights} With the above discussion in mind, the \emph{inconsistent height} of a path-weight $\rheightI : \carrier \rightarrow \mathbb{N}$ is defined as:
\begin{equation*}
\rheightI(x) \triangleq \begin{cases}
	0 							& \text{if $x \in \lconsistent$} \\
	1 + (n - \size{\pathf(x)}) 	& \text{otherwise} 
\end{cases}
\end{equation*}
where $n$ is the number of routers in the network. All consistent path-weights have the minimum height 0, and the maximum height is $n+1$ as the $\pathf$ function only returns simple paths. To explicitly highlight the parallels with the previous section, this maximum height will be called $\rheightIMax$. Therefore for all path-weights $x$ then the following relationships hold:
\begin{equation*}
0 \leq \rheightI(x) \leq n+1 = \rheightIMax 
\end{equation*}

\subsubsection{Inconsistent dissimilarity function} The inconsistent height can be used to define the inconsistent dissimilarity function ${\rmetricI: \carrier \times \carrier \rightarrow \mathbb{N}}$ as follows:
\begin{equation*}
\rmetricI(x,y) \triangleq \max(\rheightI(x),\rheightI(y))
\end{equation*}
This will always be used to compare pairs of path-weights in which at least one is inconsistent. Therefore the inconsistent dissimilarity between two path-weights is proportional to the length of the shortest inconsistent path. Note this definition does not contain a check for equality as in practice it will never be used to compare two equal path-weights.

\vspace{1em}

\subsubsection{Consistent dissimilarity function}

As discussed, the set of path-weights consistent with the current network topology is finite and consistency is closed over the routing algebra operations. Therefore $(\lconsistent, \oplus, \extension, \trivial, \invalid)$ forms a finite routing algebra and consequently the previous metric $\rmetric{i}$ from Section~\ref{app:distance_vector_proof} may be used to compare consistent path-weights. This will be renamed to $\rmetricC{i}: \consistent{\A^{ep}} \times \consistent{\A^{ep}} \rightarrow \mathbb{N}$. The maximum such consistent dissimilarity $\rheightMax$ will also be renamed to $\rheightCMax$.

\vspace{1em}

\subsubsection{Dissimilarity between path-weights}

Given both the consistent and inconsistent dissimilarity functions, then the overall dissimilarity function over path-weights $\rmetric{i} : \carrier \times \carrier \rightarrow \mathbb{N}$ is defined as: 
\begin{equation*}
\rmetric{i}(x,y) = \begin{cases}
	0           					& \text{if $x = y$} \\
	\rmetricC{i}(x,y)				& \text{if $x \neq y \wedge \{x,y\} \subseteq \lconsistent$} \\
	1 + \rheightCMax + \rmetricI(x,y)	& \text{if $x \neq y \wedge \{x,y\} \nsubseteq \lconsistent$}
    \end{cases} \\
\end{equation*}
If the two path-weights are equal then the dissimilarity between them is zero. If the two path-weights are not equal but both consistent then $\rmetricC{i}$ is used to compute the dissimilarity between them. Otherwise at least one is inconsistent and $\rmetricI$ is used instead. The quantity $1 + \rheightCMax$ is added to $\rmetricI(x,y)$ to later ensure that the required strictly contracting properties hold. In particular the fact that the dissimilarity between inconsistent path-weights is always greater than the dissimilarity between consistent path-weights guarantees that the dissimilarity is still strictly decreasing when the last inconsistent path-weight is flushed from the routing state.

\vspace{1em}

\subsubsection{Dissimilarity function over routing tables} 

As before, the dissimilarity function over routing tables can now be defined as:
\begin{equation*}
\tmetric{i}(\xt,\yt) = \max_{j \in \nodes} \: \rmetric{i}(\xt_j,\yt_j)
\end{equation*}
The dissimilarity between two routing tables is therefore proportional to the length of the shortest inconsistent path-weight, or, if no such path-weight exists, the desirability of the best consistent path-weight. This is the dissimilarity function $\tmetric{i}$ required by the AMCO conditions in Definition~\ref{def:amco}.

\vspace{1em}

\subsubsection{Dissimilarity function over states} Finally, as required by the AMCO conditions, the dissimilarity function over routing states, $\smetric$, is defined as follows:
\begin{equation*}
\smetric(\X,\Y) = \max_{i \in p} \: \tmetric{i}(\X_{i}, \Y_{i})
\end{equation*}

\subsection{Properties of the dissimilarity functions}

As in the proof for distance-vector protocols, we will now show that $\aFun^{ep}$ is an AMCO. It is obvious that $\tmetric{i}$ is bounded above $1 + \rheightCMax + \rheightIMax$ and that $\tmetric{i}(\xt,\yt) = 0$ iff $\xt = \yt$. As in Appendix~\ref{app:distance_vector_proof}, it is unnecessary but possible to show that $\rmetric{i}$/$\tmetric{i}$/$\smetric$ form an ultrametric space over $\carrier$/$\tcarrier$/$\mcarrier$.
 
 However, unlike Appendix~\ref{app:distance_vector_proof}, the iteration $\aFun^{ep}$ is not strictly contracting with respect to $\smetric$. Therefore it is is necessary to prove that $\aFun^{ep}$ is both strictly contracting on orbits and strictly contracting on fixed points separately. As before, we first prove a couple of useful auxiliary lemmas.

\begin{lemma} \label{lem:incon_force_!=}
If any routing table entry is inconsistent, then there must have an inconsistent routing table entry in the previous state with the same destination which has changed since the last state:
\begin{align*}
& \forall \X, i, j : \aFun^{ep}(\X)_{ij} \notin \consistent{\A} \\
& \qquad \Rightarrow \exists k : \X_{kj} \notin \consistent{\A} \wedge \X_{kj} \neq \aFun^{ep}(\X)_{kj}
\end{align*}
\end{lemma}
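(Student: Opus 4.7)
The plan is to identify the desired $k$ as the router whose entry in $\X$ for destination~$j$ is inconsistent and has minimum-length path. The key intuition is that if every inconsistent entry for destination~$j$ were preserved by $\aFun^{ep}$, then the path-extension axiom~\ref{ass:path-extension} would always supply a strictly shorter inconsistent entry in $\X$, which is impossible.

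First I would reduce to a non-trivial case: since $\aFun^{ep}(\X)_{ij}$ is inconsistent it is neither $\trivial$ nor $\I_{ij}$, so Equation~\ref{eq:sigma_expanded} forces $i \in p$ and $i \neq j$, giving $\aFun^{ep}(\X)_{ij} = \bigoplus_l \A^{ep}_{il}(\X_{lj})$. Selectivity~\ref{ass:sel} supplies some $l$ with $\aFun^{ep}(\X)_{ij} = \A^{ep}_{il}(\X_{lj})$, and the contrapositive of Lemma~\ref{lem:extension-consistent-closure} then yields that $\X_{lj}$ is inconsistent. Hence the set $S = \{k \in \nodes : \X_{kj} \notin \lconsistent\}$ is non-empty, and I would take $k^* \in S$ to be any element minimising $|\pathf(\X_{k^*j})|$.

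Next I would show by contradiction that $\X_{k^*j} \neq \aFun^{ep}(\X)_{k^*j}$. Assuming equality, $\aFun^{ep}(\X)_{k^*j}$ is itself inconsistent, and in particular is not $\invalid$ (since~\ref{ass:path-invalid} makes $\invalid$ consistent). Repeating the first step for $k^*$ then produces some $l^* \in S$ with $\aFun^{ep}(\X)_{k^*j} = \A^{ep}_{k^*l^*}(\X_{l^*j})$. Applying axiom~\ref{ass:path-extension} in the non-$\invalid$ branch gives $\pathf(\aFun^{ep}(\X)_{k^*j}) = (k^*, l^*) \cons \pathf(\X_{l^*j})$, and substituting $\aFun^{ep}(\X)_{k^*j} = \X_{k^*j}$ yields $|\pathf(\X_{k^*j})| = 1 + |\pathf(\X_{l^*j})|$, contradicting the minimality of $|\pathf(\X_{k^*j})|$ in $S$.

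The main obstacle I anticipate is recognising that the $l$ produced directly from selectivity in the very first step need not itself satisfy the conclusion, since its own entry might happen to be preserved by the iteration. The correct witness must be chosen more carefully, and the minimum-path-length selection of $k^*$ is precisely what lets the extension axiom bottom out in a contradiction rather than permitting an infinite regress.
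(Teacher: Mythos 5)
Your proof is correct and uses essentially the same argument as the paper: an inconsistent entry of $\aFun^{ep}(\X)$ is, by selectivity and the contrapositive of Lemma~\ref{lem:extension-consistent-closure}, an extension of an inconsistent entry of $\X$ whose path is strictly shorter, and this descent cannot continue forever. The only difference is presentational --- you phrase the termination as a minimal-path-length extremal argument over the set of inconsistent entries, whereas the paper phrases it as a recursion with a strictly decreasing path-length measure; these are interchangeable.
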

\begin{proof}
As the path-weight $\sigX_{ij}$ is inconsistent, by Lemma~\ref{lem:F-consistent-closure} it must be an extension of some inconsistent path-weight in $\X$. Therefore there exists a router $l$ such that $\sigX_{ij} = \A^{ep}_{il}(\X_{lj})$ and $\X_{lj}$ is inconsistent.

If $\X_{lj} \neq \sigX_{lj}$ then $l$ is the required router. Otherwise if $\X_{lj} = \sigX_{lj}$ then $\sigX_{lj}$ is inconsistent. Therefore the entire argument can be repeated with $\sigX_{lj}$. However as $\sigX_{ij} = \A^{ep}_{ij}(\X_{lj}) = \A^{ep}_{ij}(\sigX_{lj})$ the path of $\sigX_{lj}$ must be strictly shorter than the path of $\sigX_{ij}$. The length of the path cannot decrease indefinitely and therefore this argument must eventually terminate.  
\end{proof}

The next lemma corresponds to Lemma~\ref{lem:strictly-contracting-sub} in the proof of convergence for distance-vector protocols.

\begin{lemma}
\label{lem:strictly-contracting-on-orbits-sub}
For each destination, the maximum dissimilarity between all entries of three consecutive routing states must strictly decrease:
\begin{align*}
& \forall \X, i, j, v > 0: 
(\forall k : \rmetric{k}(\X_{kj},\aFun^{ep}(\X)_{kj}) \leq v) \\
& \qquad \Rightarrow \rmetric{i}(\aFun^{ep}(\X)_{ij},(\aFun^{ep})^2(\X)_{ij}) < v
\end{align*}
\end{lemma}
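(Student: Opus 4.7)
The plan is to follow the structure of the proof of Lemma~\ref{lem:strictly-contracting-sub} from Appendix~\ref{app:distance_vector_proof}, but carefully split on whether the two path-weights involved lie in the consistent or the inconsistent regime of $\rmetric{i}$. Write $a \triangleq \aFun^{ep}(\X)_{ij}$ and $b \triangleq (\aFun^{ep})^2(\X)_{ij}$. The case $a = b$ is immediate since $\rmetric{i}(a,b) = 0 < v$, so we assume $a \neq b$ and proceed by two sub-cases.

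First, assume every $\X_{kj}$ is consistent. Then by Lemma~\ref{lem:F-consistent-closure} every $\aFun^{ep}(\X)_{kj}$ is also consistent and, in particular, so are $a$ and $b$. All relevant dissimilarities $\rmetric{k}(\X_{kj},\aFun^{ep}(\X)_{kj})$ therefore agree with their consistent versions $\rmetricC{k}$. Since the restriction of the algebra to $\lconsistent$ is itself a finite routing algebra whose topology inherits freeness from $\A^{ep}$, the verbatim argument of Lemma~\ref{lem:strictly-contracting-sub} applied inside $\lconsistent$ yields $\rmetricC{i}(a,b) < v$, and hence $\rmetric{i}(a,b) < v$.

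Second, assume some $\X_{kj}$ (equivalently some $\aFun^{ep}(\X)_{kj}$) is inconsistent, so at least one of $a,b$ may be inconsistent and $\rmetric{i}(a,b) = 1 + \rheightCMax + \rmetricI(a,b)$. The key tool is a path-length-tracking strengthening of Lemma~\ref{lem:incon_force_!=}: its inductive proof actually shows that whenever $\aFun^{ep}(\Y)_{lj}$ is inconsistent, one can find some $m$ with $\Y_{mj}$ inconsistent, $\Y_{mj} \neq \aFun^{ep}(\Y)_{mj}$, and $|\pathf(\Y_{mj})| < |\pathf(\aFun^{ep}(\Y)_{lj})|$, so that $\rheightI(\aFun^{ep}(\Y)_{lj}) < \rheightI(\Y_{mj})$. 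Chaining this observation across two iterations---once to travel from $(\aFun^{ep})^2(\X)$ back to $\aFun^{ep}(\X)$, and once from $\aFun^{ep}(\X)$ back to $\X$---whenever $a$ (respectively $b$) is inconsistent, yields witnesses $m$ and $m'$ with $\X_{mj}, \X_{m'j}$ inconsistent, both differing from their images under $\aFun^{ep}$, and satisfying $\rheightI(a) < \rheightI(\X_{m'j})$ and $\rheightI(b) < \rheightI(\X_{mj})$. Treating any of $a,b$ that happens to be consistent by setting its $\rheightI$ to $0$, we obtain
\begin{equation*}
\rmetricI(a,b) \;<\; \max\bigl(\rheightI(\X_{mj}),\, \rheightI(\X_{m'j})\bigr).
\end{equation*}
For each such witness, the hypothesis gives $1 + \rheightCMax + \rheightI(\X_{kj}) \leq \rmetric{k}(\X_{kj},\aFun^{ep}(\X)_{kj}) \leq v$, so $\rheightI(\X_{kj}) \leq v - 1 - \rheightCMax$, and substituting yields $\rmetric{i}(a,b) < v$ as required.

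The main obstacle is the inconsistent sub-case. It requires both extracting a quantitative path-length statement from the inductive proof of Lemma~\ref{lem:incon_force_!=} (which as stated gives only the existence of a changed inconsistent predecessor, not the shorter-path bound), and then chaining it across two consecutive iterations while cleanly handling the four possible consistency patterns of $(a,b)$ without a case explosion. In contrast, the consistent sub-case is essentially a corollary of the Appendix~\ref{app:distance_vector_proof} argument and requires no new ideas.
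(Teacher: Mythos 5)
Your overall architecture tracks the paper's: a trivial equal case, a reduction of the fully consistent case to Lemma~\ref{lem:strictly-contracting-sub} applied to the finite routing algebra on $\lconsistent$, and a path-length argument when inconsistency is present. Your quantitative strengthening of Lemma~\ref{lem:incon_force_!=} is sound --- the inductive chain in its proof does produce a changed inconsistent predecessor whose path is strictly shorter, hence whose $\rheightI$ is strictly larger --- and chaining it over two iterations correctly disposes of every configuration in which at least one of $a = \aFun^{ep}(\X)_{ij}$ and $b = (\aFun^{ep})^2(\X)_{ij}$ is inconsistent. That part is, if anything, tidier than the paper's Case~3, which reaches for the globally shortest inconsistent path $\X_{kl}$ with $l$ possibly different from $j$, whereas your witnesses stay in column $j$ where the lemma's hypothesis actually lives.

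The gap is in your second sub-case when \emph{both} $a$ and $b$ are consistent. From ``some $\X_{kj}$ is inconsistent'' you infer $\rmetric{i}(a,b) = 1 + \rheightCMax + \rmetricI(a,b)$, but that clause of the definition applies only when $\{a,b\} \nsubseteq \lconsistent$. Inconsistency is not preserved forwards by $\aFun^{ep}$ --- an inconsistent entry can be flushed to $\invalid$ by the loop check or simply not selected by $\bigoplus$ --- so your parenthetical ``equivalently some $\aFun^{ep}(\X)_{kj}$ is inconsistent'' is false in one direction, and it is entirely possible that column $j$ of $\X$ is inconsistent while $a \neq b$ are both consistent. In that situation $\rmetric{i}(a,b) = \rmetricC{i}(a,b)$, your chaining mechanism is never triggered and produces no witnesses $m, m'$, and the displayed bound on $\rmetricI(a,b)$ is about the wrong quantity. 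This is exactly the paper's Case~2.2, which needs a different argument: locate an inconsistent entry of $\X$ that \emph{changes} under $\aFun^{ep}$ (if the inconsistent $\X_{kj}$ is unchanged, then $\aFun^{ep}(\X)_{kj}$ is inconsistent and Lemma~\ref{lem:incon_force_!=} supplies a changed one in the same column), and then use the fact that any inconsistent dissimilarity strictly dominates every consistent one, so that $v \geq 1 + \rheightCMax + \rmetricI(\X_{kj},\aFun^{ep}(\X)_{kj}) > 1 + \rheightCMax \geq \rmetricC{i}(a,b)$. The omission is fixable along these lines, but as written the proposal does not cover it --- and it is precisely the sort of case the authors report having overlooked themselves until the Agda formalisation caught it.
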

\begin{proof}

Assume
\begin{equation}
\label{def:r<=v}
\forall k : \rmetric{k}(\X_{kj},\aFun^{ep}(\X)_{kj}) \leq v
\end{equation}

\begin{case}{1}{$\sigX_{ij} = \sigsqX_{ij}$}
Then the inequality is immediate as:
\begin{align*}
\rmetric{i}(\sigX_{ij},\sigsqX_{ij}) 
&= 0 & \text {(by Case 1 \& def $\rmetric{i}$)}  \\
&< v & \text{(by lemma ass.)}
\end{align*}
\end{case}

\begin{case}{2}{$\sigX_{ij} \neq \sigsqX_{ij}$ and $\sigX_{ij}$ and $\sigsqX_{ij}$ are both consistent.}
\begin{case}{2.1}{$\X$ is consistent}
If $\X$ is consistent then $\aFun^{ep}(\X)$ must also be consistent and so all path-weights involved are consistent. As the topology is cycle-free then the required inequality is therefore immediate from Lemma~\ref{lem:strictly-contracting-sub} in Appendix~\ref{app:distance_vector_proof}.
\end{case}

\vspace{1em}

\begin{case}{2.2}{$\X$ is inconsistent}
If $\X$ is inconsistent then there must exist routers $k$ and $l$ such that 
\begin{align}
\label{eq:Xkl-ineq}
\X_{kl} \notin \lconsistent \\
\label{eq:Xkl-inconsistent}
\X_{kl} \neq \sigX_{kl}
\end{align}
To see why, consider whether $\sigX$ is consistent. If $\sigX$ is consistent then any inconsistent entry in $\X$ will suffice. If $\sigX$ is inconsistent, then there exists an inconsistent entry $\sigX_{ml}$ in which case the required $k$ may be obtained from Lemma~\ref{lem:incon_force_!=}. 

The required inequality then follows as:
\begin{align*}
& \rmetric{i}(\sigX_{ij},\sigsqX_{ij}) \\
&= \rmetricC{i}(\sigX_{ij},\sigsqX_{ij}) & \text {(by Case 2 \& def $\rmetric{i}$)}  \\
&< \rheightCMax + \rmetricI{k}(\X_{kl},\sigX_{kl}) & \text{(by $\rmetricC{k} \leq \rheightCMax$)} \\
&= \rmetric{k}(\X_{kl},\sigX_{kl}) & \text{(by def $\rmetric{i}$, (\ref{eq:Xkl-ineq}) \& (\ref{eq:Xkl-inconsistent}))} \\
&\leq v & \text{(by (\ref{def:r<=v}))}
\end{align*}
\end{case}

\vspace{1em}
\end{case}

\begin{case}{3}{$\sigX_{ij} \neq \sigsqX_{ij}$ and $\sigX_{ij}$ or $\sigsqX_{ij}$ is inconsistent.}
As either $\sigX$ or $\sigsqX$ is inconsistent then by Lemma~\ref{lem:extension-consistent-closure} we have that $\X$ must be inconsistent as well. Let $\X_{kl}$ be the inconsistent path-weight with the shortest path in $\X$. As all inconsistent path-weights in $\sigX$ and $\sigsqX$ are an extension of inconsistent path-weights in $\X$, the path of $\X_{kl}$ must be shorter than the paths of all inconsistent path-weights in $\sigX$ and $\sigsqX$ and so
\begin{align}
\label{eq:sXssX<X}
\max(\rheightI(\sigX_{ij}), \rheightI(\sigsqX_{ij})) < \rheightI(\X_{kl}) \\
\label{eq:32neq}
\X_{kl} \neq \sigX_{kl} 
\end{align}
The inequality then follows as:
\begin{align*}
& \rmetric{i}(\sigX_{ij},\sigsqX_{ij}) \\
&= \rheightCMax + \rmetricI(\sigX_{ij},\sigsqX_{ij}) \:\: \text {(by Case 3 \& def $\rmetric{i}$)} \\
&< \rheightCMax + 1 + \rheightI(\X_{kl}) \qquad\qquad\qquad\qquad\qquad\qquad\: \text {(by (\ref{eq:sXssX<X}))} \\
&\leq \rheightCMax + 1 + \max(\rheightI(\X_{kl}),\rheightI(\sigX_{kl})) \quad \text {(by def $\max$)} \\
&= \rmetric{k}(\X_{kl},\sigX_{kl}) \qquad\qquad\quad\:\: \text {(by def $\rmetric{k}$, (\ref{eq:Xkl-ineq}) \& (\ref{eq:32neq}))} \\
&\leq v \qquad\qquad\qquad\qquad\qquad\qquad\qquad\qquad  \qquad\quad\:\:\: \text {(by (\ref{def:r<=v}))}
\end{align*}
\end{case}
\noindent Hence the required inequality holds in all cases.
\end{proof}

\noindent The required strictly contracting on orbits property can now be proved using the above lemma.

\begin{lemma}
\label{lem:paths_str_contr_on_orbits}
$\aFun^{ep}$ is strictly contracting on orbits over $\smetric$.
\end{lemma}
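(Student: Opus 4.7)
The plan is to mirror the structure of the corresponding proof for distance-vector protocols (Lemma~\ref{lem:strictly_contracting}) by reducing the statement about $\smetric$ to the per-destination inequality already established in Lemma~\ref{lem:strictly-contracting-on-orbits-sub}. Concretely, I would take an arbitrary accordant state $\X \in \accSet$ with $\aFun^{ep}(\X) \neq \X$, set $v = \smetric(\X, \aFun^{ep}(\X))$, and aim to show $\smetric(\aFun^{ep}(\X), (\aFun^{ep})^2(\X)) < v$. Since $\smetric$ is defined as a maximum of $\rmetric{i}$ over all participating routers $i$ and destinations $j$, it suffices to prove that for every $i \in p$ and every $j \in V$,
\begin{equation*}
\rmetric{i}\bigl(\aFun^{ep}(\X)_{ij},\, (\aFun^{ep})^2(\X)_{ij}\bigr) < v.
\end{equation*}

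This is exactly the conclusion of Lemma~\ref{lem:strictly-contracting-on-orbits-sub}, so I only need to verify its two hypotheses. First, $v > 0$: because $\aFun^{ep}(\X) \neq \X$ there exist $i,j$ with $\aFun^{ep}(\X)_{ij} \neq \X_{ij}$, and one then argues (as in the distance-vector proof) that this disagreement occurs at a participating router. Either $i \in p$ already, in which case $\tmetric{i}(\X_i, \aFun^{ep}(\X)_i) > 0$ and hence $\smetric(\X, \aFun^{ep}(\X)) > 0$; or $i \notin p$, in which case accordancy of $\X$ gives $\X_{ij} = \I_{ij}$ and property D5 (already verified in Appendix~\ref{app:distance_vector_proof}) gives $\aFun^{ep}(\X)_{ij} = \I_{ij}$ too, contradicting $\X_{ij} \neq \aFun^{ep}(\X)_{ij}$.

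Second, for every router $k$ I must show $\rmetric{k}(\X_{kj}, \aFun^{ep}(\X)_{kj}) \leq v$. If $k \in p$ this is immediate from the definition of $\smetric$ as a maximum over participating routers. If $k \notin p$, then accordancy of $\X$ gives $\X_{kj} = \I_{kj}$ and, since $\aFun^{ep}$ preserves accordancy, $\aFun^{ep}(\X)_{kj} = \I_{kj}$ as well, so $\rmetric{k}(\X_{kj}, \aFun^{ep}(\X)_{kj}) = 0 \leq v$.

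With both hypotheses in hand, Lemma~\ref{lem:strictly-contracting-on-orbits-sub} yields the per-entry strict inequality, and taking the maximum over participating $i$ and all $j$ delivers $\smetric(\aFun^{ep}(\X), (\aFun^{ep})^2(\X)) < v$, which is the desired contracting-on-orbits property. The main intellectual work has already been done inside Lemma~\ref{lem:strictly-contracting-on-orbits-sub} (in particular its Case~3, which tracks the shrinking inconsistent-path budget); the present lemma is essentially a bookkeeping step that lifts that per-destination argument to the global dissimilarity while carefully handling non-participating routers via accordancy.
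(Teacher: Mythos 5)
Your proposal is correct and follows essentially the same route as the paper's own proof: reduce to the per-entry inequality, invoke Lemma~\ref{lem:strictly-contracting-on-orbits-sub} with $v = \smetric(\X,\aFun^{ep}(\X))$, and discharge its two hypotheses using accordancy of $\X$ and $\aFun^{ep}(\X)$ to handle non-participating routers. Your handling of the non-participating case via $\X_{kj} = \I_{kj}$ is if anything slightly more careful than the paper's, which writes $\invalid$ where $\I_{kj}$ is meant.
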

\begin{proof}
Consider an arbitrary state $\X \in \accSet$. Then it must be shown that if $\X \neq \aFun^{ep}(\X)$ then:
\begin{equation*}
D^p(\aFun^{ep}(\X),(\aFun^{ep})^2(\X)) < D^p(\X, \aFun^{ep}(\X))
\end{equation*}
As
\begin{align*}
\smetric(\aFun^{ep}(\X),(\aFun^{ep})^2(\X)) = \qquad \qquad\qquad \qquad\\  \max_{i \in p, j \in \nodes} \: \rmetric{i}(\aFun^{ep}(\X)_{ij},(\aFun^{ep})^2(\X)_{ij})
\end{align*}
it suffices to show for all routers $i \in p$ and $j \in \nodes$ that
\begin{equation*}
\rmetric{i}(\aFun^{ep}(\X)_{ij},(\aFun^{ep})^2(\X)_{ij}) < \smetric(\X,\aFun^{ep}(\X))
\end{equation*}
This can be proved by applying Lemma~\ref{lem:strictly-contracting-on-orbits-sub} where $v = \smetric(\X,\aFun^{ep}(\X))$. However to apply the lemma it must be verified that $0 < D^p(\X,\aFun^{ep}(\X))$ and that for all routers $k$ then $\rmetric{k}(\X_{kj},\aFun^{ep}(\X)_{kj}) \leq \smetric(\X,\aFun^{ep}(\X))$.

The former holds as $\X$ is accordant and $\X \neq \aFun^{ep}(\X)$ and so there must exist entries for participating routers in $\X$ and $\aFun^{ep}(\X)$ whose dissimilarity is non-zero. To show that the latter holds consider whether or not router~$k$ is participating. If $k \in p$ then the required inequality holds simply from the definition of $\smetric$. If $k \notin p$ then, as $\X$ and $\aFun^{ep}(\X)$ are accordant, $\X_{kj} = \invalid = \aFun^{ep}(\X)_{kj}$ and therefore the inequality holds trivially as $\rmetric{k}(\X_{kj},\aFun^{ep}(\X)_{kj}) = 0$.
\end{proof}

\begin{lemma} \label{lem:paths_str_contr_on_fixed_point}
$\aFun^{ep}$ is strictly contracting on fixed points over~$\smetric$.
\end{lemma}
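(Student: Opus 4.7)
The strategy parallels Lemma~\ref{lem:paths_str_contr_on_orbits}, with $\X^*$ playing the role of $\X$ and $\aFun^{ep}(\X)$ playing the role of $(\aFun^{ep})^2(\X)$. The key preliminary observation is that any fixed point $\X^*$ of $\aFun^{ep}$ must itself lie in $\mconsistent{\A^{ep}}$: if some entry $\X^*_{ij}$ were inconsistent, then Lemma~\ref{lem:incon_force_!=} applied with $\X = \X^*$ would yield a $k$ with $\X^*_{kj} \neq \aFun^{ep}(\X^*)_{kj} = \X^*_{kj}$, contradiction.

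The first step is to establish a sub-lemma analogous to Lemma~\ref{lem:strictly-contracting-on-orbits-sub}: for any fixed point $\X^* \in \mconsistent{\A^{ep}}$, any state $\X$, any $i \in p$, $j \in \nodes$ and $v > 0$, if $\rmetric{k}(\X^*_{kj}, \X_{kj}) \leq v$ for every $k$, then $\rmetric{i}(\X^*_{ij}, \aFun^{ep}(\X)_{ij}) < v$. The main statement then follows by taking $v = \smetric(\X^*, \X)$, which is positive because $\X \neq_p \X^*$ with both accordant, and verifying the hypothesis at non-participating $k$ where both states agree with $\I_{kj}$.

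The sub-lemma proceeds by case analysis. If $\X^*_{ij} = \aFun^{ep}(\X)_{ij}$ the dissimilarity is $0 < v$. Otherwise, either both path-weights are consistent, or $\aFun^{ep}(\X)_{ij}$ alone is inconsistent (recall $\X^*$ is wholly consistent). In the doubly-consistent branch, assume WLOG a strict preference under $\lessWeight$ between $\X^*_{ij}$ and $\aFun^{ep}(\X)_{ij}$; selectivity of $\oplus$ together with Equation~\ref{eq:sigma_expanded} produces a router $k$ such that the preferred path-weight equals $\A^{ep}_{ik}$ applied to the $kj$-entry of the corresponding state. The opposite side's $kj$-entry must differ, or else the preferred path-weight would also appear in the opposite $\bigoplus_k$ and collapse the strict preference. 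One then chains H1 (extension strictly decreases height) with H2 (strict preference strictly decreases height) to get
\[
\rmetricC{i}(\X^*_{ij}, \aFun^{ep}(\X)_{ij}) \;<\; \rmetricC{k}(\X^*_{kj}, \X_{kj}) \;\leq\; \rmetric{k}(\X^*_{kj}, \X_{kj}) \;\leq\; v.
\]
A subtle mixed sub-case is that the witness $\X_{kj}$ recovered in this branch might itself be inconsistent even though $\aFun^{ep}(\X)_{ij}$ is consistent; but the $1 + \rheightCMax$ offset in the definition of $\rmetric{i}$ ensures inconsistent dissimilarities strictly exceed all consistent ones, so the bound $\rmetric{i}(\X^*_{ij}, \aFun^{ep}(\X)_{ij}) \leq 1 + \rheightCMax < \rmetric{k}(\X^*_{kj}, \X_{kj})$ holds automatically.

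In the remaining branch, $\aFun^{ep}(\X)_{ij}$ is inconsistent, so selectivity picks $l$ with $\aFun^{ep}(\X)_{ij} = \A^{ep}_{il}(\X_{lj})$; the contrapositive of Lemma~\ref{lem:extension-consistent-closure} forces $\X_{lj}$ to also be inconsistent, and \ref{ass:path-extension} then gives $\pathf(\aFun^{ep}(\X)_{ij}) = (i,l) \cons \pathf(\X_{lj})$, so $\rheightI(\aFun^{ep}(\X)_{ij}) = \rheightI(\X_{lj}) - 1$ and hence $\rmetric{i}(\X^*_{ij}, \aFun^{ep}(\X)_{ij}) = \rmetric{l}(\X^*_{lj}, \X_{lj}) - 1 < v$. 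In every branch a final book-keeping step confirms the witnessing router is a participant, since otherwise $\A^{ep}_{i\cdot} = \invalidedge_{i\cdot}$ would force one of the compared path-weights to be $\invalid$, which each strict-preference argument has already ruled out. The main obstacle is purely organisational: the fixed-point setting breaks the symmetry between the two arguments of the dissimilarity (only one is guaranteed consistent), forcing a larger case split than in Lemma~\ref{lem:paths_str_contr_on_orbits}, but the construction of $\rmetric{i}$ was tuned precisely so the mixed consistency sub-cases resolve without further work.
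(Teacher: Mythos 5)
Your proposal is correct and follows essentially the same route as the paper, which itself only sketches this proof as ``the same structure as Lemmas~\ref{lem:strictly-contracting-on-orbits-sub}~\&~\ref{lem:paths_str_contr_on_orbits}, simplified because the fixed point is consistent''; you fill in exactly those details, including the genuine mixed-consistency sub-case that the $1 + \rheightCMax$ offset was designed to absorb. The only cosmetic difference is that you derive consistency of $\X^*$ from Lemma~\ref{lem:incon_force_!=} rather than from the paper's ``shortest inconsistent path must lengthen'' observation, and both arguments are immediate.
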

\begin{proof}
The proof has the same structure as Lemmas~\ref{lem:strictly-contracting-on-orbits-sub}~\&~\ref{lem:paths_str_contr_on_orbits}. The only major difference is that some of the cases in the corresponding version of Lemma~\ref{lem:strictly-contracting-on-orbits-sub} are redundant as the fixed point $\X^*$ is guaranteed to be consistent. This must be the case, as if $\X^*$ was inconsistent then applying $\aFun^{ep}$ would increase the length of the shortest inconsistent path. Interested readers can find the remaining details in the Agda formalisation~\cite{agda-routing}.
\end{proof}

\noindent Finally all the pieces to show that the set of functions $\aFun^{ep}$ are an AMCO over $\freeConfigs$ have now been assembled. In particular:
\begin{enumerate}[label=(DU\arabic*),leftmargin=1.5cm]

\item $\forall \xt, \yt : \tmetric{i}(\xt,\yt) = 0 \Leftrightarrow \xt = \yt$

Immediate from the definition of $\tmetric{i}$ and $\rmetric{i}$.

\item $\tmetric{i}$ is bounded 

The maximum dissimilarity $\rmetric{i}$, and therefore $\tmetric{i}$, can return is $1 + \rheightCMax + \rheightIMax$.

\item $\aFun^{ep}$ is strictly contracting on orbits over $\smetric$

Proved in Lemma~\ref{lem:paths_str_contr_on_orbits}.

\item $F^{ep}$ is strictly contracting on fixed points over $\smetric$

Proved in Lemma~\ref{lem:paths_str_contr_on_fixed_point}.

\item $\forall \X : \aFun^{ep}(\X) \in \accSet$.

Immediate by the definition of $\aFun^{ep}$.
\end{enumerate}
As $\aFun^{ep}$ is an AMCO over $\freeConfigs$ then by Theorem~\ref{thm:amco-convergent} we have that $\delta$ is convergent over $\genericPathAlgebra$ whenever the network topology is free.
\end{document}